\definecolor{rose}{HTML}{C0392B}
\newcommand{\one}{\mathbf 1}
\newcommand{\zero}{\mathbf 0}
\newcommand{\linktoproof}[1]{\begin{center} \hyperref[#1]{\texttt{[Link to Proof]}}\end{center}}
\newcommand{\linktostatement}[1]{\begin{center} \hyperref[#1]{\texttt{[Link to Statement]}}\end{center}}
\newcommand{\ev}[2]{\mathbb E_{#1}\left [ #2 \right ]}
\newcommand{\shrink}[0]{\vspace{-1em}}
\renewcommand{\Pr}{\mathbb P}
\newcommand{\xpar}[1]{\noindent \textbf{#1.}}
\newtheorem{mdefinition}{Definition}
\newcommand{\cC}{\mathcal{C}}
\newcommand{\cD}{\mathcal{D}}
\newcommand{\cE}{\mathcal{E}}
\newcommand{\cG}{\mathcal{G}}
\newcommand{\cK}{\mathcal{K}}
\newcommand{\cN}{\mathcal{N}}
\newcommand{\cR}{\mathcal{R}}
\newcommand{\cS}{\mathcal{S}}
\newcommand{\cU}{\mathcal{U}}
\newcommand{\Rbb}{\mathbb{R}}
\newcommand{\Ravgsamp}[1]{\hat R^{\mathsf{avg}}_{#1}}
\newcommand{\katz}{\mathsf {Katz}}
\newcommand{\rei}{\mathrm {REI}}
\newcommand{\reiavg}[1]{\rei^{\mathsf{avg}}_{#1}}
\newcommand{\potentialimpact}{\mathrm{PI}}
\newcommand{\Oeps}[1]{O_{\varepsilon} \left ( #1 \right )}
\newcommand{\Omegaeps}[1]{\Omega_{\varepsilon} \left ( #1 \right )}
\crefname{appendix}{Appendix}{Appendices} 
\begin{document}


\RUNAUTHOR{Papachristou, Rahimian, and Azadegan}

\RUNTITLE{Structural Measures of Resilience for Supply Chains}

\TITLE{Structural Measures of Resilience for Supply Chains}

\ARTICLEAUTHORS{%
\AUTHOR{Marios Papachristou}
\AFF{W.P. Carey School of Business, Arizona State University, \EMAIL{mpapachr@asu.edu}}
\AUTHOR{M. Amin Rahimian}
\AFF{University of Pittsburgh, \EMAIL{rahimian@pitt.edu}}
\AUTHOR{Arash Azadegan}
\AFF{Rutgers Business School, \EMAIL{aazadegan@business.rutgers.edu}}

} 

\ABSTRACT{%
Modern production systems are increasingly defined by dense networks of multi-tier sourcing dependencies, where localized upstream disruptions can cascade into system-wide collapses. While supply chain resilience has garnered significant managerial attention, we still lack theoretically-grounded, reliable, analytical metrics that can distinguish inherently resilient architectures from fragile ones. This paper addresses this gap by developing a structural resilience framework and a novel metric, defined as the maximum supplier failure rate that a network can sustain while maintaining an aggregate production level. Using node percolation theory and branching processes, we identify four critical structural determinants of resilience: the number of raw materials, the number of finished goods, sourcing requirements, and sourcing influence. Our analysis reveals two distinct regimes: ``top hat'' architectures, which are characterized by excessive raw materials and high centralization, making them inherently fragile; and ``rolling pin'' structures, which maintain controlled input/output widths and sparsity, allowing them to absorb non-trivial shocks. To operationalize these insights, we formulate resilience computation as a scalable linear program that approximates cascading failure sizes in large-scale networks with cycles, heterogeneous suppliers, and structural decoupling. Furthermore, we extend our framework to account for exogenous failure correlations, such as those arising from geographic or geopolitical factors that can undermine traditional supplier and input diversification strategies. We validate our theoretical results using multi-echelon supply chain data. These tools can inform network design, supplier diversification, and inventory planning to proactively reduce systemic risk.
}%

\KEYWORDS{supply chain resilience, systemic risk, production networks, cascading failures, structural complexity}

\maketitle

\vspace{-20pt}

\section{Introduction} \label{sec:introduction}

Modern production systems rely on dense networks of interdependent suppliers and inputs that collectively determine whether downstream firms can continue to operate. These networks span raw-material extraction, chemical and component manufacturing, and finished goods assembly; potentially linking thousands of firms and products across multiple tiers. Although these networks allow firms to specialize and scale efficiently, they also create systemic exposure, where small upstream failures can spread through many tiers and trigger disproportionate downstream production loss \citep{melnyk2014understanding}. The 2012 Evonik disruption \citep{evonik} illustrates this vulnerability concretely (\cref{fig:evonik}).


Additionally, when Hurricane Maria struck Puerto Rico in 2017, it devastated a hub of U.S. medical manufacturing. Baxter International, the largest supplier of IV saline bags to U.S. hospitals, had three major plants on the island -- one was on backup generators for weeks, and others faced intermittent power supply. This single geographic disaster led to a nationwide shortage of IV fluids used for basic hospital care. Hospitals across the mainland resorted to rationing IV bags and using substitute methods (oral rehydration, syringes) until supply slowly recovered. This case highlights geographic and supplier concentration risk in supply networks. A localized shock cascaded into a system-wide disruption because a critical input (saline bags) was sourced predominantly from one vulnerable region \citep{aguero_weathering_2024}.


\begin{figure}[t]
    \centering
    \includegraphics[width=0.55\linewidth]{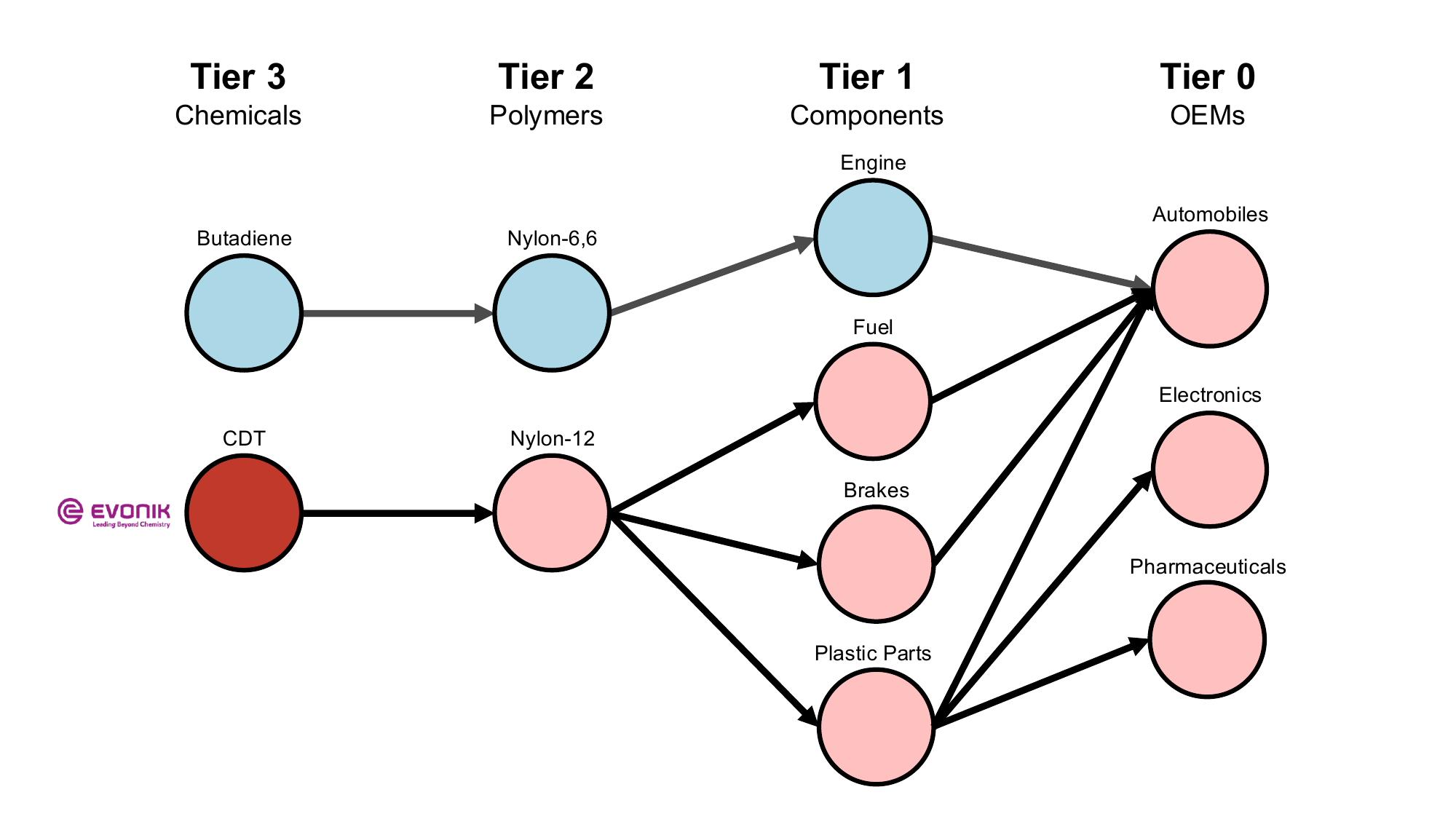}
    \caption{Production network corresponding to the Evonik incident \citep{evonik}. An explosion at Evonik's facility in Marl, Germany halted production of cyclododecatriene (CDT), the essential precursor to Nylon 12, a resin used globally in automotive fuel and brake lines. Because Evonik and Arkema together accounted for roughly half of global Nylon 12 capacity and CDT was effectively single-sourced, the incident created an immediate system-wide risk. Automakers and Tier-1 suppliers warned that component production could stop within weeks, and industry groups convened emergency meetings to evaluate inventories, substitution options, and potential production shutdowns. There are $K = 11$ products in total. The network has $r = 2$ raw materials (Butadiene, CDT), $c = 3$ finished goods (Automobiles, Electronics, Pharmaceuticals), a sourcing requirement $m = 4$, determined by the product with the maximum number of inputs (Automobiles, with 3 inputs), and a sourcing influence $\mu = 3$, determined by the product sourcing the maximum number of outputs (Nylon-12, supplying three tier-1 components; Plastic Parts also sources three OEMs in tier 0). The failure of Evonik occurs at the CDT node (red), spreading to all components depending on it (pink). Eight of the 11 products are affected; three remain unaffected (light blue). Our resilience measure requires at least 90.9\% of products to be operational; a condition not satisfied here even for infinitesimally small shocks, deeming the network fragile. This illustrates how factors such as sourcing requirements and sourcing influence affect structural resilience in a quantifiable manner.}
    \label{fig:evonik}
\end{figure}

Events such as Evonik and the 2017 U.S. IV shortage incidents underscore a broader challenge: understanding when a supply network is \textit{structurally} predisposed to cascading failures and how large a supplier shock it can absorb while preserving aggregate production~\citep{ergun2022structured}. 

Although resilience has received substantial managerial attention; see, for instance,  \citet{chen2023when,melnyk2014understanding,ivanov2020viability}, there remains a lack of theoretically-grounded and analytical metrics that characterize when production networks are inherently resilient or fragile. As the complexity of modern supply chains increases, analytically quantifying the rate at which they become fragile (or how resilient they are) and ranking the various supply chain architectures to derive insights is a complex and ever-challenging problem.

The existing contagion literature offers powerful tools for analyzing network resilience, but they often pose several limitations. Firstly, they do it from a static point \citep{kim2015supply,kleindorfer2005managing,craighead2007severity}, thus omitting collateral contagion effects, or assume single points of failure \citep{levi2016identifying,simchi2014superstorms} that preclude dynamic interactions and broader network effects. Secondly, many existing approaches rely on simulation to study specific disruption scenarios, offering limited systemic guidance on structural limits across network architectures \citep{nair2011supply,thadakamaila2004survivability,perera2017network,levi2016identifying} or give analytical results for simplistic network structures such as tree-like networks \citep{elliott2022supply}, preferential attachment networks \citep{albert_error_2000}, and have viewed the concept of resilience at the supplier level; see, for example, \citet{chen2023when,snyder2016or,kouvelis2013robust}---and not from the perspective of supply. 

This paper addresses these gaps by developing a structural, theoretically-grounded framework that quantifies \textit{how much supplier failure} a production network can tolerate before aggregate supply collapses. Our central contribution is a resilience metric that analytically captures the maximum supplier failure rate that a network can sustain while keeping almost all production operational. Our metric concerns the topology of the underlying network and provides a well-defined, practical diagnostic for supply chains. The resilience metric allows us to compare different networks to each other based on the rate at which the metric goes to zero (indicating fragility) or stays bounded away from it. Finally, the resilience metric allows us to account for correlations that are exogenous to the network structure, i.e, going beyond correlations induced by the supply network.

Building on this metric, we theoretically identify four structural determinants of resilience that govern whether networks absorb or amplify disruptions: the number of raw materials ($r$), the number of finished goods ($c$), the sourcing requirement ($m$), and the sourcing influence ($\mu$). These characteristics determine whether a network is fundamentally \emph{resilient} or \emph{fragile}. Networks with an excessive number of raw materials relative to their size resemble a ``top hat'', and are fragile because any upstream failure can cascade widely. By contrast, networks with controlled input and output widths and sparsity, i.e., networks that resemble ``rolling pins'', can absorb non-trivial shocks and are, thus, resilient (\cref{fig:qualitative_representation}).

To provide an operationalizable and more fine-grained result, we formulate resilience computation as a linear program that approximates the size of cascading failures in general networks, including those with cycles, heterogeneous suppliers, or structural decoupling. This tool enables rapid \textit{``what-if''} analysis; firms and managers can estimate the resilience of these networks, quantify their respective risk, and take appropriate safeguarding measures. 

This paper proceeds by first formalizing production networks and introducing a structural resilience metric, then characterizing resilient and fragile architectures, and finally extending the framework to more general settings with added correlations and heterogeneity. 

\subsection{Managerial Examples}

Managers often assess supply risk by focusing on how inputs are sourced at the firm level, even though system-wide behavior is shaped by network structures that are not visible locally; see \citet{choi2001a}.

In a common situation, a firm sources a critical component through multiple tier-one sourcing channels located in different regions. Each channel appears independent and reliable when evaluated individually. During a disruption, all channels fail simultaneously because they depend on the same specialized input in the upstream. From the firm’s perspective, there was no visible single failure point, yet production halted. This example illustrates how exposure can persist even when risk mitigation appears sensible at the firm level \citep{tang_perspectives_2006}. 

A similar pattern emerges as firms expand product portfolios. To spread demand risk and improve utilization, manufacturers add new products, each supported by its own sourcing configuration. Over time, products may become dependent on overlapping inputs, shared processes, or common upstream facilities. A disruption that once would have affected a narrow product line can then affect a much larger share of the output. This example aligns with evidence that upstream structural complexity is associated with a higher frequency of disruption \citep{bode_structural_2015}.

Hidden coupling can also arise across levels. A downstream assembler maintains redundancy and inventory buffers in the immediate sourcing layer. However, several of those sourcing channels may depend on the same upstream processing stage for a critical operation. When that processing stage experiences a short outage, the impact spreads across multiple inputs and downstream products, quickly overwhelming the safeguards that were designed using firm-level risk assessments. More broadly, previous work emphasizes that resilience depends not only on local buffers but also on how firms develop relational capabilities across their sourcing networks \citep{toyli_harri_lorentz_and_lauri_ojala_influence_2013}.

These situations illustrate how vulnerability can arise from the structure of production and sourcing networks rather than from isolated decisions or specific disruption events.



\section{Literature Review} \label{sec:related_work}

A growing body of work reframes supply and production systems away from collections of independent firms and toward networks whose structure shapes system behavior, particularly through sourcing relationships that define interfirm dependencies. Early research in the operations management (OM) and supply chain management (SCM) literature shows that coordination, performance, and risk emerge from patterns of interaction rather than centralized control or bilateral decisions \citep{choi2001a,pathak_complexity_2007}, a view later formalized through explicit theories of supply networks as interconnected architectures \citep{choi2009a}. As this perspective developed, scholars emphasized that risk in supply systems is often systemic, rooted in sourcing interdependencies rather than isolated operational choices or managerial interventions \citep{kleindorfer2005managing,sodhi_supplychain_2014}. Empirical evidence further links sourcing concentration, overlap between suppliers, and coupling across sourcing relationships to the variation in the exposure to disruption and the ability of the systems to continue to meet downstream demand \citep{craighead2007severity,blackhurst_empirically_2011,lee1997a}. Research adopting a complex network perspective adds that increasing interconnections among sourcing relationships can generate nonlinear execution challenges that are not readily observable at the dyadic level \citep{kim_structural_2011,hearnshaw2013complex}. Together, these studies suggest that many operational vulnerabilities arise from the sourcing structure itself, motivating analytical approaches that abstract from firm-specific details to isolate structural exposure \citep{kim2015supply,luo2019a}. 

Viewing supply systems through the lens of sourcing relationships shifts the attention from firm attributes to structural exposure embedded in shared suppliers, concentration, and limited substitutability. Within OM and SCM, multiple streams suggest that highly interconnected sourcing architectures may respond nonlinearly to shocks, with disruptions propagating through shared suppliers even when initial failures are limited \citep{craighead2007severity,kim2015supply}. Analytical and computational studies further indicate that sourcing architecture governs how disruptions are transmitted across interconnected entities, shaping system-level exposure rather than isolated losses in individual firms \citep{gopal_design_2016,zhao2018supply}. Empirical research suggests that positional asymmetries within sourcing networks and embeddedness in shared supplier relationships influence whether disruptions remain contained or escalate into broader supply loss, although such effects are typically documented within specific contexts or network realizations  \citep{carvalho2019production,ding_social_2023}. Analytical models reinforce the possibility that certain sourcing structures exhibit abrupt transitions from stability to widespread failure once the severity of the disruption crosses implicit thresholds, even when individual firms appear robust in isolation \citep{bimpikis2019supply,elliott2023supply,elliott2022supply}. Although recent work formalizes the existence of such thresholds, it stops short of offering general diagnostics applicable across alternative sourcing architectures  \citep{acemoglu2016networks,elliott2023supply,elliott2022supply}.

The implications of structural vulnerability become tangible once disruptions begin to move through sourcing dependencies over time. Empirical studies document that localized disruptions can spread across tiers through shared suppliers, capacity constraints, and temporal coupling, though the magnitude, timing, and persistence of the resulting supply loss vary widely between contexts and sourcing configurations \citep{blackhurst_empirically_2011,snyder2016or}. To capture these dynamics, subsequent work introduces the notion of ripple effects, describing how disruption impacts evolve as firms adjust their sourcing, production, and structural decoupling decisions, typically within specific network configurations or disruption scenarios \citep{ivanov_simulation-based_2017,ivanov_what_2025,dolgui2021a}. Analytical research further highlights that risk can propagate dynamically through supplier networks even in the absence of additional shocks, driven by endogenous adjustments within sourcing relationships \citep{kim2015supply,ivanov_simulation-based_2017}. Simulation-based studies explore how lead times, buffering policies, and substitution behavior shape propagation, but often rely on detailed parameterization that hides the inherent limits to resilience across sourcing structures \citep{nair2011supply,hosseini2020a}. Empirical evidence also indicates that system-level ability to preserve aggregate supply may continue to deteriorate even after individual firms recover, underscoring the importance of indirect and lagged effects transmitted through shared sourcing dependencies \citep{lee1997a,ivanov_simulation-based_2017}.

Faced with these dynamics, researchers have pursued a range of methodological strategies to evaluate resilience in sourcing systems. A prominent stream relies on scenario-based analysis and computational simulation to examine how particular sourcing configurations respond to disruptions and recover over time, yielding detailed insights into recovery dynamics and mitigation policies under assumed disruption realizations \citep{nair2011supply,ivanov_simulation-based_2017}. Other studies embed endogenous recovery actions and control decisions within optimization frameworks, allowing resilience to be assessed through restoration and reconfiguration strategies subject to time and capacity constraints, though typically for specific sourcing architectures and policy choices \citep{hosseini2020a,anokhin_mobilityasservice_2021}. In parallel, scholars have proposed resilience and robustness metrics that summarize exposure, service loss, or recovery performance, often aggregating outcomes observed under stress testing of sourcing structures rather than isolating limits inherent in the structure itself \citep{zobel2014a,simchi2014superstorms,zhao2019modelling}. While these approaches provide valuable guidance on recovery and mitigation, they often presuppose a sourcing structure, leaving open how much resilience is afforded by the structure itself prior to intervention \citep{aldrighetti2021a}.

First, our paper models the production network as a directed graph that undergoes a node percolation process, representing product failures. Each product has some suppliers, and if all of them fail, the product cannot be produced. Subsequently, a cascade is caused by such a failure. Our analysis builds on and extends \citet{elliott2022supply}. \citet{elliott2022supply} consider a \emph{hierarchical production network} that undergoes a \emph{link percolation process}. In their model, firms are operational if each of their inputs has a functioning link to at least one supplier. The reliability of the network is defined as the probability that the root product is successfully produced. They identify three structural drivers of reliability: \emph{(i)} the network depth and size, \emph{(ii)} the number of inputs that each product requires, and \emph{(iii)} the degree of multi-sourcing or the number of suppliers (which corresponds to the ability of firms to multisource their required inputs). We investigate the effect of similar structural factors---i.e., topology, number of inputs, and number of suppliers---but under a fundamentally expanded setup by: \emph{(i)} considering production shocks at the level of individual suppliers rather than links; \emph{(ii)} considering more general architectures than the hierarchical production networks of \citet{elliott2022supply}, and \emph{(iii)} studying a novel theory-informed resilience metric that guarantees that almost all products can be produced in the event of a shock. One of our main results shows that production networks are either {operationally resilient} or {fragile}. \citet{elliott2022supply} point out that if the magnitude of the systemic shock is greater than some critical value, the reliability (the probability that the root product is produced) goes to zero, corresponding to \emph{fragility}. When the magnitude of the shocks is below this critical value, the reliability attains a strictly positive value corresponding to \emph{resilience}. They observe that reliability increases with multisourcing (having many suppliers)  and decreases with interdependency (requiring a multitude of inputs). Unlike \citet{elliott2022supply}, we focus on determining the largest shock level a network can withstand while still ensuring that a significant fraction of products survive, especially as the network size grows. Our parameterization, based on size, multisourcing, and interdependency, yields insights consistent with theirs: resilience improves with multisourcing and deteriorates with increasing interdependencies. In addition, our model and resilience metric complement \citet{elliott2022supply} by enabling the analysis of arbitrary production networks, not just tree structures. We generalize two key premises of their work to broader settings. First, we show that having a large number of raw materials can cause fragility, extending their main insight beyond tree-like topologies. Second, we show that networks with few raw materials and finished goods, as well as those with sparsity, are operationally resilient, thus complementing their theory. In addition to their model, we also model structural decoupling and operationalize resilience through linear programming. Conceptually, this extension complements that of \citet{elliott2022supply}, as structural decoupling is related to their notion of the strength of the relationship between suppliers. In contrast, our results focus on characterizing the maximum tolerable shock probability (resilience) using the LP approximation of the cascade size. 

Second, our paper is closely related to \citet{chen2023when} and offers highly complementary perspectives that, together, provide a comprehensive toolkit for supply chain resilience. While \citet{chen2023when} contributes an essential optimization framework for allocating resources to maximize flow within a specific facility network, our work enriches this tactical approach by providing the foundational architectural theory that explains why certain networks are more responsive to such investments. By identifying network archetypes, our work helps managers understand the underlying geometric limits of their networks. These insights complement and inform the investment strategies proposed by \citet{chen2023when} as networks which are deemed more vulnerable through our framework may require higher intervention priority.

Third, our work complements and advances the insights of the Risk Exposure Index (REI; \citet{levi2016identifying}) by transitioning the methodology from iterative stress-testing of isolated node failures to a comprehensive model that accounts for distributed failures and treats architectural collapse as a stochastic process. The work of \citet{levi2016identifying} primarily uses the REI as a diagnostic tool to assess the performance impact of removing single, isolated nodes from a graph. In contrast, our work treats failure as a stochastic process (node percolation), moving beyond the impact of a single node to determine the system-wide failure rate that an entire architecture can sustain before aggregate production collapses.  In addition, our work leverages linear programming and formally links our notion of resilience to the REI measure through linear programming duality.  This connection enables managers to view REI not just as a simulation result of an isolated failure, but as a structural property of the position of a node within the network hierarchy, providing a unified perspective. 

The above literature paints a detailed but incomplete picture of how the sourcing structure shapes vulnerability, disruption propagation, and resilience. Much of the existing work evaluates resilience within particular sourcing configurations or under exogenously specified disruption scenarios, making it difficult to distinguish structural fragility from scenario-dependent outcomes or to identify inherent limits to tolerance \citep{nair2011supply,ivanov_simulation-based_2017}. Analytical results are often derived for stylized dependency architectures or narrow policy settings, limiting their ability to characterize how much supplier failure different sourcing structures can withstand while preserving aggregate supply \citep{acemoglu2016networks,bimpikis2019supply}, a concern echoed in calls for more unified structural perspectives \citep{hearnshaw2013complex}. From an operational perspective, this limits managers’ ability to compare alternative sourcing designs or assess whether resilience improvements stem from structure or contingent recovery actions \citep{snyder2016or,chen2023when}. We aim to help bridge this gap by examining how the sourcing topology, interdependence among shared suppliers, cascading failure mechanisms, and correlated exogenous shocks jointly shape the limits of system-level resilience.

\section{Model Setup} \label{sec:preliminaries}

\begin{figure}[t]
    \centering
    \subfigure[Multi-tier network. $\cR$ corresponds to the set of raw materials and $\cC$ corresponds to the set of finished goods.\label{subfig:product_graph}]{
    \begin{tikzpicture}[transform shape,scale=0.8]
        \Vertex[x=-1, y=-1]{u1}
        \Vertex[x=-1, y=0]{u2}
        \Vertex[x=-1, y=1]{u3}
        \Text[x=-1, y=-2, fontsize=\normalsize]{$\cR$}

        \Vertex[x=1, y=-0.5]{v1}
        \Vertex[x=1, y=0.5]{v2}

        \Edge[Direct, color=black](u1)(v1)
        \Edge[Direct, color=black](u3)(v1)
        
        \Edge[Direct, color=black](u2)(v2)
        \Edge[Direct, color=black](u1)(v2)
        
        \Text[x=2, y=0]{$\dots$}
        
        \Vertex[x=5, y=-1]{w1}
        \Vertex[x=5, y=0]{w2}
        \Vertex[x=5, y=1]{w3}
        \Text[x=5, y=-2, fontsize=\normalsize]{$\cC$}
        
        \Vertex[x=3, y=0.5]{z1}
        \Vertex[x=3, y=-0.5]{z2}
        
        \Edge[Direct, color=black](z1)(w1)
        \Edge[Direct, color=black](z1)(w2)
        \Edge[Direct, color=black](z1)(w3)
        
        \Edge[Direct, color=black](z2)(w1)
        \Edge[Direct, color=black](z2)(w2)
        \Edge[Direct, color=black](z2)(w3)

    \end{tikzpicture}} \hfill
    \subfigure[Underlying supply relationships between products $i, j$ with supplier sets $\cS(i)$ and $\cS(j)$. \label{subfig:supply_chain}]{
    \begin{tikzpicture}[transform shape,scale=0.5]
    \Text[x=0, y=-2, fontsize=\LARGE]{$i$}
    \Text[x=6, y=-2, fontsize=\LARGE]{$j$}

    \Vertex[x=0, y=1, color=teal, opacity=0.1, size=4]{i}
    \Vertex[x=0, y=1, color=teal, opacity=0.7, size=0.75]{Si1}
    \Vertex[x=0, y=2, color=teal, opacity=0.7, size=0.75]{Si2}
    \Vertex[x=0, y=0, color=teal, opacity=0.7, size=0.75]{Si3}

    \Text[x=-1.1, y=1,fontsize=\Large]{$\cS(i)$}

    \Text[x=7.1, y=1,fontsize=\Large]{$\cS(j)$}

    \Vertex[x=6, y=1, color=teal, opacity=0.1, size=4]{j}
    \Vertex[x=6, y=0.33, color=teal, opacity=0.7, size=0.75]{Sj1}
    \Vertex[x=6, y=1.66, color=teal, opacity=0.7, size=0.75]{Sj2}

    \Edge[Direct,style=dashed](Si1)(Sj1)
    \Edge[Direct,style=dashed](Si1)(Sj2)

    \Edge[Direct,style=dashed](Si2)(Sj1)
    \Edge[Direct,style=dashed](Si2)(Sj2)

    \Edge[Direct,style=dashed](Si3)(Sj1)
    \Edge[Direct,style=dashed](Si3)(Sj2)
    
    \end{tikzpicture}}\hfill     
\subfigure[Main structural parameters. Raw materials $r$, final goods $c$, and sourcing dependencies $m, \mu$.\label{subfig:network_params}]{
\begin{tikzpicture}[transform shape,scale=0.72,>=Stealth]
    \Vertex[x=-2.2, y=-0.7]{r1}
    \Vertex[x=-2.2, y=0]{r2}
    \Vertex[x=-2.2, y=0.7]{r3}
    \draw[decorate,decoration={brace,amplitude=4pt},thick]
        (-2.55,-0.85) -- (-2.55,0.85)
        node[midway,xshift=-12pt,align=center]{$r$};
    \Text[x=-1.45, y=0]{$\cdots$}

    \Vertex[x=-0.65, y=1.3, size=0.6]{vmu}
    \Text[x=-0.15, y=1.6]{$\mu$}

    \Vertex[x=0.65, y=-1.3, size=0.6]{vm}
    \Text[x=0.15, y=-1.6]{$m$}

    \Vertex[x=1.10, y=0.25, size=0, style={draw=none,fill=none}]{mu_t1}
    \Vertex[x=1.30, y=0.70, size=0, style={draw=none,fill=none}]{mu_t2}
    \Vertex[x=1.10, y=1.15, size=0, style={draw=none,fill=none}]{mu_t3}
    
    \Vertex[x=-1.10, y=-1.15, size=0, style={draw=none,fill=none}]{m_s1}
    \Vertex[x=-1.30, y=-0.70, size=0, style={draw=none,fill=none}]{m_s2}
    \Vertex[x=-1.10, y=-0.25, size=0, style={draw=none,fill=none}]{m_s3}

    \Edge[Direct, color=black](vmu)(mu_t1)
    \Edge[Direct, color=black](vmu)(mu_t2)
    \Edge[Direct, color=black](vmu)(mu_t3)

    \Edge[Direct, color=black](m_s1)(vm)
    \Edge[Direct, color=black](m_s2)(vm)
    \Edge[Direct, color=black](m_s3)(vm)

    \Text[x=1.45, y=0]{$\cdots$}

    \Vertex[x=2.2, y=-0.7]{c1}
    \Vertex[x=2.2, y=0]{c2}
    \Vertex[x=2.2, y=0.7]{c3}

    \draw[decorate,decoration={brace,amplitude=4pt,mirror},thick]
        (2.55,-0.85) -- (2.55,0.85)
        node[midway,xshift=12pt,align=center]{$c$};
\vspace{20pt}
\end{tikzpicture}}%

    \caption{\cref{subfig:product_graph} shows the production network. Each node $i$ in the network of \cref{subfig:product_graph} has a supplier set $\cS(i)$. The underlying network of supply relationships between two products $i$ and $j$ with a sourcing dependency is shown in \cref{subfig:supply_chain}. \cref{subfig:network_params} highlights the key structural parameters that appear in our theoretical results: number of raw materials  $r$, finished goods $c$, sourcing requirement $m$, and sourcing influence $\mu$.
}
    \label{fig:supply_chain}
\end{figure}

This section formalizes the production network and the failure process and introduces the resilience metric that serves as the organizing device for the analysis. The goal is to use this metric to assess how the network structure governs the ability of production systems to withstand cascading failures.

We model the production system as a network $\mathcal{G}(\mathcal{K}, \mathcal{E})$, where $\mathcal{K}$ represents a set of $K$ products. Each product $i \in \mathcal{K}$ requires a specific set of inputs, denoted by $\mathcal{N}(i)$, to be manufactured. These requirements define the edges $\mathcal{E}$ of the network, represented by an adjacency matrix $A$. To characterize the structural complexity of the production network $\cG$, we define the sourcing requirement $m$ as the maximum in-degree and the sourcing influence $\mu$ as the maximum out-degree.

The network is structured across different tiers of production. It begins with $r$ raw materials $\mathcal{R}$, which are primary inputs that do not require further processing ($|\mathcal{N}(i)| = 0, i \in \mathcal{R}$). At the other end of the chain are $c$ final goods $\mathcal{C}$, which are products not utilized in the production of others. 

Every product $i$ is sourced from a latent set of suppliers $\mathcal{S}(i)$ in a stochastic manner, as, in practice, modeling suppliers explicitly as nodes would require detailed firm-level data that is rarely observable at scale in the real world. For that reason, we assume a flexible sourcing logic in which a product can be produced by any one of several potential suppliers, provided that the required input products are available. This abstraction captures diversification across firms, facilities, or regions without explicitly modeling suppliers as separate nodes, allowing us to focus on the structural propagation of failures through product dependencies rather than the individual supplier network.

The resilience of this system is defined by its ability to maintain the production of final goods despite disruptions in the upstream supply chain. A product fails if it loses access to its required inputs or if its suppliers are unable to provide them. By quantifying how failures propagate from raw materials through intermediate levels to final goods, the resilience metric captures the systemic risk inherent in the network’s topology. This setup enables us to examine how specific structural parameters, such as $r$, $c$, $m$, and $\mu$, influence the overall stability of the supply chain in various failure scenarios (\cref{fig:supply_chain}).




\subsection{Cascading Failure Model} \label{sec:node_percolation}

To analyze the systemic stability of the supply chain, we model disruptions as a node percolation process. In our core framework, we consider a homogeneous failure model where each product $i$ is supported by an identical number of suppliers, $n$. Each supplier fails independently and exogenously with a probability $x \in (0, 1)$. This spontaneous and exogenous failure of individual suppliers serves as a trigger for broader network effects.

The survival of a product depends on both its internal supply base and the health of its upstream requirements. Specifically, a product $i$ is successfully produced (indicated by the variable $Z_i = 1$) if, and only if, two conditions are met: (i) all its required inputs $j \in \mathcal{N}(i)$ are available and (ii) at least one of its own suppliers $s \in \mathcal{S}(i)$ remain operational. This interplay between endogenous supplier failures and exogenous network dependencies is captured by the following recursive dynamics:
\begin{align} \label{eq:dynamics}
Z_i = \prod_{j \in \mathcal{N}(i)} Z_j \left( 1 - \prod_{s \in \mathcal{S}(i)} X_{is} \right),
\end{align}
where $X_{is}$ indicates the spontaneous failure of supplier $s$. We assume that suppliers fail independently with probability $x$, representing an exogenous shock to the system.
\footnote{To illustrate the resilience results, we have relied on the assumption that suppliers fail independently of each other with the same probability $x$, so that product $i$ fails spontaneously with probability $x^n$. In general, suppliers can also fail under more complex structures, such as correlations among themselves. In that case, our results can be extended by treating the product failure probabilities as general functions of the individual supplier shocks and the correlation parameters. We explore this extension in \cref{sec:extensions}.} 

The resilience of the network is then characterized by the threshold failure probability $x$ that ensures at most an $\varepsilon$-fraction of the total products $K$ fail.

\subsection{The {Resilience} Metric}

To evaluate the behavior of complex production networks and identify potential vulnerabilities, we require a formal metric that quantifies the system's ability to maintain core functionality under stress. In our model, a more resilient network is one that can withstand larger exogenous shocks, represented by a higher probability of supplier failure $x$, while maintaining the survival of the vast majority of products.

Let $S$ be the random variable representing the total number of surviving products, where $S = \sum_{i \in \mathcal{K}} Z_i$ and $F$ be the number of failures (or the cascade size) with $F = K - S = \sum_{i \in \mathcal {K}}(1 - Z_i)$. We define resilience as the maximum allowable probability of supplier failure that maintains a target survival rate with high confidence.

\begin{mdefinition}[Resilience]\label{def:resilience-taxonomy} 

For $\varepsilon \in (0, 1)$, the resilience of a production network $\mathcal{G}$ is:
$$R_{\mathcal{G}} (\varepsilon) = \sup \left \{ x \in (0, 1) : \Pr \left [S \ge (1 - \varepsilon) K \right ] \ge 1 - \frac {1} {K} \right \}.$$
\end{mdefinition}

One can consider this metric as a stress test for the network: given a sense of supplier risk and network structure, it tells a manager how much disruption the system can absorb before performance starts to break down.

Operationally, $R_{\mathcal{G}}(\varepsilon)$ represents the highest failure threshold where at least $(1 - \varepsilon)$ fraction, for instance, 90\%, of the network survives with high probability. This definition allows us to establish a clear taxonomy for supply chain stability as follows:

\xpar{Architectural Fragility} A production network structure is defined as \textit{architecturally fragile} if and only if $R_{\mathcal{G}}(\varepsilon) \to 0$ as $K \to \infty$. Much like the concept of ``scale-free'' \citep{nair2011supply} networks in operations, the $K \to \infty$ behavior reveals the inherent stability of a growth pattern.  This classification signifies that the network's design is fundamentally unstable; as the system grows in complexity or scale, it becomes increasingly vulnerable to even infinitesimal shocks, making the limit a diagnostic for fragility\footnote{This dichotomy is robust; if a network is fragile for any fixed $\varepsilon$, it is fragile for all $\varepsilon$. This metric provides the foundation for our theoretical analysis, enabling us to categorize network topologies by asymptotic stability. Proofs regarding the equivalence of fragility characterizations are provided in \cref{app:dichotomy}.}.

Generally, while the limit $K \to \infty$ is meaningful for assessing tipping points and fragility, supply chain managers work with finite networks. To that end, specifically, we now distinguish between inherent fragility (an architectural property) and operational resilience (a status for a specific network size):

\xpar{Operational Resilience} A production network is \textit{operationally resilient at level $\delta > 0$} if and only if $R_{\mathcal{G}}(\varepsilon) > \delta$ for a fixed threshold $\delta > 0$ for any finite $K$ of interest. Requiring the condition to hold for any $K$ and for a fixed threshold $\delta$ ensures that the metric remains a practical tool for benchmarking real-world supply chains. This allows managers to use the metric as a real-time key performance index (KPI) to evaluate their current supply chain state, regardless of future growth.

\section{Main Results} \label{sec:main_results}

This section applies the resilience metric introduced in the previous section to characterize which network architectures remain resilient and which become fragile,  as scale increases.

\subsection{Resilient and Fragile Classes of Networks}\label{sec:high-level-structures}

The classification of a network as resilient or fragile depends fundamentally on its topology. To illustrate this, we contrast simple structures with complex multi-tier architectures.

We first consider a two-tier network architecture  \citet{bimpikis2018multisourcing,willems2008data} where the final goods are directly dependent on a small number of raw materials. In this two-tier system, each final good requires $m$ inputs and each raw material supplies $\mu$ products. We establish that if the sourcing requirement $m$ and the sourcing influence $\mu$ remain constant as the network scales, the system is inherently resilient. Here, failures are localized: a disrupted raw material impacts only a fixed number of downstream products, preventing systemic cascading failures.

Quantitatively, the two-tier network ensures operational resilience for $\delta = \left ( \frac {\varepsilon} {m (\mu + 1)} \right )^{1/n}$. We give some initial high-level intuition on why this happens, by calculating $\delta$:  At the operational level $\delta$ the probability of supplier failures is bounded by $\delta$, and on average, at most $r \delta ^n$ raw materials fail, which implies that at most $r \delta^n + \mu r \delta^n = (\mu + 1) r \delta^n \le m K \delta^n (\mu + 1)$ products fail in total,  on average (because with the sourcing requirement $m$, the number of raw materials satisfies $r \le mK$). To establish a lower bound on the resilience, we observe that cascade sizes satisfy $F \leq \varepsilon K$ almost surely, as long as~$\delta = \left ( \frac {\varepsilon} {m (\mu + 1)} \right )^{1/n}$~as previously stated.

As we transition from shallow two-tier networks to general multi-tier networks, the risk of cascading failures arises sharply. In deeply layered architectures, a final good may be connected to the primary sector through an exponentially large number of paths. Although this suggests redundancy, it also increases the statistical likelihood that at least one critical dependency is disrupted along every available path. This mechanism typically results in a heavy-tailed distribution of the cascade size $F$, modeled as a power law with scaling exponent two $(\Pr [F \ge f] \geq  {C}/{f}, f>0, C \text { constant})$, where the aggregate risk is dominated by rare, system-wide cascades rather than localized events \citep{nair2022fundamentals,wegrzycki2017cascade}.


Such fragility is a generic property of directional input-output dependencies, arising from the structural layering of the production network. To illustrate this mechanism, we rely on a stylized hierarchical input–output network and provide an analytical proof in \cref{app:power_laws}, showing that the probability of large-scale cascades is bounded away from zero as the network grows.

Next, when the primary sector is large, specifically when the number of raw materials $r = \omega(K^{2/3})$, the network becomes fragile. In these cases, the sheer breadth and depth of the dependency structure ensure that even a vanishingly small failure probability $x$ triggers a systemic collapse. In contrast, networks with a fixed number of raw materials and constant dependencies maintain their resilience. This transition demonstrates that fragility is not the result of specific parameter settings but an emergent property of directional, hierarchical layering in large-scale supply chains.

Motivated by this observation, we generalize our claim by deriving resilience bounds for arbitrary networks using global graph features. Specifically, the global graph features that govern the bounds are the source requirement $m$, the sourcing influence $\mu$, the number of raw products $r$ and the number of final goods $c$; proved in Appendix \ref{app:theorem:resilience_graph_statistics}:
    
\begin{theorem}[Bounding the resilience with global graph features] \label{theorem:resilience_graph_statistics_2}
    For any network $\cG$, the resilience satisfies
    \begin{align*}
         \left ( \frac {\varepsilon} {2(m + r) (\mu + c)} + \sqrt {\frac {\log K} {rK}} \right )^{1/n}  \le R_{\cG}(\varepsilon) \le \left [ \frac {(1 - \varepsilon) K} {\sqrt 2 r^{3/2} + \sqrt {r \log K}} \right ]^{1/n}.
    \end{align*} Thus, if $r = \omega (K^{2/3})$, then the network is fragile. Conversely, if $r$, $c$, $m$, and $\mu$ are constant, the network is operationally resilient at $\delta = \left ( \frac {\varepsilon} {2(m + r) (\mu + c)} \right )^{1/n}$. 
    
\end{theorem}

The above parameters are chosen to capture the dominant structural features that govern the propagation of disruption, such as sourcing requirements, shared inputs, and supply dependencies, rather than fine-grained topological details. The objective of \cref{theorem:resilience_graph_statistics_2} is therefore to distinguish among broad classes of production structures that differ in resilience, not to model micro-level aspects of network architecture.

Indeed, many networks with substantially different local structures can share identical values of the four parameters. This abstraction is both intentional and essential: it allows us to identify structural limits that hold uniformly across entire families of networks. Although finer distinctions are important in specific settings and for obtaining sharper rates (see, for example, \cref{app:architectures}, where we analyze several concrete architectures), the core analysis emphasizes how these coarse architectural features fundamentally constrain resilience in a general and comparative sense.


\begin{figure}[t]
    \centering
    \includegraphics[width=0.7\linewidth]{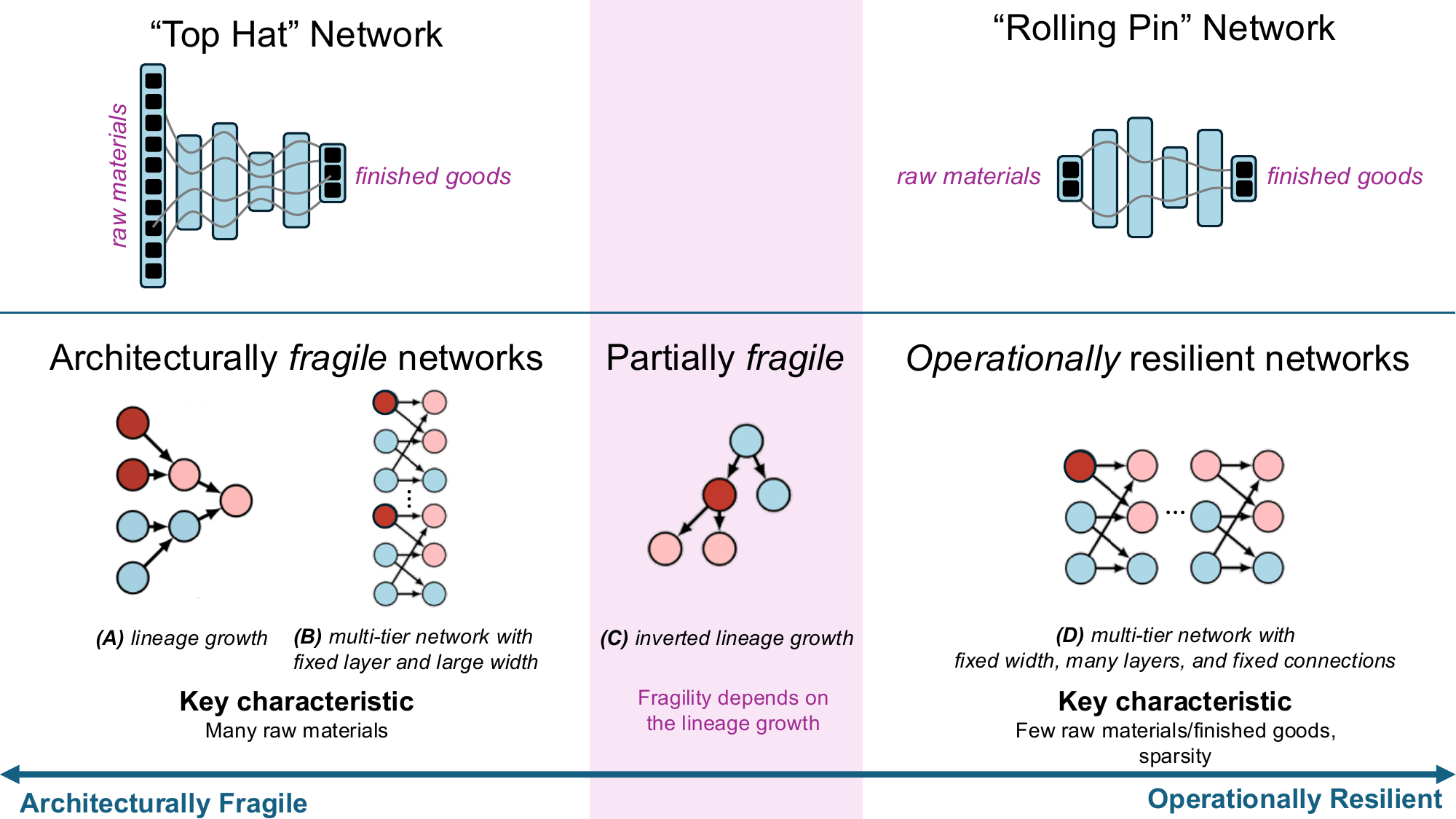}
    \caption{\textbf{Illustration of fragile and resilient network structures.} \textbf{Fragile} networks resemble a ``top hat'', characterized by a broad base of many raw materials ($r$) feeding into a complex, often highly centralized, multi-tier middle. Two forms are shown: \textbf{(A)} lineage growth, which exhibits the highest fragility due to centralization, where individual hub failures trigger massive non-linear cascades; and \textbf{(B)} a multi-tier network with a fixed number of layers and large width. \textbf{Resilient} networks resemble a ``rolling pin'', defined by few raw materials ($r$) and finished goods ($c$), with sourcing requirements ($m$) and sourcing influence ($\mu$) remaining fixed as the network grows. This sparsity ensures failures in one part of the chain cannot easily propagate to the broader system. Shown as \textbf{(D)} a multi-tier network with fixed width, many layers, and fixed connections. Networks can also be partially resilient, as in \textbf{(C)} an inverted lineage growth network, where a small primary sector limits systemic failure entry points. A critical managerial implication follows: expanding the supplier base to include many diverse raw materials may inadvertently shift a resilient ``rolling pin'' into a fragile ``top hat'', increasing the likelihood of catastrophic systemic collapse. Failures due to network effects are shown in dark red, cascading failures in pink, and unaffected nodes in light blue.} 
\label{fig:qualitative_representation}
\end{figure}

From a managerial perspective, the structural transition from resilience to fragility can be visualized through \textit{two} distinct geometric archetypes (\cref{fig:qualitative_representation}). In \cref{app:architectures}, we analyze several examples of such networks.

\subsection{Interpreting Structural Regimes Through Real-World Supply Chains}

The results mentioned above in \cref{theorem:resilience_graph_statistics_2} and in \cref{app:architectures} highlight that resilience in production networks is fundamentally shaped by architectural features rather than the reliability of individual suppliers. This distinction is critical for interpreting real-world supply chains, which often appear robust when assessed locally but remain vulnerable due to hidden structural dependencies. Our characterization of resilient and fragile regimes provides a structural lens through which widely observed disruption patterns can be understood.

Fragile ``top hat'' architectures are common in industries where production relies on a large, heterogeneous set of raw materials or specialized upstream inputs that feed into dense intermediate processing layers. Chemical manufacturing, advanced polymers, semiconductors, and pharmaceutical supply chains may use this structure: a broad base of distinct chemical precursors or rare inputs feeds into a relatively narrow set of processing stages that serve many downstream products \citep{xiong_semiconductor_2025}. While such architectures may be efficient from a specialization or cost perspective, our results show that they are inherently prone to cascading failures. As the number of raw materials grows, the probability that at least one upstream disruption disconnects large portions of the network increases sharply, even when individual failure probabilities are small. This mechanism aligns with documented cases where seemingly localized upstream shocks, such as failures in specialty chemicals, rare earths, or active pharmaceutical ingredients, rapidly propagate across multiple product lines and firms.

By contrast, resilient ``rolling pin'' architectures resemble supply systems in which both the number of raw materials and the number of final goods are constrained, and in which sourcing requirements and influence remain bounded as the system scales. Such structures are more commonly observed in modular manufacturing systems, standardized component ecosystems, the automotive industry, or industries that deliberately limit product variety relative to available inputs \citep{he_developing_2026}. In these networks, failures remain localized because disruptions encounter limited branching opportunities as they propagate. Theoretical bounds derived in \cref{theorem:resilience_graph_statistics_2} formalize why these systems can absorb nontrivial shocks without experiencing system-wide collapse, even as the number of intermediate production stages grows.

An important implication of the analysis is that depth alone does not generate fragility; rather, fragility emerges from the interaction between depth and width. Multi-tier supply chains are not inherently vulnerable simply because they involve many stages. Instead, vulnerability arises when deep chains are combined with expanding upstream breadth or highly centralized intermediate layers. This distinction helps reconcile why some deeply layered global supply chains, such as those with standardized platforms and controlled interfaces, remain stable, while others experience disproportionate disruption from relatively small shocks.

The results also clarify why common managerial responses, such as adding suppliers or increasing sourcing options, do not necessarily improve resilience. Expanding the number of raw materials or upstream inputs without controlling sourcing influence can unintentionally push a network into the fragile regime identified by the theory. From a structural standpoint, such interventions increase the number of potential failure entry points without reducing the network’s propensity to propagate failures once they occur. This insight helps explain why diversification strategies sometimes fail to prevent systemic disruption despite appearing sensible at the firm level.

Finally, the distinction between resilient and fragile regimes emphasizes that resilience exhibits threshold behavior rather than smooth tradeoffs. Networks do not gradually become more fragile as they grow; instead, once structural parameters cross critical boundaries, even infinitesimal shocks can trigger large cascades. This perspective underscores the importance of architectural design decisions made early in a supply network's evolution. Once a production system enters a fragile regime, incremental adjustments, such as marginal redundancy or modest buffering, are unlikely to fundamentally alter its systemic behavior.

The results in this section show that network architecture induces sharp structural regimes, with some classes remaining resilient as scale increases and others becoming inherently fragile.

\section{Robustness and Extensions} \label{sec:extensions}

In this section, we extend our structural resilience framework to incorporate finer network features and examine whether the logic of structural regimes in \cref{sec:main_results} persists beyond the high-level structural properties identified in \cref{sec:high-level-structures} and under supplier heterogeneity and correlated disruptions.

\subsection{Operationalizing Resilience in General Networks via Linear Programming}

So far, our bounds on the resilience in \cref{theorem:resilience_graph_statistics_2} depended on four global network characteristics ($r, c, m, \mu$); however, an interesting open question is how resilience can be expressed in terms of the specific and fine-grained structure of the network $\cG$. To address that, we develop linear programming (LP) approximations to study the resilience metric across general network topologies.

The resilience metric remains meaningful even in networks with cycles, such as those found in circular economies, where recycled materials flow back into upstream processes. In these general cases, calculating the exact expected number of failures ($\mathbb{E}[F]$) is computationally intractable. The LP formulation allows us instead to approximate the expected cascade size $\ev {} {F}$ as a function of the adjacency matrix $A$ of $\cG$ and the vector of the failure probabilities per-product $u = (u_1, \dots, u_K)^T$; see \cref{eq:lp} below.  Subsequently, we show that the cascade size can be systematically bounded through LP duality formulations and use that to derive generic bounds for the resilience in terms of the adjacency matrix $A$; see our main result in \cref{theorem:lp_duality_resilience_2}.

Moreover, in realistic settings, firms use several structural decoupling mechanisms, such as holding excess inventory, supplier redundancy, input standardization, and modular product design, which all serve to reduce the probability of production dependency. We model structural decoupling as the probability $y$ that an edge (dependency) remains in place during a shock; hence, by reducing $y$ we can analyze how such buffers weaken the propagation of failures, increasing the resilience metric, and effectively transforming the network into a more resilient substructure. Formally, the parameter $y$ can be viewed as a probabilistic sparsification of the production network, and a low value of $y$ corresponds to a large structural decoupling. The value of this approach lies in its ability to map complex structural decoupling policies back onto structural topology, and it can be easily generalized to account for heterogeneous inventories per product or supplier. By viewing the buffered supply chain as an ``edge-pruned'' network, we can leverage our theoretical bounds on the edge-pruned network to estimate the resilience of the original, more complex/coupled system.

In \cref{app:linear_program}, we show that the solution to the following LP effectively approximates the expected cascade size in the original network. Specifically, for a choice of $y = \varrho/m$, where $m$ is the sourcing requirement and $\varrho \in (0, 1)$ is the approximation factor, we demonstrate that the solution of the following LP is an $(1 + \varrho)$-approximation to the expected cascade in the edge-pruned network (see \cref{theorem:general_ub} and its proof in \cref{app:linear_program}): 
\begin{align} \label{eq:lp}
    \max_{\beta} & \quad \one^T \beta \\  \text{ s.t. } & \quad \beta \le yA^T \beta + u, \quad \zero \le \beta \le \one. \nonumber
\end{align} In \cref{eq:lp}, the decision variables $\beta = (\beta_1, \dots, \beta_K)^T$ model the probabilities of each one of the $K$ products failing in the edge-pruned network, $y$ denotes the edge-pruning probability that models structural decoupling, $A$ corresponds to the adjacency matrix of $\cG$ and $u = (u_1, \dots, u_K)^T$ is the vector of failure probabilities for each product; cf. \cref{sec:preliminaries},  where by assuming that products fail independently with exogenous shock probability $x$ and with $n$ suppliers per product, we had $u_i  = x^n$ for all $i$.

The above provides a linear-programming tool that approximates the expected cascade size from the network's adjacency matrix and failure probabilities. This tool enables managers to leverage empirical data to identify which products trigger the largest cascades, and can be solved efficiently in large networks in complexity, which is almost linear in the number of nodes and edges of the networks, using an iterative fixed point algorithm (see \cref{app:algorithm_lp}). 

In \cref{{theorem:lp_duality_resilience_2}}, we use the LP duality to link the resilience metric with Katz centrality, which is a measure of a node's global influence. This provides us with a systematic way to rank the criticality of products within the network and lower-bound the resilience. Formally, we extend the definition of resilience to include the edge-retention probability $y$, and denote the extended resilience metric by $R_{\cG}(\varepsilon; y)$; noting that $R_{\cG}(\varepsilon; y)$ is decreasing in $y$ and $R_{\cG}(\varepsilon; 1) = R_{\cG}(\varepsilon)$. Our result follows (the technical formulations and proofs are in \cref{app:linear_program}):

\begin{theorem} \label{theorem:lp_duality_resilience_2}
    If the structural decoupling is sufficiently large, which corresponds to a low value of $y$, specifically $y < 1/ \mu$, the resilience satisfies $$R_{\cG}(\varepsilon; y)  \ge \left (\frac {\varepsilon} {\one^T \beta_{\cG}^\katz (y)} \right )^{1/ n},$$ where $\beta_\cG^\katz(y) = (I - yA^T)^{-1} \one$ is the vector of Katz centralities computed on the production network $\cG$ with adjacency matrix $A$.
\end{theorem}

Generally, the lower bounds in \cref{theorem:resilience_graph_statistics_2,theorem:lp_duality_resilience_2} can be compared directly, with the LP-based bound (\cref{theorem:lp_duality_resilience_2}) being more meaningful when $r \cdot c = \Omega (K)$. This setting is typical in modern, highly specialized supply chains where the diversity of raw materials and finished goods is significant relative to the total number of intermediate products. In these scenarios, \cref{theorem:lp_duality_resilience_2} provides a more precise estimate of the lower bound for resilience, accounting for the specific path topology, while \cref{theorem:resilience_graph_statistics_2} provides a more conservative bound focusing on the cardinality of raw products, final goods, and sourcing dependencies.

Examples of such specialized supply chains include pharmaceutical manufacturing and advanced electronics (e.g., aerospace components). In these sectors, the ``intermediate'' production steps are often streamlined or highly integrated, but they depend on a vast array of specialized chemical precursors or rare-earth materials (large $r$) to produce a wide range of specific final formulations or high-precision components (large $c$). Here, the LP-based bound (\cref{theorem:lp_duality_resilience_2}) captures the critical role of specific intermediate ``hubs'' that might be overlooked by the global parameters in \cref{theorem:resilience_graph_statistics_2}. 

\subsection{Connection to the Risk Exposure Index (REI)}

Our resilience framework shares deep conceptual links with established supply chain metrics, most notably the Risk Exposure Index (REI; \citet{levi2016identifying,simchi2014superstorms,ham2022companies}). While our metric focuses on the threshold of systemic collapse due to random failures, REI measures the potential impact of a disruption at a specific node on the overall network performance.

By defining the potential impact of product $i$ as the change in the expected number of failures following an infinitesimal shock to product $i$ and the risk exposure index to be the maximum such impact across all products, we can identify the most critical node in the production network, i.e., the product whose failure triggers the largest cascades. In \cref{proposition:rei_katz}, we show that for networks with sufficient decoupling (i.e., $y<\min \{ 1/\mu, 1/m \}$), the Risk Exposure Index is affected by the node with the maximum Katz centrality in the reverse production network $\cG^R$ (namely the ``most upstream'' product); proved in \cref{app:rei}:

\begin{proposition} \label{proposition:rei_katz}
    If the structural decoupling is sufficiently large, corresponding to a low value of $y$, specifically  $y < \min \{1/\mu, 1/m \}$, and $x < (1 - \mu y)^{1/n}$,  then the REI depends on the most upstream product, namely: $$\mathrm {REI}_\cG(x; y) = n x^{n - 1} \max_{i \in [K]} \gamma_{i, \cG}^\katz(y),$$ where $\gamma_{ \cG}^\katz(y) = (I - yA)^{-1} \one$ is the vector of Katz centralities in the reverse production network $\cG^R$ with adjacency matrix $A^T$, and $\gamma_{i, \cG}^\katz(y)$ is its $i$th entry. 
\end{proposition}

This connection provides a powerful bridge to the network science literature, where Katz centrality is known to relate to the average path length, such as, for example, in the case of small-world networks \citep{watts1998collective,basole2014supply} and the structural influence of a node. The above result shows that small-world networks, where path lengths between products are small, may be less prone to disruptions compared to networks where there are long paths between products. 

While \cref{proposition:rei_katz} expresses REI in terms of Katz centrality on $\cG^R$, REI is not a separate object from our resilience metric: it is the dual/sensitivity companion of the primal LP, given in \cref{eq:lp}, that bounds resilience in \cref{theorem:lp_duality_resilience_2}, expressed in terms of the Katz centrality on $\cG$. In particular, REI measures the maximum potential impact of an individual product’s shock on the resilience bound, because the dual optimal solution characterizes how the primal (resilience) objective changes under marginal perturbations of shock constraints. Empirically, this primal-dual link implies that products with high REI should be precisely those whose idiosyncratic shocks most degrade our resilience metric. Later in the paper, we verify this prediction by regressing average REI against resilience in \cref{sec:rei-resilience-connection}; see \cref{tab:resilience_ttr}.

Managerially, these connections are significant because they enable local decision-making. Managers do not necessarily require a global controller to identify risks; by calculating the REI through our LP framework, they can pinpoint the most risky supplier. This allows for targeted interventions, such as increasing safety stock or dual-sourcing for high-REI nodes, that effectively lower systemic risk without requiring an overhaul of the entire network architecture.

\subsection{Heterogeneity in Suppliers and Products and Correlated Failures}

\xpar{Heterogeneity in Suppliers and Products} Real-world supply chains are rarely homogeneous; products may have varying numbers of suppliers, and certain sectors may be more prone to failure than others. We extend our model to incorporate these heterogeneities by defining a generalized resilience metric that accounts for the diverse supplier bases between products, i.e., the varying number of suppliers ($n_i$) per product $i$ and the heterogeneous and correlated failure probabilities ($x_s$) per supplier $s$.

Our analysis confirms that the core structural logic of the resilient-versus-fragile regime still holds under these conditions. Although heterogeneity shifts the specific numerical thresholds for resilience, for example, products with fewer suppliers act as bottlenecks that lower the overall threshold, the behavior of the network remains dictated by its topology.

\xpar{Correlated Failures and Common Shocks} A significant concern in supply chain management is the occurrence of correlated failures, where multiple suppliers or products fail simultaneously due to common shocks such as natural disasters, regional geopolitical instability, or industry-wide labor strikes. We formalize this by introducing a vector of high-order correlation parameters $(\rho)$ into the joint failure distribution.

Recall that with $n_i$ suppliers per product $i$, we have a total of $N = \sum_{i\in\cK} n_i$ suppliers that may fail jointly and cause production failures to propagate. Let us denote the random set of jointly failed suppliers by $F_\cS$ and assume that supplier shocks follow the joint distribution $F_\cS \sim \nu$. We define generalized resilience with respect to the worst-case distribution $\nu$, such that, in expectation, at most $N\ .\ x$ suppliers fail (i.e., the average supplier failure rate is $x$), and that at least $1 - \varepsilon$ of the products survive with high probability: 
\begin{align} \label{eq:resilience_general}
    R_{\cG}(\varepsilon) & = \inf_{\nu} \;\; \sup_{x} \;\;\;\;  x \\ \text{s.t.} \quad &   0 < x < 1, \; \frac {1}{N} \ev {F_\cS \sim \nu} {|F_\cS|} \le x, \; \Pr_{F_\cS \sim \nu} \left [ F \ge \varepsilon K \right ] \le \frac {1} {K}. \nonumber 
\end{align}

When suppliers fail independently at random with probability $x$, the generalized resilience in \cref{eq:resilience_general} reduces to the resilience in \cref{def:resilience-taxonomy}. The generalized resilience measure is hard to compute exactly; however, as previously, it can be approximated with LPs. To achieve this, we rely on a statistical tool called the \textit{Bahadur representation} \citep{bahadur1961representation,yuan2021community}, which enables us to express the product-level vector of shock probabilities, $u$, as a function of the marginal failure probabilities of suppliers, $x_s$, and a correlation parameter, $\rho$); See \cref{sec:general_resilience_lower_bound}. We differentiate between within-product correlations (where a product's suppliers fail together) and between-product correlations (where shocks hit multiple sectors at once).

Our results indicate that correlation generally exacerbates fragility. In the case of between-product correlations, even resilient topologies can become fragile, as a single event is very likely to trigger large cascades. This addresses a major robustness concern: while structural decoupling, like inventory or dual-sourcing, can mitigate independent shocks, systemic resilience in the face of correlated common shocks requires a fundamental shift in network geometry. Detailed mathematical representations of these correlations and their impact on resilience bounds are provided in \cref{app:generalized_resilience}.

Taken together, these extensions demonstrate that the resilient-versus-fragile regime logic persists under heterogeneity, correlated failures, and more general network structures.

\section{Numerical Illustration on Empirical Network Topologies} \label{sec:experiments}

We numerically illustrate the proposed resilience metric on empirically observed multi-echelon supply chain networks, demonstrating its discriminative power across real-world topologies. Specifically, we utilize the multi-echelon supply chain dataset provided by \citet{willems2008data}, which contains 38 distinct industrial networks. These data represent various manufacturing contexts and have been used in the supply chain resilience and operations management literature; cf. \citet{blaettchen2021traceability,thevenin_material_2021,park_structural_2018}. 

\subsection{Numerical Estimation of Resilience} \label{sec:numerical-estimation-resilience}

In real-world networks, the exact calculation of the resilience threshold is complicated by non-standard topologies. We therefore utilize a Monte Carlo (MC) simulation approach to estimate resilience. For each network, we simulate 1,000 failure scenarios in a range of spontaneous supplier failure probabilities. We define the surviving fraction for each trial and determine the maximum shock that ensures a high survival rate with high confidence. We denote the resilience estimated from the MC simulations by $\hat{R}_{\mathcal{G}}(\varepsilon;y)$, or simply $\hat{R}_{\mathcal{G}}(\varepsilon)$ when $y=1$. To provide a single, comprehensive measure of the resilience of a network across all possible failure tolerances ($\varepsilon$), we introduce the average resilience as the area under the curve (AUC) of $\hat{R}_{\mathcal{G}}(\varepsilon;y)$ that integrates $\varepsilon$ out, denoted by $\bar{R}_{\mathcal{G}}(y)$ and defined as follows: $\bar{R}_{\mathcal{G}}(y) = \int_0^1 \hat{R}_{\mathcal{G}}(\varepsilon; y) d\varepsilon.$

This metric represents the area under the resilience-tolerance curve, providing a holistic view of the system’s structural integrity. We estimate the average resilience for three networks (see \cref{tab:statistics-willems} for the statistics of these networks) from \citep{willems2008data}, and report the results in \cref{fig:willems}. We observe that the three networks have three distinct resilience profiles with Network \#30 being the most resilient on average, followed by Network \#10 and subsequently Network \#20. Network \#30 has the smallest density and average degree (cf. \cref{tab:statistics-willems}), factors that serve against the propagation of shocks as we showed previously.

\begin{figure}[!h]
    \centering
    \subfigure[Distribution of tiers \citep{willems2008data}]{\includegraphics[width=0.3\textwidth]{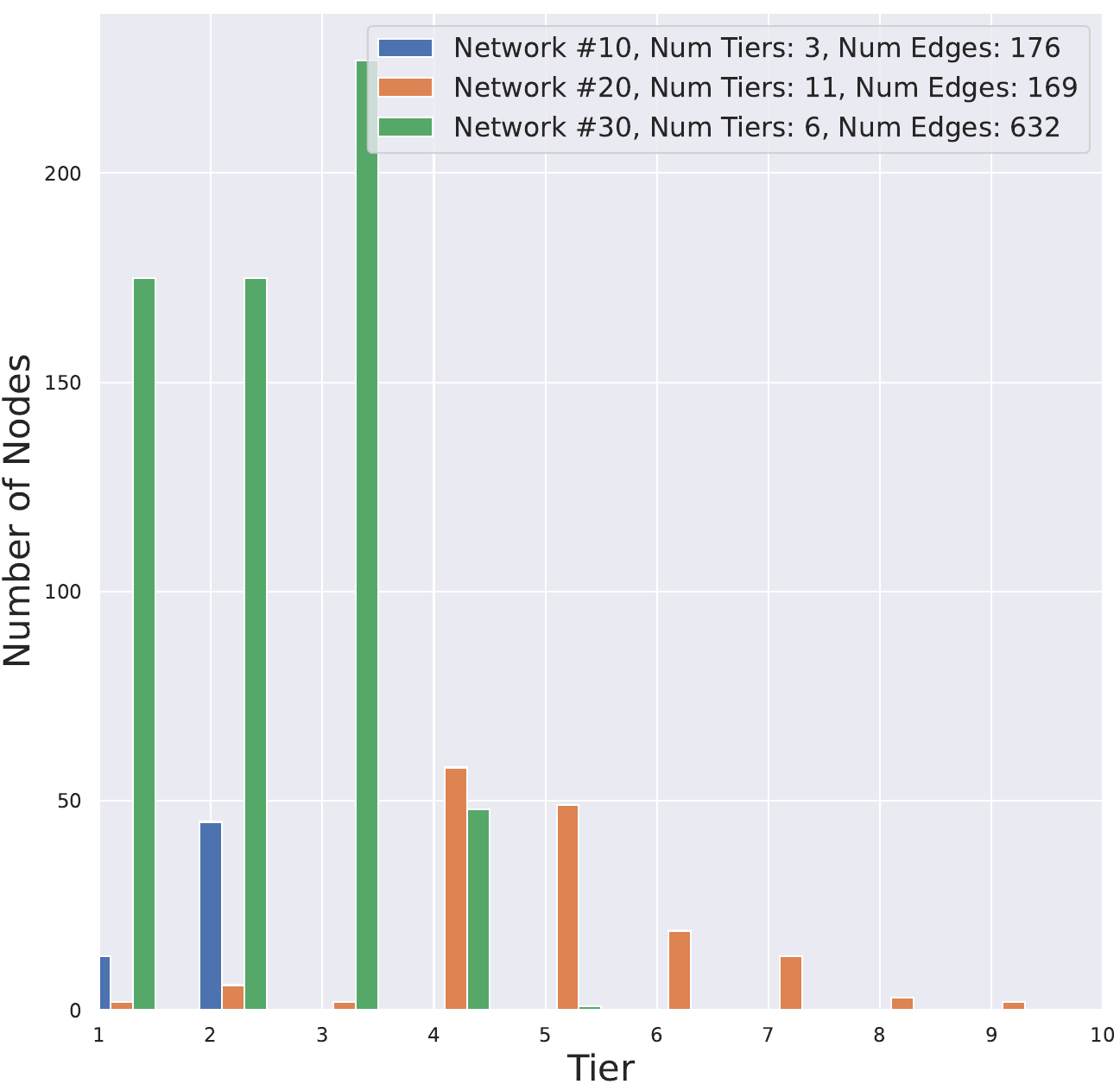}}
    \subfigure[Estimating $\hat R_{\cG}(\varepsilon)$ and $\Ravgsamp {\cG} (\varepsilon)$ \label{subfig:willems_resilience}]{\includegraphics[width=0.3\textwidth]{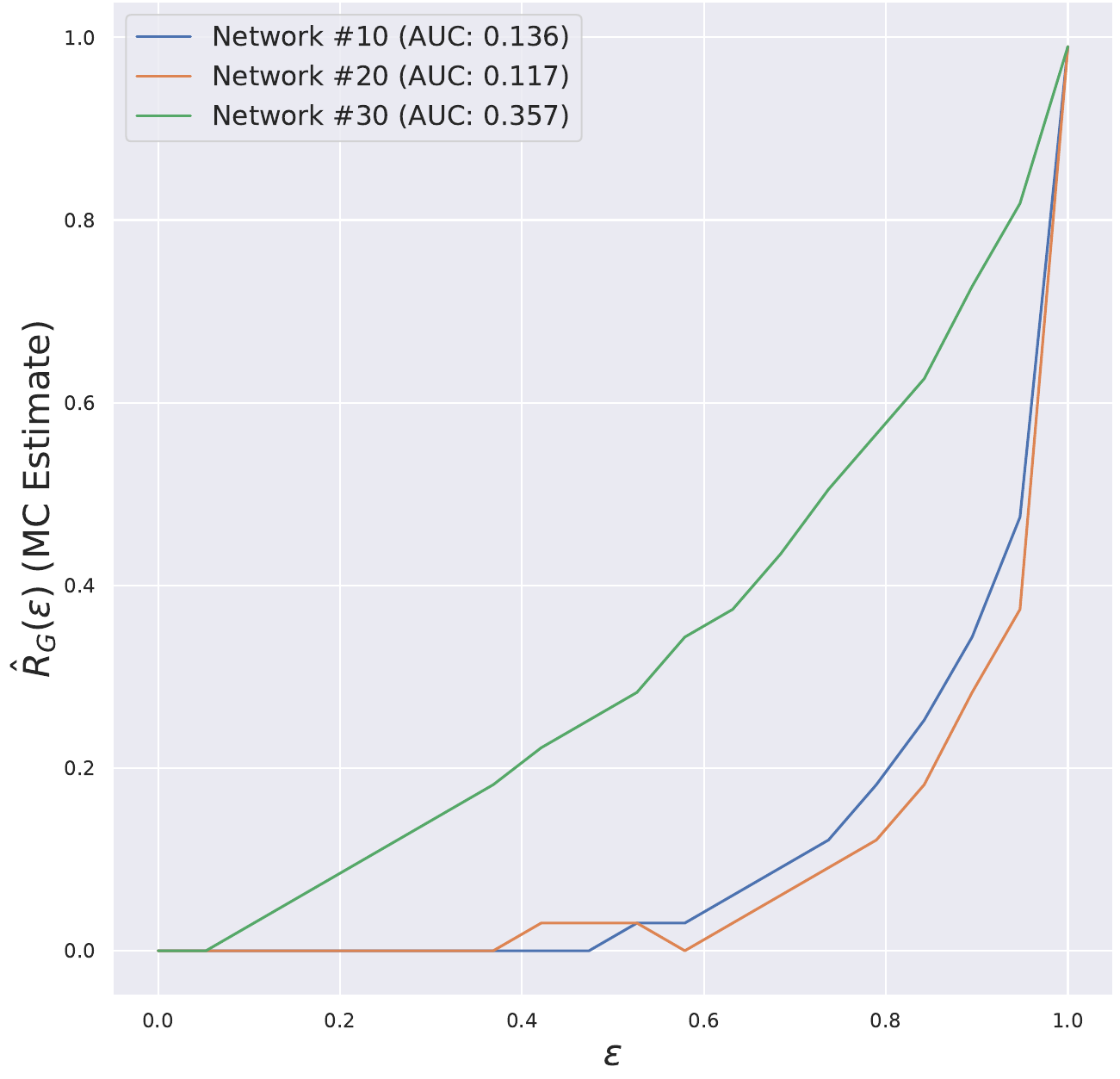}}
    \caption{Resilience estimation for three example networks from \citet{willems2008data}. We set the number of suppliers for each product to $n = 1$ and $y = 1$.}
    \shrink
    \label{fig:willems}
\end{figure}

\begin{table}[!h]
    \scriptsize
    \centering
    \begin{tabular}{llllllll}
    \toprule
         Network ID  & Size ($K$) & Avg. Degree & Density ($\frac {|\cE(\cG)|} {K^2 - K}$) & $\mu$ & $m$  & $\bar{R}_{\cG}$ & Confidence ($1 - 1/K$)  \\
    \midrule 
         \#10 & 58 & 3.03 & 0.053 & 27 & 13  & 0.136 & 98.27 \% \\
         \#20 & 156 & 1.08 & 0.006 & 29 & 3  & 0.117 & 99.35 \% \\
         \#30 & 626 & 1.00 & 0.001 & 2 & 48  & 0.357 & 99.84 \% \\ 
    \midrule
    \end{tabular}
    \caption{Network Statistics and $\bar{R}_{\cG}(y)$ estimated from Monte Carlo simulations for three out of the $38$ networks in \citet{willems2008data}'s dataset referring to the experiments of \cref{fig:willems} (with $n=1$ and $y=1$ as in \cref{fig:willems}).}
    \shrink
    \label{tab:statistics-willems}
\end{table}

\subsection{Impact of Correlation Structure}

In real-world settings, shocks are rarely independent; regional events or industry-wide crises create dependencies in failure probabilities. To further refine our empirical understanding, we analyze the effect of introducing supplier correlations on network resilience in three empirical networks of \citep{willems2008data}.

We chose the three smallest networks in \citep{willems2008data} to best illustrate the effect of increasing the correlation coefficient ($\rho$). Our simulations in \cref{fig:correlations} demonstrate that resilience decreases consistently as $\rho$ increases. The most pronounced decline occurs under joint supplier-product correlations, where shocks are coupled across both the vertical tiers and the horizontal sectors of the supply chain. In these scenarios, even networks that may appear structurally robust under i.i.d. failure assumptions quickly transition into fragile states. The data suggests that as $\rho$ approaches 1, the structural benefits of multi-sourcing are effectively neutralized, as the redundant suppliers fail in unison, leaving the network's directional dependencies exposed.

\begin{figure}[!h]
    \centering
    \includegraphics[width=0.9\linewidth]{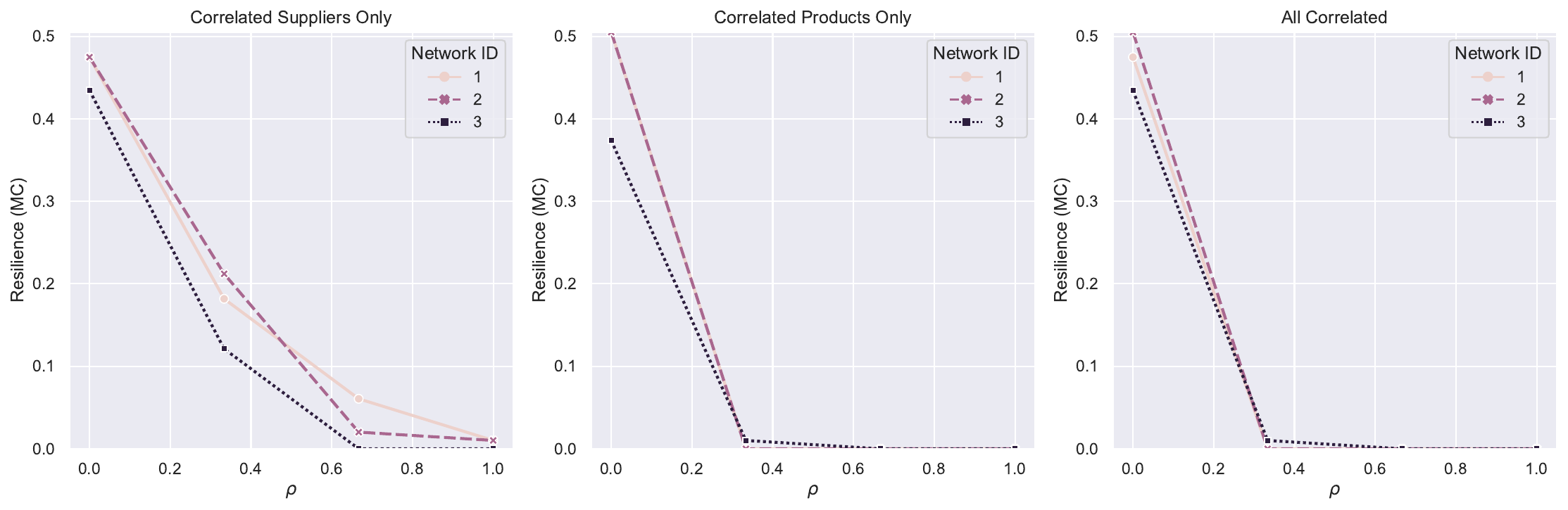}
    \caption{Effect of correlation on $R_{\cG}(\varepsilon)$ for the three smallest supply networks from~\cite{willems2008data}: Networks \#1 with $K = 8$, \#2 with $K = 13$, and \#3 with $K = 17$. For each case, the second-order correlation coefficient has been set to $\rho \in \{ 0, 0.25, 0.5, 0.75, 1 \}$, shown on the x-axis. Higher-order correlations are set to zero. We plot the Monte Carlo estimate $\hat R_\cG(\varepsilon)$ as a function of $\rho$. We have set $y = 1$, $\varepsilon = 0.2$ and $n = 10$. The correlation regimes are analytically given in \cref{app:generalized_resilience}. We chose Networks \#1, \#2, and \#3 due to their smaller sizes, which give a more gradual decline in resilience as $\rho$ increases and are more suitable for illustration.}
    \label{fig:correlations}
\end{figure}

\subsection{Structural Determinants of Resilience: Regression Analysis}

We further examine how the fundamental structural parameters identified in our theoretical sections, namely network size ($K$) and the number of raw materials ($r$), predict resilience across the entire 38-network dataset. The goal of this section is to demonstrate \textit{(i)} that the resilience metric is computationally tractable on large, empirically observed supply-chain networks, and \textit{(ii)} that it meaningfully discriminates across real-world network architectures in ways predicted by the theory. To achieve this, we perform a log-log regression to test the relationship between these topological features and the average resilience.

The results in \cref{tab:resilience_auc} provide empirical validation for our theoretical findings. We find that resilience decreases significantly as the number of raw materials ($r$) increases ($p < 0.01$). This confirms the ``Top Hat'' hypothesis: networks that rely on a broad, dispersed base of primary inputs are more susceptible to systemic failure. Interestingly, we observe that for these topologies, larger network size ($K$) correlates with higher resilience, suggesting that the growth in these specific multi-echelon networks often comes with increased modularity or depth rather than just width.

\subsection{Linking Resilience to the Risk Exposure Index (REI)}\label{sec:rei-resilience-connection}

Finally, we operationalize the connection between our resilience threshold and the Risk Exposure Index (REI). We average MC simulations to estimate the expected impact of the worst-case, single-node disruptions, denoted by $\reiavg {\cG} (y)$, and perform a log-log regression of $\reiavg {\cG} (y)$ against network resilience $\bar{R}_{\mathcal{G}}(y)$, size ($K$), and sourcing requirement ($m$). The negative coefficient for $\log \bar{R}_{\mathcal{G}}(y)$ ($p < 0.01$) in \cref{tab:resilience_ttr} demonstrates a strong inverse relationship between average resilience and risk exposure. This implies that networks characterized by high systemic risk (low $\bar{R}_{\cG}$) are also those where individual nodes possess extreme disruptive potential. Furthermore, the positive relationship with $\log(m)$ confirms that denser connectivity increases the maximum possible cascade size.

Finally, to complement the above results, in \cref{app:experiments_addendum}, we also give results from I-O output tables of countries taken from the World Input-Output Database \citep{timmer2015illustrated}.

\begin{table}[t]
\begin{minipage}[t]{.49\textwidth}
\centering
\tiny
\begin{tabular}{@{\extracolsep{3pt}}lc}
\\[-1.8ex]\hline
\hline \\[-1.5ex]
& \multicolumn{1}{c}{\textit{Dependent variable: $\log$-$\bar{R}_{\mathcal{G}}$}} \
\cr \cline{2-2}
\hline \\[-1.5ex]
 const & -1.757$^{***}$ (0.045) \\
 log-$K$ & 0.146$^{***}$ (0.012) \\
 log-$r$ & -0.029$^{***}$ (0.010) \\
\hline \\[-1.5ex]
 Observations & 34 \\
 $R^2$ & 0.871 \\
 Adjusted $R^2$ & 0.862 \\
 Residual Std. Error & 0.067 ($df=31$) \\
 $F$ Statistic & 104.385$^{***}$ ($df=2; 31$) \\
\hline
\hline \\[-1.5ex]
\textit{Note:} & \multicolumn{1}{r}{$^{*}p<0.1; ^{**}p<0.05; ^{***}p<0.01$} \\
\hline \\
\end{tabular}
\caption{Relation between the $\bar{R}_{\mathcal{G}}$, $K$ and $r$ for the 38 multi-echelon networks of \cite{willems2008data}. For each network, we set $y = 1 / (\max\{m, \mu\} + 10^{-5})$.}
\label{tab:resilience_auc}
\end{minipage}%
\begin{minipage}[t]{0.49\textwidth}
\tiny
\centering
\begin{tabular}{@{\extracolsep{3pt}}lc}
\\[-1.8ex]\hline
\hline \\[-1.5ex]
& \multicolumn{1}{c}{\textit{Dependent variable: $\log$-$\reiavg {\cG} (y)$}} \
\cr \cline{2-2}
\hline \\[-1.5ex]
 const & -14.782$^{***}$ (2.817) \\
 log-$m$ & 0.540$^{***}$ (0.167) \\
 log-$K$ & 0.932$^{***}$ (0.259) \\
 log-$\bar{R}_{\mathcal{G}}$ & -8.828$^{***}$ (1.589) \\
\hline \\[-1.5ex]
 Observations & 34 \\
 $R^2$ & 0.678 \\
 Adjusted $R^2$ & 0.646 \\
 Residual Std. Error & 0.629 ($df=30$) \\
 $F$ Statistic & 21.089$^{***}$ ($df=3; 30$) \\
\hline
\hline \\[-1.5ex]
\textit{Note:} & \multicolumn{1}{r}{$^{*}p<0.1; ^{**}p<0.05; ^{***}p<0.01$} \\
\hline\\
\end{tabular}
\caption{Relation between $\log \reiavg {\cG} (y)$ and $\log m$, $\log K$ and $\log\bar{R}_{\mathcal{G}} (y)$ for the 38 multi-echelon networks studied by \cite{willems2008data}. For each network, we set $y = 1 / (\max\{m, \mu \} + 10^{-5})$.} 
\label{tab:resilience_ttr}
\end{minipage}
\shrink
\end{table}

\section{Discussion and Conclusion} \label{sec:discussion}

Across the networks studied, production systems differ sharply in how they absorb or amplify disruptions \citep{ivanov2020viability,nair2009supply,elliott2023supply}. Some structures contain failures, while others allow small shocks to cascade into large production losses. These contrasts persist even among networks that appear similar in size or sourcing effort, mirroring mechanisms long studied in financial contagion and systemic risk theory, where interdependencies determine whether shocks dissipate or cascade; see, for instance, \citet{eisenberg2001systemic,chen2013axiomatic}. Viewed in this way, resilience reflects underlying differences in dependency design, cascade dynamics, and threshold behavior rather than the presence or absence of individual protection.

At the network level, resilience depends on whether dependency structures limit or amplify the cascading propagation of failures, rather than on the reliability of any single source. Failures matter less on their own and more because of how easily they cascade through the network. When dependencies are dense and broadly distributed upstream, disruptions travel quickly and widely across tiers, enabling large failure cascades. In contrast, networks that constrain critical inputs and align product scope with available resources limit cascading pathways. The structure of dependency, not the strength of individual nodes, governs whether failures cascade or dissipate.  

The results further indicate a sharp divide between resilient and fragile network architectures rather than smooth tradeoffs in performance \citep{elliott2022supply}. Some structures remain functional under nontrivial disruption, while others are such that even small shocks trigger widespread cascading failure as networks scale. Importantly, this fragility reflects threshold behavior: once dependencies exceed certain structural limits, disruptions shift abruptly from being contained to becoming system-wide. This helps explain why incremental interventions, such as adding sourcing options or modest buffers, often fail to change outcomes. When fragility is embedded in the architecture, local improvements have little impact on system behavior compared to structural differences in network design. A striking feature of these patterns is how they evolve as production networks grow. Structures that appear stable on smaller scales can become fragile as complexity increases, which helps explain why supply chains often become more vulnerable over time, despite incremental improvements.

Network shape offers a concise way to interpret these regime differences: Broad upstream dependence combined with expansive downstream product proliferation concentrates systemic risk and facilitates cascades. For example, when many products rely on distinct raw materials or when a single disruption disables multiple downstream variants, failures quickly propagate across the network. More focused architectures with controlled input and output widths restrict the channels through which failures spread, limiting the scope of cascades even under comparable shocks (e.g., when a component affects a small set of downstream variants rather than many). This perspective helps unify insights across different network models and empirical settings \citep{levi2016identifying,nair2009supply}. Network shape thus provides an intuitive explanation for persistent patterns of fragility in complex production systems. Similar shape-driven fragility has been documented in financial contagion and systemic risk theory, where network topology governs whether shocks remain localized or spread system-wide \citep{glasserman2015likely,acemoglu2015systemic}. 

Much of the existing supply chain risk literature emphasizes direct supplier failures and local disruptions, often evaluating vulnerability at the level of individual firms or tiers. The results here suggest that this perspective misses a critical driver of systemic risk:  Cascading effects, rather than direct supplier losses, drive large production disruptions by allowing indirect failures to propagate through dependency structures well beyond the initial shock. The results also distinguish between exposure to disruption and the amplification of disruption once it occurs. Resilience in this framework is less about preventing shocks altogether and more about limiting how much those shocks propagate through the network and escalate.  This explains why assessments focused on individual input sources or tiers tend to underestimate exposure when failure propagation dominates, making a system-wide perspective essential \citep{zhao2018supply}. This perspective is reinforced by treating resilience as a measurable system property rather than an informal label. Doing so clarifies how differences in dependency design and cascade dynamics translate into meaningful differences in tolerance to disruption. 

From a systemic perspective, correlated failures highlight important limits of conventional diversification logic. Much of the existing intuition assumes that expanding the number of sources of supply reduces risk, as seen, for example, in \citet{craighead2007severity}, by averaging out independent failures; however, this reasoning breaks down when sources share common exposure. The results show that even a modest correlation at the failure level can sharply reduce resilience by synchronizing disruptions and pushing networks past cascade thresholds. In this sense, correlation does not merely increase risk smoothly; it can fundamentally alter the behavior of the network. Accounting for shared risk sources aligns the analysis more closely with real disruption patterns, where shocks often affect input sources jointly rather than independently. Independence among sources of supply, therefore, matters as much as their number, particularly in networks otherwise designed to suppress cascades.

The discussion frames resilience as a property that emerges from deliberate dependency design rather than isolated protections or ad hoc responses. Resilience is not merely a qualitative descriptor, but a property that can be systematically compared across network designs by how they shape cascade dynamics and respond to structural thresholds. From this perspective, production networks do not simply experience shocks; they actively determine whether disruptions dissipate or amplify through their dependency structures. Viewing resilience in this way shifts attention from reacting to individual failures toward designing systems that suppress cascades and avoid threshold-driven collapse. As production networks continue to grow in scale and complexity, adopting a system-wide view of resilience becomes increasingly consequential.

In summary, we present a framework that helps supply chain managers assess systemic risk, identify structural vulnerabilities, and prioritize targeted interventions before disruptions occur. Our results yield several theoretical contributions and actionable managerial insights.

\subsection{Theoretical Contributions}

\xpar{In-depth analytical and empirical insights about Resilience} Our developed resilience index provides in-depth analytical and empirical tools to assess the resilience of production networks and to characterize structurally resilient versus fragile networks, going beyond well-established simulation-based approaches that rely on specific disruption scenarios or parameter choices, such as the Risk Exposure Index \citep{levi2016identifying,simchi2014superstorms}.

In contrast, our index admits a closed-form characterization that directly links resilience to fundamental structural network characteristics, enabling comparative statics and theoretical insights that are not accessible through purely simulation-based methods, as studied in several notable works, for instance, \citet{nair2011supply}. This analytical tractability allows us to formally identify the structural conditions under which resilience deteriorates.

More broadly, our framework reframes resilience as an intrinsic property of the production network rather than an outcome of particular shock realizations. This perspective enables ex ante evaluation of network designs and highlights how common structural features observed in real-world supply chains, such as tiered sourcing and asymmetric input dependencies, can endogenously generate heavy-tailed cascade risks.

\xpar{Operationalizing Resilience with Linear Programming}
We develop a fast, scalable LP-based method to estimate cascade size and resilience even in large networks, which may have cyclical dependencies. Researchers can use this to design studies involving multi-faceted \textit{``what-if''} scenarios, estimate the resilience of the network, and study more resilient networks or evaluate investment decisions in existing ones. Contrary to other recent works, see, for instance \citet{chen2023when} and the references therein, our formulation is tractable (easier to compute) and can accommodate a variety of configurations.

\xpar{Accounting for exogenous correlations} Past literature has considered supplier diversification or network structure as an approach to managing risk \citep{elliott2022supply,gabaix2011granular,nair2011supply,kim2015supply}, but has largely overlooked the possibility of the suppliers being correlated at the failure level beyond correlation purely due to structure. We extend this line of research by incorporating a correlation structure among the suppliers. This extension allows researchers to examine more complex scenarios in which the existence of correlations is exogenous to the supply network structure (for example, due to geography and/or geopolitics). This matters because correlated failures can undermine standard risk mitigation strategies, making networks that appear diversified far more fragile in practice. 

\xpar{Implications for network design} Our results show that supply chain resilience depends not only on local characteristics such as node capacity or path length, but also on global structural features, particularly the diversity of raw materials and the degree of sourcing dependencies. When these features become too large, disruptions are more likely to trigger cascading failures, even in well-capacitated networks.

These insights clarify resilience strategies for common network structures. In diamond-shaped supply chains, where multiple products rely on shared upstream suppliers \citep{ang2017disruption,wang2021ignorance}, protecting suppliers in the tiers with the most nodes is effective, as failures in those suppliers disrupt many production paths, due to the fact that the induced sub-networks, including all downstream tiers, are fragile, as they fall under the lineage growth topology. Moreover, in scale-free networks, which rely on a few highly connected hubs \citep{hearnshaw2013complex,kim2015supply}, inherent vulnerabilities arise from the number of raw materials \citep{albert_error_2000,borgatti2009social}. This is because, in the directed preferential attachment model, the number of nodes with zero in-degree, which correspond to the raw materials, is a constant fraction of $K$ \citep{tanimoto2009power}. Finally, small-world networks \citep{watts1998collective}, characterized by high clustering and short paths, are shown to respond better to shocks, consistent with prior literature \citep{basole2014supply}. This is because their short average path length reduces the risk of raw products participating in long supply paths, yielding a small sum of the Katz centralities for the nodes, which, in turn, increases the lower bound on resilience, as shown in \cref{theorem:lp_duality_resilience_2}. 

\subsection{Managerial Insights}

\xpar{Design for resilience with network shape in mind} 
Our theoretical results show that ``rolling pin'' shaped networks, those with a narrow, controlled number of inputs and outputs, are more resilient than ``top hat'' structures with broad raw material bases; see \cref{fig:qualitative_representation}. This suggests that managing the width of both ends of the production network (input diversity and product proliferation) is key.

\textit{SK Hynix's fire} incident at their Wuxi semiconductor plant in September 2013 underscores this extreme concentration risk. By relying on just two fabs for their entire DRAM output, a single fire at one location disrupted 15--30\% of the world’s supply, causing global price spikes and delays for smartphone and PC manufacturers. 
Careful consideration of the network can help increase resilience: Reducing the variety of critical inputs, avoiding such extreme centralization at key architectural points, and aligning product proliferation with available resources can increase resilience without compromising flexibility. 

\xpar{Apply the proposed Resilience metric as a quantitative risk index} To use the framework in practice, the idea is straightforward: first, map the network at a high level to capture factors like how many inputs each product depends on and how much overlap there is across suppliers; then use those features to see whether the network is operating in a stable or fragile range.

The resilience metric provides a rigorous measure of the extent of disruption a supply chain can tolerate while maintaining functionality. Managers can use this as a stress-testing tool to evaluate whether their network can withstand, say, a 10\% supplier failure rate while still producing 95\% of goods. If the metric is low, it signals the need to reinforce vulnerable nodes, build inventory, or reduce complexity. In contrast, when the metric is high, it indicates that the network can withstand significant disruptions without performance loss, specifically without cascading failures that lead to substantial production losses.

\xpar{Ensure supplier diversification accounts for correlated risks}
We extend our analysis to networks with correlated failures and show that when suppliers share risk factors (e.g., geographic or geopolitical proximity and shared raw inputs), the network becomes significantly more fragile. Therefore, managers should not only diversify among suppliers, but also ensure that those suppliers are \emph{independently resilient}. Geographical, operational, or organizational diversification can be the key to avoiding common-mode failures. 

A pertinent example is the \emph{2022 Semiconductor Neon Gas Supply Disruption}: the Russia–Ukraine conflict cut off about 50\% of the world’s semiconductor-grade neon gas virtually overnight. Because Ukraine’s leading producers (Ingas and Cryoin) were concentrated in a single conflict zone, this disruption created a cascading failure risk across global chip fabrication, illustrating how a geopolitically exposed source can undermine high-tech manufacturing networks despite local stockpiles.

\subsection{Future Research Directions and Limitations}

This paper develops a structural framework for understanding resilience in production networks, showing how cascading failures and systemic fragility emerge from network architecture rather than isolated disruptions. Our framework provides a simple way to diagnose structural vulnerability early, before detailed simulation or full operational modeling is required.

By characterizing resilient and fragile regimes in terms of global structural features, the analysis explains why some networks absorb shocks, while others amplify them. Because the framework intentionally abstracts from many operational details to isolate the role of structure and dependency, it also suggests several directions for future research.

An avenue is to extend the framework beyond production networks to other interconnected economic systems. Financial networks, input–output economies, and innovation ecosystems exhibit similar dependency structures in which shocks propagate across interlinked agents. Applying the resilience metric in these contexts could help determine whether the structural regimes identified here also govern systemic risk in other domains, or else their varying dependency structures give rise to distinct fragility patterns.

The second direction involves incorporating economic and operational primitives such as costs, capacities, quantities, and prices. The current model focuses on the feasibility and structural limits of production, rather than optimal operation. Embedding the resilience metric within optimization or simulation models would allow researchers to study how structural fragility constrains efficiency and profitability under disruption and to analyze trade-offs between cost efficiency and resilience.

 The resilience metric captures the ability of a network to absorb shocks and avoid large-scale cascades, but does not model post-disruption adjustment, learning, or reconfiguration. The present framework abstracts away from recovery dynamics and endogenous firm behavior, and its analysis relies on simplifying assumptions about static firm behavior. Future empirical and computational work could examine how firms respond to structural vulnerability over time and whether heterogeneity in product requirements, decentralized responses, as well as endogenous supplier selection, diversification, and buffering decisions, amplify or dampen cascading behavior and systemic vulnerability relative to the metrics identified here.

Finally, these directions highlight how the structural perspective developed here can benchmark future empirical testing and computational modeling of resilience in interconnected systems.

\ACKNOWLEDGMENT{

MP was partially supported by a LinkedIn Ph.D. Fellowship, a grant from the A.G. Leventis Foundation, and a grant from the Gerondelis Foundation. MAR was partially supported by the National Science Foundation under Grant
No. 2424684. 

\noindent Data and code can be found at: 

\begin{center}
    \url{https://github.com/papachristoumarios/supply-chain-resilience}
\end{center}

}


\bibliographystyle{pomsref}

 \let\oldbibliography\thebibliography
 \renewcommand{\thebibliography}[1]{%
    \oldbibliography{#1}%
    \baselineskip14pt 
    \setlength{\itemsep}{10pt}
 }
\bibliography{references}

\ECSwitch 

\ECHead{E-Companion for ``Structural Measures of Resilience for Supply Chains," Papachristou et al.}
\crefalias{section}{appendix}
\crefalias{subsection}{appendix}
\crefalias{subsubsection}{appendix}

\makeatletter

\let\EC@origaddcontentsline\addcontentsline

\renewcommand{\addcontentsline}[3]{%
  \def\EC@type{#2}%
  \def\EC@section{section}%
  \ifx\EC@type\EC@section
    \EC@origaddcontentsline{ectoc}{#2}{#3}%
  \fi
}

\newcommand{\ECtableofcontents}{%
  \section*{E-Companion Table of Contents}
\begingroup
  \hypersetup{linktoc=none}

  \setcounter{tocdepth}{1}%
  \renewcommand*\l@section{\@dottedtocline{1}{0em}{3em}}%

  \makeatletter
  \renewcommand*\numberline[1]{%
    \hb@xt@\@tempdima{\hyperlink{\Hy@tocdestname}{##1}\hfil}%
  }%
  \makeatother

  \@starttoc{ectoc}%
\endgroup
}

\makeatother



%
%


\makeatletter
\renewcommand{\p@subfigure}{\thefigure}      
\renewcommand{\thesubfigure}{(\alph{subfigure})}
\makeatother


\noindent This E-Companion is organized as follows: \cref{app:notation_table} provides a notation table for the quantities used throughout the paper. \cref{app:dichotomy} presents equivalent characterizations of fragility. Next, \cref{app:examples} discusses the first case of a resilient network (the two-layer network), the reachability collapse, and the emergence of power laws. Afterwards, \cref{app:main_result} provides a proof of the main result of the paper that characterizes networks into resilient (with fixed sourcing requirement, sourcing influence, fixed raw materials, and finished goods) and fragile (large base of raw materials). Then, \cref{app:architectures} introduces several network architectures and the regime to which each belongs. \cref{app:linear_program} develops the principles behind the LP-based formulation of the resilience and proves the main results, and \cref{app:rei} establishes connections to the REI. Finally, \cref{app:generalized_resilience} provides general definitions of resilience that account for exogenous correlation structures. Finally, \cref{app:experiments_addendum} provides additional experiments with input-output networks for different countries (world economy I-O tables). 

\ECtableofcontents

\clearpage

\section{Notation Table} \label{app:notation_table}

\begin{table}[!h]
\centering
\footnotesize
\caption{Summary of key notation} \label{tab:notation}
\begin{tabular}{@{}ll@{}}
\toprule
Symbol & Meaning \\ \midrule
$[K]$ & $K = \{ 1, \dots, K \}$ \\ [2pt]
$\| v \|_p$ or $\| V \|_p$ & $p$-norm of vector/matrix ($p = 2$ if $p$ omitted from notation \\ [2pt]
$\zero$ / $\one$ & column vector of all-0s/all-1s \\ [2pt] 
$\zero_S$ / $\one_S$ & column vector where the entries belonging to $S$ are zero/one \\ [2pt] 
$\min\{x , y\}$ & coordinate-wise min between vector $x, y$ \\ [2pt]
$\max\{x , y\}$ & coordinate-wise max between vector $x, y$ \\ [2pt]
 $\ge, \le, >, <$ & coordinate-wise ordering \\ [2pt]
 $\wedge , \vee$ & logical operations AND and OR \\ [2pt]
 $x_n \asymp y_n$ & $\lim_{n \to \infty} x_n / y_n = 1$ \\ [2pt] \midrule

$\cG$ & Production network (network of supply relationships) \\ [2pt]
$\cK$ & Set of products \\ [2pt]
$\cS(i)$ & Suppliers of product $i$ \\ [2pt]
$\cN(i)$ & Inputs of product $i$ \\ [2pt]
$\cG^R$ & Reverse production network (network of sourcing relationships) \\ [2pt]
$K$ & Total number of products (nodes) in the production network  \\[2pt]
$M$ & Total number of edges in the production network \\ [2pt]

$r$ & Number of \emph{raw products} (source nodes) \\[2pt]

$c$ & Number of \emph{finished goods}  \\[2pt]

$n_i$ & Number of suppliers per product. In the simple model $n_i = n$ \\[2pt]

$m$ & \textit{Sourcing requirement:} maximum \# of inputs (in-degree) any product requires  \\[2pt]

$\mu$ & \textit{Sourcing influence:} maximum \# of outputs (out-degree) any product influences  \\[2pt] \midrule

$x\in(0,1)$ & Per‑supplier failure probability \\[2pt]

$F$ & Random variable: number of failed products after cascading effects finish \\[2pt]

$S = K-F$ & Number of products that remain producible after the cascade \\[2pt]

$\varepsilon$ & Tolerable fraction of product failures in the resilience definition \\[2pt]

$R_\cG(\varepsilon)$ & \emph{Resilience metric of $\cG$ given fixed fraction $\varepsilon$}: largest $x$ s.t.\ $\Pr[F \le \varepsilon K] \ge 1-\tfrac1K$ \\[2pt]

\bottomrule
\end{tabular}
\end{table}

\newpage

\section{Equivalent characterizations of fragility} \label{app:dichotomy}

The resilience $R_{\cG}(\varepsilon)$ is the highest percolation probability such that at least $(1 - \varepsilon) K$ products survive with probability at least $1 - 1/K$. The dichotomy between resilient and fragile networks is strengthened by noting that if $\lim_{K \to \infty} R_{\cG}(\varepsilon) = 0$ for some fixed $\varepsilon \in (0, 1)$, then $\lim_{K \to \infty} R_{\cG}(\varepsilon) = 0$ for any fixed $\varepsilon \in (0, 1)$. Formally:

\begin{proposition}[Equivalent characterizations of fragility] \label{prop:diochotomy}
    If $\lim_{K \to \infty} R_{\cG}(\varepsilon) = 0$ for some fixed $\varepsilon \in (0, 1)$, then $\lim_{K \to \infty} R_{\cG}(\varepsilon) = 0$ for any fixed $\varepsilon \in (0, 1)$.
\end{proposition}

\xpar{Proof} Suppose there exists some fixed \(\varepsilon_0 \in (0,1)\) such that $\lim_{K \to \infty} R_\cG(\varepsilon_0) = 0$. We aim to show that for any other \(\varepsilon \in (0,1)\), it also holds that $\lim_{K \to \infty} R_\cG(\varepsilon) = 0$.  First, observe that \(R_\cG(\varepsilon)\) is monotonically non-decreasing in \(\varepsilon\). Hence, for all \(\varepsilon < \varepsilon_0\), we have
\[
0 \le R_\cG(\varepsilon) \leq R_\cG(\varepsilon_0).
\]
Taking the limit as \(K \to \infty\), we get
\[
\lim_{K \to \infty} R_\cG(\varepsilon) \leq \lim_{K \to \infty} R_\cG(\varepsilon_0) = 0,
\]
which implies
\[
\lim_{K \to \infty} R_\cG(\varepsilon) = 0 \quad \text{for all } \varepsilon < \varepsilon_0.
\]

Now, suppose for the sake of contradiction that there exists \(\varepsilon_1 > \varepsilon_0\) such that
\[
\limsup_{K \to \infty} R_\cG(\varepsilon_1) > 0.
\]
Then there exists some percolation probability $x_0 \in (0, 1)$ and an infinite subsequence $\{K_m\}$ such that
\[
\Pr_{x_0}[S \geq (1 - \varepsilon_1) K_m] \geq 1 - \frac{1}{K_m}.
\]
Since $(1 - \varepsilon_0)K_m < (1 - \varepsilon_1)K_m$, it follows that
\[
\Pr[S \geq (1 - \varepsilon_0) K_m] \geq \Pr[S \geq (1 - \varepsilon_1) K_m] \geq 1 - \frac{1}{K_m}.
\]
This implies that \(x_0\) is feasible for \(R_\cG(\varepsilon_0)\) for infinitely many \(K_m\), contradicting the assumption that \(\lim_{K \to \infty} R_\cG(\varepsilon_0) = 0\).

Therefore, our assumption must be false, and it must hold that: 
\[
\lim_{K \to \infty} R_\cG(\varepsilon) = 0 \quad \text{for all } \varepsilon \in (0,1).
\]

\newpage

\section{Driving factors of resilience and fragility} \label{app:examples}

\begin{figure}[!h]
    \centering
    \subfigure[Two-tier network (resilient) \label{fig:parallel_products}]{
    \begin{tikzpicture}[]
        \Vertex[x=-1,y=0,Pseudo]{rz}
        \Vertex[x=4,y=0,Pseudo]{rzz}

        \Vertex[x=-0.25,y=0,Pseudo,label={$\cR$}]{r0}
        \Vertex[x=4.25,y=0,Pseudo]{x}

        \Vertex[x=1,y=0,color=rose]{r1}
        \Vertex[x=2,y=0]{r2}
        \Vertex[x=3,y=0]{r3}

        \Vertex[x=-0.25,y=1.25,Pseudo,label={$\cC$}]{c0}

        \Vertex[x=1.5,y=1.25,color=pink]{c1}
        \Vertex[x=2.5,y=1.25,color=pink]{c2}

        \Edge[color=black,Direct](r1)(c1)
        \Edge[color=black,Direct](r2)(c1)
        \Edge[color=black,Direct](r3)(c1)
        \Edge[color=black,Direct](r1)(c2)
        \Edge[color=black,Direct](r2)(c2)
        \Edge[color=black,Direct](r3)(c2)

    \end{tikzpicture}}
    \subfigure[Hierarchical one-directional input-output network (fragile) \label{fig:random_dag_2}]{
    \begin{tikzpicture}[transform shape]
        \Vertex[x=-2, y=0, label=$1$, color=rose]{u1}
        \Vertex[x=0, y=0, label=$2$]{u2}
        \Vertex[x=2, y=0, label=$3$, color=pink]{u3}
        \Edge[Direct, color=black, label=$1- p$, style={dashed}](u1)(u2)
        \Edge[Direct, color=black, label=$p$, ](u2)(u3)
        \Edge[Direct, color=black, label=$p$, bend=30](u1)(u3) 
    \end{tikzpicture}} \quad

    \caption{A resilient network (a) and a fragile network due to reachability collapse (b). The spontaneous failures (sources of failure) are colored in dark red, the resulting failures due to cascading effects are colored pink, and the unaffected nodes are colored light blue.}
    \shrink
    \label{fig:parallel_products_rdag}
\end{figure}

\subsection{Two-tier network}

To show that there is at least a resilient network, we rely on a two-tier network (cf. \cite{bimpikis2019supply,willems2008data} for related work that motivates this architecture). Here, our objective is to produce a set $\cC$ of finished goods, and each requires $m$ inputs.  We also introduce supply dependencies among raw materials, assuming that each raw material can supply $\mu$ products. \cref{fig:parallel_products} shows an example of this network together with an instance of the percolation process (the affected nodes are colored pink). Here, it is interesting to study both the resilience of the whole graph, i.e., the graph with vertex set $\cC \cup \cR$, as well as the resilience of the finished goods $\cC$ alone. We show that if the source requirement $m$ and the sourcing influence $\mu$ between the products are independent of $K$, then the production network is resilient. The resilience metric is lower bounded by $\left ( \frac {\varepsilon} {2 (\mu+1)m} \right )^{1/n}$ in both cases (finished goods alone or together with raw materials), as the number of products goes to infinity. 

We now state the result for the two-tier network (proof in \cref{app:proof:theorem:parallel_products}):

\begin{theorem} \label{theorem:parallel_products}
    Let $\cG$ consist of a two-tier network with $c$ finished goods and assume that $r = o(K^{2/3})$ raw materials can produce these products, and each raw material supplies at most $\mu$ finished goods, and each final good requires at most $m$ raw materials. 
    Then, the resilience satisfies $$R_{\cG}(\varepsilon) \ge \left ( \frac {\varepsilon} {2 (\mu + 1)m} + \sqrt {\frac {\log K} {2mK}} \right )^{1/n}.$$ Subsequently, if $\varepsilon$, $\mu$, and $m$ are constant with respect to $K$, then the network is resilient, with resilience scaling as $\Omega\left( \left( \frac{\varepsilon}{\mu m} \right)^{1/n} \right)$.
\end{theorem}

\subsubsection{Proof of \texorpdfstring{\cref{theorem:parallel_products}}{theorem:parallelproducts}: Two-tier network}\label{app:proof:theorem:parallel_products}

Let $F_{\cR}$ (resp. $F_{\cC}$) be the number of failed raw materials (resp. complex products). The quantity
\begin{align}
    z = \sup \{ x : (0 , 1) : \Pr [F_{\cC} \ge \varepsilon K] \le 1 / K \},
\end{align} is a lower bound to the resilience $R_{\cG}(\varepsilon)$. We have that $\{ F_{\cC} \ge \varepsilon K \} \implies \{ F_{\cR} \ge {\varepsilon K} / \mu \} $. Let $\delta = \frac {1} {z^n} \sqrt {\frac {\log K} {2r}}$ and let $\frac {\varepsilon K} {\mu} = (1 + \delta) \ev {} {F_R} = (1 + \delta) r z^n$. We apply the one-sided Chernoff bound and get $\Pr [F_{\cC} \ge \varepsilon K] \le \Pr \left [F_{\cR} \ge \frac {\varepsilon K} {\mu} \right ] = \Pr \left [F_{\cR} \ge (1 + \delta) \ev {} {F_{\cR}} \right ] \le e^{-2 \delta^2 \ev {} {F_{\cR}}^2 / r} = \frac 1 K$. Finally, by resolving the last equation $(1 + \delta)r z^n = \frac {\varepsilon K} {\mu}$, we get that $z = \left ( \frac {\varepsilon K} {r \mu} + \sqrt {\frac {\log K} {2 r}} \right )^{1/n}$. Also, we have that $r \le mK$ and therefore $R_{\cG}(\varepsilon) \ge z \ge \left ( \frac {\varepsilon} {\mu m} + \sqrt {\frac {\log K} {2mK}} \right )^{1/n}$.

\subsection{Cascading failures and emergence of power laws in  random hierarchical, one-directional input-output production networks} \label{app:power_laws}

Networks as simple as a hierarchical one-directional input-output random production network can exhibit cascade sizes that follow a power law, namely, the average cascade size $F$ is dominated by a few very large cascades rather than the many smaller ones. Our motivation stems from the network science literature \cite{thadakamaila2004survivability,albert_error_2000,wegrzycki2017cascade,dobson2005loading,nair2022fundamentals}, as well as the supply chain resilience literature, which has extensively studied random networks such as preferential attachment networks and random graphs~\citep{kim2015supply,magnanti_inventory_2006,nair2011supply}, and multi-tier networks \citep{bimpikis2019supply}.

Production networks typically exhibit a ``hierarchical order'', where complex products depend on the supply of simpler ones. In a network, raw materials and component parts are typically transformed into intermediate products and finished products through a series of production processes. For example, the production of a car engine may depend on the availability of simpler components such as computer chips. In its simplest form, this behavior can be captured by a hierarchical one-directional production network. We control the density of the network by a density parameter $p \in (0, 1)$. More specifically, the network is constructed as follows: The $K$ products are ordered from $1$ to $K$, and each product $i$ can tentatively depend on all the products before it. For a product $i$, to determine whether it depends on a product $j \le i$, a coin of bias $p$ is flipped. This structure yields a random acyclic graph network, which can be equivalently represented as a multi-tier network.

The following theorem characterizes the distribution of cascade size $F$, showing that it admits a power-law lower bound with exponent one.
Our proof in \cref{app:proof:theorem:power_laws} follows arguments similar to those made by \citet{wegrzycki2017cascade}. 

\begin{theorem} \label{theorem:power_laws}
    Let $\cG$ be a hierarchical one-directional input-output random network as described above. Then, for large enough $K$, the total number of failures has the following asymptotic distribution: 
    \begin{align} \label{eq:power_law}
    \Pr[F = f] \asymp \frac {x^n} {K(1 - (1 - x^n)(1 - p)^f)} \ge \underbrace{\frac {x^n} {K \left (1 + (1 - x^n) \log \left ( \frac 1 {1 - p} \right ) \right )}}_{C(K, p, x, n) > 0} \frac{1}{f}, 
    \end{align}where $C(K, p, x, n) > 0$ is a constant that depends on the model parameters.
\end{theorem}

The above result implies that $F$ has a power-law tail lower bound, i.e., $\Pr [F \ge f] = \sum_{f' = 1}^K \Pr[F=f'] \ge \Pr[F = f] \ge {C}/ {f}$. Having proven \cref{theorem:power_laws}, the next question is: \emph{How can we calculate the probability that a fractional cascade emerges?} Conceptually, we want to make the probability that there are at least $\varepsilon K$ failures go to 0 as $K$ grows for any fixed $\varepsilon \in (0, 1)$. Given the result of \eqref{eq:power_law}, this is only possible if we set $x = 0$. Conceptually, we could characterize the hierarchical one-directional input-output random production network as a \emph{``architecturally fragile''} architecture, because even the tiniest shock can be devastating to the entire network. 



\subsubsection{Proof of \texorpdfstring{\cref{theorem:power_laws}}{theorem:powerlaws}: Power law cascade size and fragility}\label{app:proof:theorem:power_laws}

    For notational convenience, we define $\mathsf{rdag}(K, p)$ as the hierarchical one-directional input-output production network (also referred to as ``random directed acyclic graph'' or ``random DAG'') with $K$ products and edge probability $p$.
    
    Let $\cG \sim \mathsf{rdag}(K, p)$, have nodes $1, 2, \dots, K$ (in this order). Let $P_{k, f}$ be the probability of having $f$ distinct failures in the hierarchical one-directional input-output random production network with $k$ nodes, conditioned on a failure on node 1. We have $P_{1, 1} = 1$ and $P_{k, f} = 0$ for $f > i$ and $f < 1$. To devise a recurrence formula for $P_{i, f}$, note that for the $i$-th node we have the following: 

    \begin{compactenum}
        \item $i$ is affected by the cascade. That happens if at least one connection to $f - 1$ infected nodes up to node $i - 1$, or if $i$ fails due to percolation. This happens with probability $\big \{ [ 1 - (1 - p)^{f - 1} ] + x^n - [ 1 - (1 - p)^{f - 1} ] x^n \big \} P_{k - 1, f - 1} = \big [ 1 - (1 - p)^{f - 1}(1 - x^n) \big ]P_{k - 1, f - 1}$.
   
        \item $i$ is not affected by the cascade. That means that $i$ has $\ge 1$ functional supplier, and no connections to (sourcing from) the $f$ infected nodes. That happens with probability $(1 - p)^f (1 - x^n) P_{k - 1, f}$.
        
    \end{compactenum}

This produces the following recurrence,
    \begin{align} \label{eq:recurrence}
        P_{k, f} = \left [ 1 - (1 - p)^{f - 1}(1 - x^n) \right ]P_{k - 1, f - 1} +   (1 - p)^f (1 - x^n) P_{k - 1, f}.
    \end{align}

    To determine the distribution of the failures $F$, we assume that the cascade can start at any node with equal probability $1 / K$ and that the probability of failure for any given node is $x^n$. Also, since a cascade in $\mathsf{rdag}(K, p)$ starting from node 1 is the same as starting from node $i$ in $\mathsf{rdag}(K + i - 1, p)$, the distribution obeys the following:
    \begin{align}
        \Pr [F = f] = \frac {x^n} {K} \sum_{k \in [K]} P_{k, f}
    \end{align}

    We let $Q_{K, f} =  \sum_{k \in [K]} P_{k, f}$, so that $\Pr [F = f] = \frac {x^n} {K} Q_{K, f}$. Summing \cref{eq:recurrence} for $k \in [K]$ and using the definition of $Q_{K, f}$ yields a recurrence relation for $Q_{K, f}$, that is, $Q_{K, f} = (1 - p)^f (1 - x^n) Q_{K - 1, f} + [ 1 - (1 - p)^{f - 1} ] (1 - x^n) Q_{K - 1, f - 1}$. We take the limit for $K$ large, let $q_f = \lim_{K \to \infty} Q_{K, f}$, and solve the recurrence $q_{f} = (1 - p)^f (1 - x^n) q_{f} + [ 1 - (1 - p)^{f - 1} ] (1 - x^n) q_{f - 1}$ to get $q_f = \frac {1} {1 - (1 - x^n) (1 - p)^f}$. Since $e^x \ge x$, we have $(1 - p)^f \le \log (1 - p) f$ and subsequently $1 - (1 - x^n) (1 - p)^f \le f \left (1 + (1 - x^n) \log \left ( \frac 1 {1 - p} \right ) \right )$. Therefore, for sufficiently large $K$, 
    \begin{align*}
        \Pr [F = f] & \asymp \frac {x^n q_f} {K} \asymp \frac {x^n} {K(1 - (1 - x^n) (1 - p)^f)} \ge \underbrace {\frac {x^n} {K \left (1 + (1 - x^n) \log \left ( \frac 1 {1 - p} \right ) \right )}}_{C(K, p, x, n) > 0} \frac 1 f.
    \end{align*}

\newpage

\section{Main Result: Resilience bounds based on global graph features} \label{app:main_result}

We derive resilience bounds for arbitrary networks using global graph features. Specifically, the global graph features governing the bounds are the source requirement ($m$), the sourcing influence ($\mu$), the number of raw products ($r$), and the number of finished goods ($c$). We restate and then prove the main result
    
\begin{theorem}[Bounding the resilience with global graph features] \label{theorem:resilience_graph_statistics}
    For any network $\cG$, the resilience satisfies
    \begin{align*}
         \left ( \frac {\varepsilon} {2(m + r) (\mu + c)} + \sqrt {\frac {\log K} {rK}} \right )^{1/n}  \le R_{\cG}(\varepsilon) \le \left [ \frac {(1 - \varepsilon) K} {\sqrt 2 r^{3/2} + \sqrt {r \log K}} \right ]^{1/n}.
    \end{align*} Thus, if $r = \omega (K^{2/3})$, then the network is fragile. Conversely, if $r$, $c$, $m$ and $\mu$ are constant, the network is operationally resilient for $\delta = \left ( \frac {\varepsilon} {2(m + r) (\mu + c)} \right )^{1/n}   $. 
    
\end{theorem}

\subsection{Proof of \texorpdfstring{\cref{theorem:resilience_graph_statistics}}{theoremresilienceupperbound}: Resilience bounds based on global graph features} \label{app:theorem:resilience_graph_statistics}

\xpar{Lower bound} We construct $\cG'$ as follows: We start by $\cG$, and additionally, for each final good $j \in \cC$ in $\cG$ and each raw product $i$ we add an edge from $i$ to $j$ in $\cG'$ if there is a path from $i$ to $j$ in $\cG$. By construction $\cG \subseteq \cG'$ and thus $R_{\cG}(\varepsilon) \ge R_{\cG'} (\varepsilon)$. In $\cG'$ we have that the sourcing requirement $m'$ satisfies $m' \le m + r$ and the sourcing influence satisfies $\mu' \le \mu + c$. The result follows by applying \cref{theorem:parallel_products} to $\cG'$ and using the above inequalities. The value of $\delta$ comes from the fact that the term $\sqrt {\frac {\log K} {rK}}$ is always non-negative.

\xpar{Upper bound} Let $F_{\cR}$ correspond to the number of failures of the raw products. Let $x_{\cR} = \sup \{ x \in (0, 1) : \Pr [F_{\cR} \le (1 - \varepsilon) K] \ge 1 - 1 / K \}$. $x_{\cR}$ is an upper bound to $R_{\cG}(\varepsilon)$ since the event $\{ F \le (1 - \varepsilon) K \}$ implies $\{ F_{\cR} \le (1 - \varepsilon) K \}$. Moreover, by the Chernoff bound, we have that for any $\varepsilon' > 0$:

\begin{align*}
    \Pr \left [ \frac {F_{\cR}} {r} \le (1 + \varepsilon') x^n \right ] \ge 1 - e^{-2 (\varepsilon')^2r }.
\end{align*}

Letting $(1 + \varepsilon') x^n = (1 - \varepsilon) K / r$, $e^{-2 r (\varepsilon')^2} = 1/K$ and solving for $x$ would produce an upper bound to $x_{\cR}$ and subsequently an upper bound to $R_{\cG}(\varepsilon)$. Solving the system yields the following result. 

\begin{align*}
    \varepsilon' = \sqrt {\frac {\log(K)} {2 r}} \qquad \text{and} \qquad x = \left [ \frac {(1 - \varepsilon) K} {\sqrt {2} r^{3/2} + \sqrt {r \log K}} \right ]^{1/n}.
\end{align*}

To determine conditions where the resilience goes to zero as $K \to \infty$ we focus on the denominator of the upper bound: First, the term $\sqrt {r \log K}$ is at most $\sqrt {K \log K} < K$ and cannot grow faster than $K$. Second, the term $\sqrt 2 r^{3/2}$ grows faster than $K$ as long as $r = \omega \left ( K^{2/3} \right )$.

\newpage

\section{Overview of results for several families of networks} \label{app:architectures}

\begin{figure}[!h]
    \centering
    \includegraphics[width=0.75\linewidth]{figures/resilience_slide.pdf}
    \caption{\textbf{Illustration of fragile and resilient network structures.} On the one hand, the key characteristic of the \textbf{fragile} networks is having many raw materials ($r$), which makes the network look like a ``top hat''. Two forms of fragile networks are shown -- one corresponding to lineage growth and one corresponding to a multi-tier network with fixed layers and a large-width network. Among the two, the lineage growth exhibits higher fragility due to higher centralization. In the cases of more centralized networks, failures cause larger cascades in general. \\
    On the other hand, the key characteristic of \textbf{resilient} networks is that they have few raw materials ($r$) and finished goods ($c$) as well as sparsity, i.e., sourcing requirements ($m$) and sourcing influence ($\mu$) remain fixed, resembling a ``rolling pin'' structure. An example of a fragile network is a multi-tier network with a fixed width and multiple layers, characterized by fixed connections. Finally, networks can also be partially resilient, as in the case of an inverted lineage growth network.} 
\label{fig:qualitative_representation_ec}
\end{figure}

\noindent In addition to our main results, we study the resilience of several stylized production networks that help to demonstrate the interplay between these parameters. These structures and their parameterization are as follows.

\begin{compactenum}
    \item \emph{Hierarchical one-directional input-output networks (Global Macro-Economies and Cross-Sectoral Dependencies):} $K$ products ordered as $1,2,\ldots, K$ and connected with directed edge probability $p$, conditioned on maintaining topological order; see \cref{app:power_laws}. 
    
    These networks represent the ``broad-scale'' view of national or global economies where products are ordered by their degree of processing (e.g., from raw ores to semiconductors to consumer electronics). The random directed edges represent the vast array of potential input requirements across sectors.
    
    \xpar{Insight} Our finding (\cref{theorem:power_laws}) that these networks are inherently fragile reflects the systemic risk in globalized trade. As a country's industrial base becomes more complex, the likelihood that any single final product remains ``reachable'' from a functional primary source diminishes, justifying the need for high-level decoupling strategies at the national policy level.
    
    \item \emph{Two-tier network with dependencies (Commodity-Based Manufacturing and ``Flat'' Supply Chains):} A set of $r$ raw materials that are used to produce $K - r$ finished goods, each complex product requires $m$ raw inputs (source dependencies), and each raw material is sourced by $\mu$ finished goods (supply dependencies). 
    
    This model describes industries where finished goods are directly assembled from a set of raw materials or standard commodities. Examples include the basic food processing industry or simple metal fabrication, where the distance between the extraction site and the final assembler is minimal.

    \xpar{Insight} We prove (\cref{theorem:parallel_products}) that these networks are resilient. Managerially, this justifies the use of ``flat'' sourcing structures. When the depth of the supply chain is limited to two tiers, the risk of cascading failures is naturally boxed in, making this the safest, albeit often the least specialized, architecture.

    \item \emph{Hierarchical production networks:} 

    \begin{compactitem}
    \item \emph{(i) Lineage growth (Specialized Component Assembly):}  A tree production network with depth $D$ and fanout $m \ge 1$, that is, each product has $m$ inputs, independent of other products. Represents complex assembly industries like aerospace or automotive. A single final product (the root) depends on a vast ``fanout'' of specialized sub-components, which in turn depend on further tiers of raw materials. Failure propagates ``backwards'' from small upstream suppliers to large downstream OEMs (original equipment manufacturers). 
    
    \item \emph{(ii) Inverted lineage growth (Resource Distribution):} A tree production network generated by a branching process (also known as the Galton-Watson process) with branching distribution $\cD$ with mean $\mu$, which has (random) extinction time $\tau$ and probability of extinction $\eta^*(\cD) = \Pr [\tau < \infty]$. In this regime, the percolation starts from the root node (raw material) and proceeds to the leaves (see \cref{app:forward_forward_tree} and \cref{subfig:forward_forward}). 
    
    This represents the ``distribution'' or ``utility'' view, such as a single refinery or mine (the root) supplying a wide range of downstream chemical products or industries. Here, the risk is a ``forward'' cascade where a single upstream shock halts an entire industrial ecosystem.
    \end{compactitem}

    \xpar{Insight} These two archetypes highlight the ``Super-Spreader'' risk. In Lineage growth, the OEM is vulnerable to the sheer number of upstream nodes; in Inverted growth, the entire network is at the mercy of a single root bottleneck.

    We study hierarchical networks in \cref{app:hierarchical}. 

    \item \emph{Multi-tier network (Multi-Stage Process Industries such as Chemicals, Textiles, or Pharmaceuticals):} We study the multi-tier network with width $r$, depth $D$, and probability of edge creation $p$. The multi-tier network corresponds to a network with $D$ tiers, each of which has $r$ products, and there are only connections between tier $d$ and $d + 1$ for all $d \in \{ 1, 2, \dots, D - 1\}$. These connections correspond to random edges that are sampled independently with probability $p$. 
    
    The multi-tier network structure represents industries that move through distinct, sequential ``stages'' or ``tiers'' of processing. In chemical manufacturing, Tier 1 might be raw petrochemicals, Tier 2 intermediate polymers, and Tier 3 specialized resins. 
    
    \xpar{Insight} The multi-tier network model allows managers to study the ``width'' vs. ``depth'' trade-off. Our analysis of multi-tier networks helps quantify how much redundancy (width $r$) is needed within each processing tier to prevent a failure in one stage from completely disconnecting the next, providing a blueprint for designing robust multi-stage industrial processes.
    
    We study the multi-tier network in \Cref{app:trellis}.
\end{compactenum} 
 
\cref{tab:resiliences} and \cref{fig:qualitative_representation_ec} summarize our results for the resilience of the above architectures and the corresponding theorems where the results are proved. 



\begin{table}[t]
    \centering
    \tiny
    \begin{tabular}{p{3cm}lll}
    \toprule
        \textbf{Topology} &\textbf{Is the network Fragile?} & \textbf{Main Result} & \textbf{Resilience} \\
    \midrule
        Hierarchical one-directional input-output networks & Yes & \cref{theorem:power_laws}  & $\Oeps { \left ( \frac 1 K \right )^{1/n}}$ \\ 
        Two-tier network & No & \cref{theorem:parallel_products}  & $\Omegaeps {\left ( \frac {1} {\mu m} \right )^{1/n}}$ \\

        Lineage Growth & Yes & \cref{theorem:tree_resilience,eq:quick} & $\Oeps { \left ( \frac {m \log K} {K}\right )^{1/n} }$ for $m \ge 2$ \\
        &   & & $\Oeps {\left ( \frac {1} {K} \right )^{1/n}}$ for $m = 1$\\
        Inverted Lineage Growth & No, with probability $1 - \eta^*$ & \cref{theorem:gw_resilience} & See \cref{theorem:gw_resilience} \\
        Multi-tier network w/ fixed width, many layers, and fixed connections & No & \cref{proposition:trellis} & $\Omegaeps {\left ( \frac {1} {r(1 + p)} \right )^{2/n}}$ \\
        Multi-tier network w/ fixed layer and large width & Yes & \cref{proposition:trellis} & $\Oeps {\left ( \frac {1} {\sqrt K} \right )^{1/n}}$ \\ \midrule
        Any Network & Yes if $r = \omega(K^{2/3})$ & \cref{theorem:resilience_graph_statistics}  & $\Oeps {\left ( \frac {K} {r^{3/2}} \right )^{1/n}}$ \\
         & No, for $m$, $\mu$, $r$, and $c$ fixed & & $\Omegaeps {\left ( \frac {1} {(m + r) (\mu + c)}\right )^{1/n}}$ \\
     \bottomrule 
    \end{tabular}
    \caption{Resilience bounds for several architectures. The notation $\Omegaeps {\cdot}, \Oeps {\cdot}$ suppresses the dependence on $\varepsilon$ (as we assume that $\varepsilon$ is fixed in our definition of resilience).}
    \label{tab:resiliences}
\end{table}

\subsection{Hierarchical Production Networks} \label{app:hierarchical}

A network can be organized hierarchically, with different levels representing different stages of the production process. The raw materials or components that go into the production of a product are at the bottom of the hierarchy, and the finished product is at the top. Each level of the hierarchy represents a stage of production in which materials or components are transformed into a more advanced or finished product \citep{elliott2022supply}. This hierarchical structure helps visualize the flow of materials and information throughout the network and identify potential bottlenecks or inefficiencies. Another possible hierarchy involves the production of a growing number of finished goods from raw materials. In this section, we study these two hierarchies, which we call \emph{lineage growth} and \emph{inverted lineage growth} (referring to the directions of percolation with respect to network growth), production networks visualized in \cref{fig:tree_percolation}. More specifically, we consider:
\begin{compactitem}
    \item The \emph{lineage growth network} (\cref{subfig:forward_backward}) where the tree grows from the root, but percolation begins at the leaves and proceeds backwards to the root. For the scope of this paper, we study lineage growth in deterministic $m$-ary trees. In \cref{sec:forward_backward_tree} we prove that, as expected, such networks are, in fact, fragile and give lower bounds on resilience. 
    \item The \emph{inverted lineage growth network} (\cref{subfig:forward_forward}) at which the tree grows from the root, and then the percolation starts from the root and proceeds to the leaves. Here, the production network is generated by a stochastic \emph{branching process}; the Galton-Watson (GW) process. In \cref{app:forward_forward_tree}, we prove that, under specific conditions, such networks are fragile with nonzero probability, but otherwise resilient.  
\end{compactitem}

\begin{figure}[t]
    \centering
    \subfigure[Lineage growth\label{subfig:forward_backward}]{
    \begin{tikzpicture}[transform shape, scale=0.85]
        \Vertex[x=0, y=0, color=pink]{u1}
        \Vertex[x=1, y=0.5, color=pink]{u3}
        \Vertex[x=1, y=-0.5]{u2}
        \Vertex[x=2, y=-1.5]{u4}
        \Vertex[x=2, y=-0.5]{u5}
        \Vertex[x=2, y=0.5, color=rose]{u6}
        \Vertex[x=2, y=1.5, color=rose]{u7}

        \Edge[color=black, Direct](u2)(u1)
        \Edge[color=black, Direct](u3)(u1)
        \Edge[color=black, Direct](u4)(u2)
        \Edge[color=black, Direct](u5)(u2)
        \Edge[color=black, Direct](u6)(u3)
        \Edge[color=black, Direct](u7)(u3)

        \Vertex[x=-2, y=-2, Pseudo]{x1}
        \Vertex[x=3.5, y=-2, Pseudo]{x2}
        \Vertex[x=-2, y=2, Pseudo]{y1}
        \Vertex[x=3.5, y=2, Pseudo]{y2}

        \Edge[color=blue, Direct, label={Tree Growth}](x1)(x2)
        \Edge[color=red, Direct, label={Percolation}](y2)(y1)

    \end{tikzpicture}} 
    \subfigure[Inverted lineage growth\label{subfig:forward_forward}]{
    \begin{tikzpicture}[transform shape, scale=0.85]

        \Vertex[x=0, y=0]{u1}
        \Vertex[x=1, y=0.5]{u3}
        \Vertex[x=1, y=-0.5, color=rose]{u2}
        \Vertex[x=2, y=-1.5, color=pink]{u4}
        \Vertex[x=2, y=-0.5, color=pink]{u5}
        
        \Edge[color=black, Direct](u1)(u2)
        \Edge[color=black, Direct](u1)(u3)
        \Edge[color=black, Direct](u2)(u4)
        \Edge[color=black, Direct](u2)(u5)
        
        \Vertex[x=-2, y=-2, Pseudo]{x1}
        \Vertex[x=3.5, y=-2, Pseudo]{x2}
        \Vertex[x=-2, y=2, Pseudo]{y1}
        \Vertex[x=3.5, y=2, Pseudo]{y2}

        \Edge[color=blue, Direct, label={Tree Growth}](x1)(x2)
        \Edge[color=red, Direct, label={Percolation}](y1)(y2)
    \end{tikzpicture}}
    \subfigure[\label{fig:gw_subcritical}]{\includegraphics[width=0.5\textwidth]{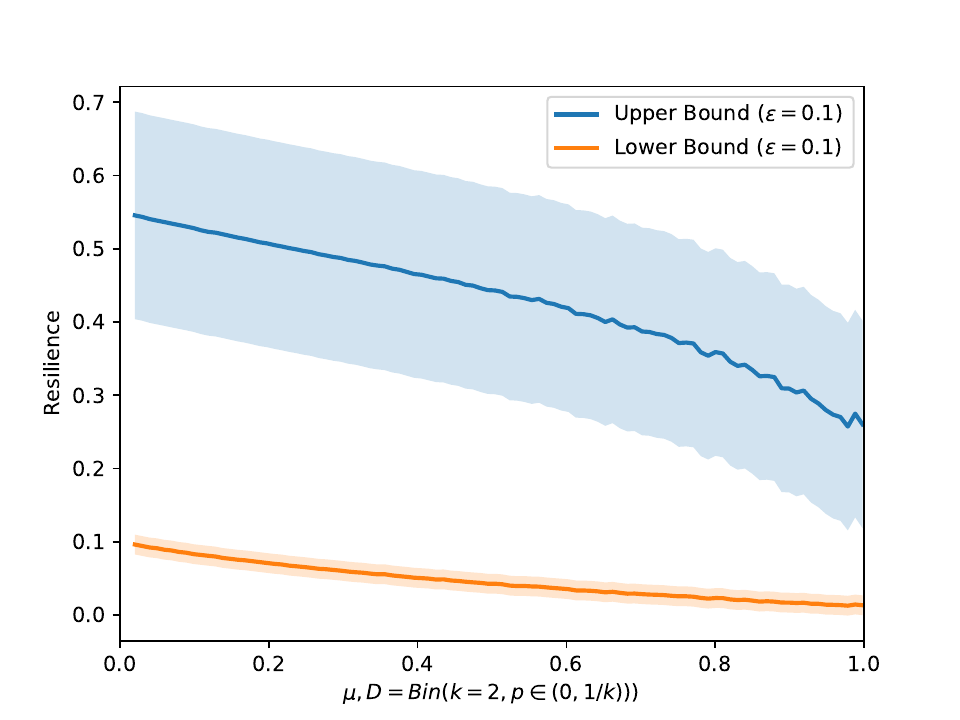}}
    
    \caption{Lineage growth (a) and Inverted lineage growth (b) networks. Node failures are drawn in pink. (c): Resilience bounds for a subcritical GW process with branching distribution as a function of $\mu$; note the decreasing trends in both upper and lower bounds, $\ev {\cG} {\overline R_{\cG}(\varepsilon)}$ and $\ev {\cG} {\underline R_{\cG}(\varepsilon)}$, with increasing $\mu$. In panels (a) and (b), the spontaneous failures (sources of failure) are colored in dark red, the resulting failures due to cascading effects are colored pink, and the unaffected nodes are colored light blue.} 
    \label{fig:tree_percolation}
    \shrink
\end{figure}

\subsubsection{Lineage growth network (\texorpdfstring{$m$}{m}-ary tree)} \label{sec:forward_backward_tree}

In the case of the lineage growth, we consider an $m$-ary tree with height $D$ and fanout $m \ge 1$. The levels of the tree correspond to the ``tiers'' with raw materials placed in the tier $D$ and more complicated products placed in higher tiers. Each product has $n$ potential suppliers, and each product in the tier $d \in [D - 1]$ has exactly $|\cN(i)| = m$ inputs from the tier $d + 1$. Tier $d = D$, which corresponds to the raw products, has no inputs. \cref{subfig:forward_backward} shows how the percolation process evolves in a tree with $m = 2$ and $D = 3$, where failures (drawn in pink) propagate from the two faulty raw materials to the root.  

In addition to the power law result (\cref{theorem:power_laws}), the case of the $m$-ary tree is another example that motivates the resiliency measure $R_{\cG}(\varepsilon)$. Specifically, let us think about the probability of a catastrophic failure in a tree, that is, one that affects a substantial proportion of the suppliers in the production network. A raw material failing to be produced can cause its parent product not to be produced and inductively create a cascade up to the root. The complete cascade will start from the failed product in tier $D - 1$ since some products in tier $D - 1$ may be made if their corresponding raw materials are produced. However, no product can be produced from tier $D - 2$ onward. As a result, only $o(K)$ products survive. The probability of such an event equals: 
\begin{align} \label{eq:quick}
    \Pr [S = o(K)]  \ge \Pr [\text{$\ge 1$ raw material malfunctions}] = 1 - (1 - x^n)^{m^{D - 1}} \ge 1 - e^{-x^n m^{D - 1}}. 
\end{align}
It is easy to see that if $x = \Omega \left ( \left ( {m \log K}/{K} \right )^{1/n} \right )$, then a catastrophe occurs with a high probability in the tree structure, meaning that failure probabilities as small as $\left (  {m \log K}/ {K} \right )^{1/n} + o(1)$ can cause catastrophes with probability approaching one (and therefore the Lineage growth network is a fragile architecture). Therefore, it is interesting to study cases where such a scenario does not happen; on the contrary, we have many products that survive. The following theorem formalizes the lower bounds and provides an additional upper bound for the resilience of the lineage growth (proof deferred to \cref{app:proof:theorem:tree_resilience}). 

\begin{theorem} \label{theorem:tree_resilience}
    Let $\cG$ be a lineage growth with fanout $m$ and depth $D$. Then, 
    \begin{align}
        \left [ 1 - \left ( 1 - \frac 1 {K} \right )^{\frac {1} {(1 - \varepsilon) K}} \right ]^{1/n} \le R_{\cG}(\varepsilon) \le \begin{cases}
            \left ( \frac {2} {K (1 - \varepsilon)}\right )^{1/n} = \left ( \frac {2} {D (1 - \varepsilon)}\right )^{1/n}, & m = 1 \\
            \left ( \frac {(1 - \varepsilon) \log m} {\log K} \right )^{1/n} \asymp \left ( \frac {(1 - \varepsilon)} {D} \right )^{1/n}, & m \ge 2
        \end{cases}.
    \end{align}
    Therefore, the network is fragile. 
\end{theorem}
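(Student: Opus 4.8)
The plan is to reduce the percolation dynamics on the $m$-ary tree to i.i.d.\ labels on its nodes, and then establish the lower bound, the $m=1$ upper bound, and the $m\ge 2$ upper bound separately, each via \cref{lemma:upper_bound_resilience} or directly via the definition of $R_{\cG}(\varepsilon)$. First I would unroll \eqref{eq:dynamics} by induction on depth: writing $W_v:=\prod_{s\in\cS(v)}Y_{vs}\sim\mathrm{Be}(x^n)$ for the (independent) event that product $v$ loses all its suppliers, one gets $Z_i=\prod_{v\in\mathrm{subtree}(i)}(1-W_v)$, so the set of failed products is exactly the union of the root-paths of the ``supplier-dead'' nodes $\{v:W_v=1\}$, and $F$ is a deterministic function of the labels $\{W_v\}$. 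For the lower bound, fix an ancestor-closed set $A$ with $|A|=\lfloor\varepsilon K\rfloor$ --- e.g.\ a prefix of a breadth-first ordering, so that $v\in A\Rightarrow\mathrm{parent}(v)\in A$. If no node of $U:=\cK\setminus A$ is supplier-dead, every supplier-dead node lies in $A$, and since $A$ is ancestor-closed so does its whole root-path, forcing the failed set into $A$ and hence $F\le|A|\le\varepsilon K$. Thus $\Pr_{\cG,x}[S\ge(1-\varepsilon)K]\ge(1-x^n)^{|U|}=(1-x^n)^{K-\lfloor\varepsilon K\rfloor}$. Since the graph is deterministic, $\ev{\cG}{K}=K$; treating $\varepsilon K$ as an integer (the rounding only affects lower-order terms), any $x$ with $(1-x^n)^{(1-\varepsilon)K}\ge 1-1/K$ is admissible in the definition of $R_{\cG}(\varepsilon)$, and solving this for $x$ yields the stated bound $R_{\cG}(\varepsilon)\ge[1-(1-1/K)^{1/((1-\varepsilon)K)}]^{1/n}$.

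For the upper bounds I would use \cref{lemma:upper_bound_resilience}, i.e.\ exhibit $\overline R_{\cG}(\varepsilon)$ with $\Pr_{x=\overline R_{\cG}(\varepsilon)}[S\ge(1-\varepsilon)K]\le 1/2$. When $m=1$ the tree is a path of $K=D$ nodes and the reduction is tight --- $\mathrm{subtree}(v)$ is the tail of the path, so $F$ equals the largest index of a supplier-dead node and $\Pr[S\ge(1-\varepsilon)K]=(1-x^n)^{K-\lfloor\varepsilon K\rfloor}\le(1-x^n)^{(1-\varepsilon)K}$; taking $\overline R_{\cG}(\varepsilon)=(2/((1-\varepsilon)K))^{1/n}$ makes this at most $(1-2/((1-\varepsilon)K))^{(1-\varepsilon)K}\le e^{-2}<1/2$, so \cref{lemma:upper_bound_resilience} gives $R_{\cG}(\varepsilon)<(2/((1-\varepsilon)K))^{1/n}=(2/((1-\varepsilon)D))^{1/n}$. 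When $m\ge 2$ I would invoke the cascade estimate behind \eqref{eq:quick}: if at least one of the $m^{D-1}$ raw materials is supplier-dead then only $o(K)$ products survive, so $\Pr[S\ge(1-\varepsilon)K]\le(1-x^n)^{m^{D-1}}\le\exp(-x^n m^{D-1})$; using $\log K\asymp D\log m$ (from $K=(m^D-1)/(m-1)$), the choice $\overline R_{\cG}(\varepsilon)^n=(1-\varepsilon)\log m/\log K\asymp(1-\varepsilon)/D$ makes $x^n m^{D-1}\to\infty$, so this probability is below $1/2$ for large $K$ and \cref{lemma:upper_bound_resilience} yields $R_{\cG}(\varepsilon)\le((1-\varepsilon)\log m/\log K)^{1/n}\asymp((1-\varepsilon)/D)^{1/n}$. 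In both regimes $D\to\infty$ as $K\to\infty$ with $m$ fixed, so the upper bounds force $R_{\cG}(\varepsilon)\to 0$.

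The arithmetic in the lower bound and the $m=1$ case is routine; the main obstacle is the $m\ge 2$ upper bound. The reduction shows that a single supplier-dead raw material only kills its root-path, i.e.\ $D=o(K)$ products, so ``$S=o(K)$'' in \eqref{eq:quick} has to be read as the assertion that a \emph{macroscopic} cascade occurs, and converting it into the quantitative bound with the specific threshold $(1-\varepsilon)\log m/\log K$ is the real work: one must argue that at this value of $x$ enough raw materials are simultaneously supplier-dead that the union of their root-paths (a Steiner tree inside the $m$-ary tree) already covers a $(1-\varepsilon)$-fraction of the products with probability bounded away from $1/2$, and then extract the constant. Everything else then reduces to \cref{lemma:upper_bound_resilience} together with elementary estimates on $(1-t)^N$.
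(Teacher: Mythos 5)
Your lower bound and your $m=1$ upper bound are correct and essentially the paper's own arguments. The ancestor-closed-set event you condition on is the same event the paper uses (all tiers below the deepest failure being operational), giving $\Pr[S \ge (1-\varepsilon)K] \ge (1-x^n)^{(1-\varepsilon)K}$ up to the same rounding the paper also ignores; and for the path your exact computation of $\Pr[S \ge (1-\varepsilon)K]$ simply replaces the paper's route via $\ev{}{S} \le 1/x^n$ and Markov, arriving at the same threshold $(2/((1-\varepsilon)K))^{1/n}$ through \cref{lemma:upper_bound_resilience}.

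The $m \ge 2$ upper bound is a genuine gap, and you have flagged it yourself. The inequality $\Pr[S \ge (1-\varepsilon)K] \le (1-x^n)^{m^{D-1}}$ is not justified: as you observe, a single supplier-dead raw material only removes its root path of $D = o(K)$ products, so ``at least one dead raw material'' does not imply $S < (1-\varepsilon)K$, and the ``real work'' you defer --- showing that at $x^n \asymp (1-\varepsilon)/D$ the union of the dead nodes' root paths covers a $(1-\varepsilon)$ fraction of products with probability at least $1/2$ --- is exactly what is missing. Moreover, that repair cannot succeed at this threshold: a node at tier $d$ fails with probability $1-(1-x^n)^{(m^{D-d+1}-1)/(m-1)}$, and summing over tiers at $x^n = c/D$ gives $\ev{}{F} = O\!\left(K \log_m D / D\right) = o(K)$, so (by Markov, or by concentration over the $K$ independent supplier-death indicators, each of which changes $F$ by at most $D$) the failed set covers only an $o(1)$ fraction with probability tending to one; the covering event you need has vanishing, not constant, probability. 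The paper closes this case by a different route that your proposal does not supply: \cref{lemma:probability_functional} gives the tier survival probabilities $q_d = (1-x^n)^{(m^{D-d+1}-1)/(m-1)}$, \cref{lemma:upper_bound_tree} converts them into $\ev{}{S} \le K D x^n/2$, and Markov plus \cref{lemma:upper_bound_resilience} then yield the stated threshold. Be aware, though, that your own observation puts pressure on that route as well: at $x^n = (1-\varepsilon)/D$ the raw-material tier alone contributes $m^{D-1}(1-x^n) \ge \tfrac{K}{2}(1-x^n)$ expected survivors, which exceeds $K D x^n/2 = K(1-\varepsilon)/2$ for large $D$, so the $m \ge 2$ case requires more care than either your sketch or a literal reading of \eqref{eq:quick} provides.
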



\subsubsection{Inverted lineage growth network (Branching process)} \label{app:forward_forward_tree}

We consider a random hierarchical network in which the products at each level $D$ are denoted by $\cK_d$. Starting from one raw material, we branch out through a Galton-Watson (GW) process \citep[Ch.~5, \S5.1]{galtonwatson_notes} such that every product $i \in \cK_d$ at the level $d \ge 1$ creates $\xi_{i}^{(d)}$ supply dependencies, where $\{ \xi_{i}^{(d)} \}_{i \in \cK_d, d \ge 0}$ are generated i.i.d. from a distribution $\cD$, with mean $\ev {\cD} {\xi_{i}^{(d)}} = \mu > 0$. Subsequently, the number of products at each level obeys
\begin{align} \label{eq:gw_1}
    |\cK_{d + 1}| = \begin{cases} 
        \sum_{i \in \cK_d} \xi_i^{(d)}, & d \ge 2 \\
        1 & d = 1
    \end{cases}. 
\end{align} 

Adding the node percolation process, we start a percolation of the children of the root node $r$ and subsequently proceed to their children, etc. The number of the surviving products $S$ in this case can be expressed as $S = \sum_{d : |\cK_d| \ge 1} \sigma_d$, where $\{ \sigma_d \}_{d \ge 0}$ follow another branching process, namely
\begin{align} \label{eq:gw_2}
    \sigma_{d + 1} = \begin{cases} 
        \sum_{1 \le i \le \sigma_d} \xi_i^{(d)} \left ( 1 - \prod_{s \in \cS(i)} X_{is} \right ), & d \ge 2 \\
        1 - \prod_{s \in \cS(r)} X_{rs} & d = 1
    \end{cases}. 
\end{align} 

In \cref{subfig:forward_forward}, we show such an example in which failures propagate from the raw product to products of increasing complexity. In the case of the GW process, the network has a random number of nodes $K$. For this reason, to characterize resilience and fragility, instead, we focus on quantities $\Pr_K [R_{\cG}(\varepsilon) = 0]$. We generalize the definition of resilience as follows. 

\begin{mdefinition}
    A network $\cG$ is $(q, \delta)$-resilient for some $q, \delta \in (0, 1)$ if and only if $\Pr_K [R_{\cG}(\varepsilon) \ge \delta] \ge q$.
\end{mdefinition}

We prove the following result about the inverted lineage growth network (proof deferred to \cref{app:proof:theorem:gw_resilience}): 

\begin{theorem} \label{theorem:gw_resilience}
    Let $\cG$ be generated by a GW process in which the number of children of each node is generated by a distribution $\cD$ with mean $\mu > 0$ and extinction time $\tau$. Let $G_{\cD}(\eta) = \ev {\xi \sim \cD} {e^{s\xi}}$ be the moment generating function of $\cD$, and let $\Pr [\tau < \infty]  = \eta^* = \inf \{ \eta \in [0, 1]: G_{\cD}(\eta) = \eta \}$ be the extinction probability of the GW process. Then the following are true: \emph{(i)} If $\mu < 1$, then $\cG$ is $(1, \underline \delta(\mu))$-resilient, \emph{(ii)} If $\mu (1 - x^n) > 1$, then $\cG$ is $(\eta^*, \underline \delta(\mu))$-resilient. 
    
    Moreover, the expected upper bound on the resilience is, for $\mu \in (0, 1) \cup (e^2, \infty)$, given by  $\overline \delta (\mu) = \ev {\cG} {\overline R_{\cG} (\varepsilon)} = \sum_{1 \le k < \infty} \Pr [\tau = k] \overline x(\mu, \tau, \varepsilon)$ with 
    {\footnotesize
    \begin{align}
     \overline x (\mu, \tau, \varepsilon) & = \inf \left \{ x \in \left [0, \one \{ \mu < 1 \} + \left ( 1 - \frac {1} {\mu} \right )^{1/n} \one \{ \mu > 1 \} \right ] : (1 - x^n) \frac {\mu^\tau (1 - x^n)^\tau - 1} {\mu (1 - x^n) - 1}  \le \frac {1 - \varepsilon} {2} \frac {\mu^\tau - 1} {\mu - 1} \right \}. \label{eq:ub_gw}  
    \end{align}}

    The expected lower bound on the resilience is, for $\mu \in (0, 1) \cup (e, \infty)$, given by $\underline \delta(\mu) = \ev {\cG} {\underline R_{\cG} (\varepsilon)} = \sum_{1 \le k < \infty} \Pr [\tau = k] \underline x(\mu, \tau, \varepsilon)$ with
    {\footnotesize
    \begin{align}
        \underline x(\mu, \tau, \varepsilon) & = \sup \left \{ x \in \left [0, \one \{ \mu < 1 \} +  \left ( 1 - \frac {1} {\mu} \right )^{1/n} \one \{ \mu > 1 \} \right ] : \frac {\mu^\tau - 1} {\mu - 1} - (1 - x^n) \frac {\mu^\tau (1 - x^n)^\tau - 1} {\mu (1 - x^n) - 1} \le \varepsilon \right \}. \label{eq:lb_gw}
    \end{align}}

\end{theorem}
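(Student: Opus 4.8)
The plan is to exploit the fact that the survival recursion \cref{eq:gw_2} itself defines a second Galton--Watson process: the surviving subtree of the original tree has offspring law equal to that of $\cD$ thinned by independently retaining each child with probability $1-x^n$ (the probability a product keeps at least one working supplier), so $\{\sigma_d\}$ is Galton--Watson with mean $\mu(1-x^n)$ and extinction probability equal to the smallest fixed point in $[0,1]$ of the p.g.f.\ of the thinned offspring law. Writing $K=\sum_d|\cK_d|$, $S=\sum_d\sigma_d$, the two inputs I would compute first are the per-level means $\ev{}{|\cK_d|}=\mu^{d-1}$ and $\ev{}{\sigma_d}=(1-x^n)\,(\mu(1-x^n))^{d-1}$; summing the geometric series up to the extinction time gives, conditionally on $\tau$, the deterministic surrogates $\ev{}{K\mid\tau}=\tfrac{\mu^\tau-1}{\mu-1}$ and $\ev{}{S\mid\tau}=(1-x^n)\tfrac{\mu^\tau(1-x^n)^\tau-1}{\mu(1-x^n)-1}$, which are exactly the quantities inside the $\inf$/$\sup$ in \cref{eq:ub_gw} and \cref{eq:lb_gw}.

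For part (i) ($\mu<1$) the tree is a.s.\ finite with $\ev{}{K}=\tfrac1{1-\mu}<\infty$, so the target probability $1-1/\ev{}{K}$ is a fixed constant strictly below $1$. Conditioning on the tree, all products are produced simultaneously precisely on the event that no product loses all $n$ of its suppliers, which has probability $(1-x^n)^K$ (supplier sets of distinct products are disjoint, and if every product keeps a working supplier then every $Z_i=1$ by induction from the root in \cref{eq:dynamics}). Hence $\Pr[S=K]=\ev{}{(1-x^n)^K}\to1$ as $x\to0$ by dominated convergence --- here $K<\infty$ a.s.\ is exactly what fails for $\mu=1$, which is why the hypothesis is strict --- so for all small enough $x>0$ we get $\Pr[S\ge(1-\varepsilon)K]\ge1-1/\ev{}{K}$ and therefore $R_\cG(\varepsilon)>0$: resilient. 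The explicit lower bound is the quantitative version: conditioned on $\tau=k$, impose $\ev{}{F\mid\tau=k}=\ev{}{K\mid\tau=k}-\ev{}{S\mid\tau=k}\le\varepsilon$ (the inequality defining $\underline x(\mu,\tau,\varepsilon)$), pass from this first-moment control to $\Pr[F\ge\varepsilon K\mid\tau=k]$ using $K\ge1$ (so $\{F\ge\varepsilon K\}\subseteq\{F\ge1\}$ for $\varepsilon<1$), and average over $\tau$ to get $\ev{\cG}{\underline\cR_\cG(\varepsilon)}$; it is bounded away from $0$ because the inequality holds with slack at $x=0$ for every $k$ and its $k\to\infty$ form $\tfrac1{1-\mu}-\tfrac{1-x^n}{1-\mu(1-x^n)}\le\varepsilon$ still has a positive solution.

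For part (ii) ($\mu(1-x^n)>1$), condition on the non-extinction event of the tree process (probability $1-\eta^*$); there $\tau=\infty$ and $K=\infty$. Standard growth rates for supercritical branching processes give $|\cK_d|^{1/d}\to\mu$ a.s.\ on this event and $\sigma_d^{1/d}\to\mu(1-x^n)$ a.s.\ on the (positive-probability) survival event of the thinned process, so $\sigma_d/|\cK_d|$ decays geometrically in $d$ and hence the partial survival fractions $\big(\sum_{e\le d}\sigma_e\big)/\big(\sum_{e\le d}|\cK_e|\big)\to0$; if instead the thinned process dies while the tree survives, $S<\infty=K$ and the conclusion is immediate. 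Either way the failure fraction tends to $1$, so $R_\cG(\varepsilon)=0$ on this event --- fragile with probability $1-\eta^*$. For the explicit expected upper bound I would condition on $\tau=k<\infty$ and apply \cref{lemma:upper_bound_resilience}: a Markov-type estimate shows $\Pr[S\ge(1-\varepsilon)K]\le\tfrac12$ once $\ev{}{S\mid\tau=k}\le\tfrac{1-\varepsilon}{2}\ev{}{K\mid\tau=k}$, which is precisely the inequality defining $\overline x(\mu,\tau,\varepsilon)$; averaging over $\tau$ yields \cref{eq:ub_gw}. The restriction $x\le(1-1/\mu)^{1/n}$ when $\mu>1$ (so $\mu(1-x^n)\ge1$ and the geometric sums are the correct closed forms) and the parameter windows $\mu\in(0,1)\cup(e,\infty)$, $\mu\in(0,1)\cup(e^2,\infty)$ come from the range where one can check that the defining inequalities are monotone in $x$ on the relevant interval, so that $\underline x,\overline x$ are attained at an interior threshold rather than at an endpoint.

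The main obstacle is the passage from the genuinely random pair $(S,K)$ to the deterministic geometric surrogates $\ev{}{S\mid\tau},\ev{}{K\mid\tau}$: $\Pr[S\ge(1-\varepsilon)K]$ is not literally controlled by $\ev{}{S}/\ev{}{K}$ since $K$ is random, and conditioning on $\{\tau=k\}$ biases the level sizes $(|\cK_d|)_{d\le k}$ away from their unconditional means $\mu^{d-1}$. Making the Markov/\cref{lemma:upper_bound_resilience} step and its lower-bound counterpart fully rigorous requires either working conditionally on the entire tree and showing that the resulting weighted average of the $(1-x^n)^d$ concentrates, or a size-biased description of the law of $(|\cK_d|)_{d\le k}$ given $\{\tau=k\}$; I expect this concentration/size-biasing bookkeeping to be the technically heaviest part, the branching-process growth-rate facts used in part (ii) being off-the-shelf (Kesten--Stigum / Seneta--Heyde).
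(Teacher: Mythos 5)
Your core route is the same as the paper's: condition on the root surviving so that the surviving products form a thinned GW process with mean $\mu(1-x^n)$, compare the geometric sums $\frac{\mu^\tau-1}{\mu-1}$ and $(1-x^n)\frac{(\mu(1-x^n))^\tau-1}{\mu(1-x^n)-1}$ conditionally on $\tau$, obtain the upper threshold from Markov's inequality plus \cref{lemma:upper_bound_resilience} (expected survivors at most $\frac{1-\varepsilon}{2}\ev{\cG}{K}$), obtain the lower threshold from Markov on $F$ (expected failures at most $\varepsilon$), and average over the law of $\tau$. Your part (i) argument via $\Pr[S=K]=\ev{}{(1-x^n)^K}\to 1$ and your part (ii) argument via a.s.\ growth rates on the non-extinction event are legitimate, somewhat more probabilistic substitutes for the paper's treatment (the paper gets (i) from root existence in its auxiliary lemma and (ii) from the observation that when $\tau=\infty$ the defining inequality forces $x=0$). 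Also, the ``main obstacle'' you flag --- that conditioning on $\{\tau=k\}$ biases the level sizes away from $\mu^{d-1}$ --- is not resolved by the paper either: it simply identifies the conditional expectations with the geometric surrogates, so no concentration or size-biasing machinery is needed to match its level of rigor.

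The genuine gap is the existence analysis behind $\overline x(\mu,\tau,\varepsilon)$ and $\underline x(\mu,\tau,\varepsilon)$, and with it the provenance of the windows $\mu\in(0,1)\cup(e^2,\infty)$ and $\mu\in(0,1)\cup(e,\infty)$. You attribute these windows to monotonicity of the defining inequalities in $x$, but monotonicity holds for every $\mu$ and is not the issue; what must be shown is that the inequality is violated at one end of the admissible interval and satisfied at the other, so that a nondegenerate threshold exists. This is exactly the content of the paper's \cref{lemma:gw_roots}: writing $z=1-x^n\in(1/\mu,1]$ for $\mu>1$, the function $\phi(z)=z\frac{(\mu z)^\tau-1}{\mu z-1}-\frac{1-\varepsilon}{2}\frac{\mu^\tau-1}{\mu-1}$ is increasing with $\phi(1)>0$, and by L'H\^opital $\lim_{z\to 1/\mu} z\frac{(\mu z)^\tau-1}{\mu z-1}=\tau/\mu$, so a root in $(1/\mu,1]$ is guaranteed only when $\frac{1-\varepsilon}{2}>\frac{1}{\log\mu}$; since $\frac{1-\varepsilon}{2}<\frac 1 2$, this is feasible only for $\log\mu>2$, i.e.\ $\mu>e^2$. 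The analogous endpoint computation for the lower-bound function $\psi$ requires $\varepsilon<\frac{\log\mu-1}{\mu}$, hence $\mu>e$. Without this analysis your construction does not establish that $\overline x$ and $\underline x$ are strictly inside the interval in the supercritical finite-$\tau$ case, which is precisely what the parameter windows in the theorem statement encode; you would need to supply this computation (or an equivalent one) to complete the proof.
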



Applying \cref{theorem:gw_resilience} for the case where $\cD$ is a point-mass function that equals $\mu$ with probability $1$, yields the following corollary for deterministic structures.

\begin{corollary}
    Let $\cD$ have $\Pr_{\xi \sim \cD} [\xi = \mu] = 1$ for $\mu > 1$. Then $\cG$ is fragile.
\end{corollary}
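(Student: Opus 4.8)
The plan is to read the corollary off \cref{theorem:gw_resilience} together with the elementary estimate that drives its $\tau = \infty$ case. First I would record the branching data: for a point mass at $\mu > 1$ the defining fixed-point equation for the extinction probability becomes $\eta^{\mu} = \eta$, whose only roots in $[0,1]$ are $\eta = 0$ and $\eta = 1$, so $\eta^* = 0$, i.e.\ $\Pr[\tau < \infty] = 0$ and (with probability one) the realized network is the infinite $\mu$-ary tree. Because $\mu > 1$ there exist $x > 0$ with $\mu(1 - x^n) > 1$, so part (ii) of \cref{theorem:gw_resilience} applies and gives that, with probability $1 - \eta^* = 1$, the network is fragile; since the branching law here is deterministic, ``with probability one'' is exactly ``always''.

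To make the argument self-contained and uniform in $x$, I would spell out the estimate behind the $\tau = \infty$ case. Fix any $x \in (0,1)$ and truncate the tree at depth $D$, so $K = K_D = (\mu^{D}-1)/(\mu-1)$. A non-root product requires only its parent and, given a producible parent, is produced independently of its siblings with probability $1 - x^n$; hence a product at level $d$ is produced with probability $(1-x^n)^{d}$, and summing over the $\mu^{d-1}$ products at level $d$,
\begin{align}
\ev{}{S} \;=\; \sum_{d=1}^{D}(1-x^n)\,(\mu(1-x^n))^{d-1} \;=\; (1-x^n)\,\frac{(\mu(1-x^n))^{D}-1}{\mu(1-x^n)-1},
\end{align}
read as $(1-x^n)D$ when $\mu(1-x^n) = 1$. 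Dividing by $K_D$ and using $(\mu(1-x^n))^{D}/\mu^{D} = (1-x^n)^{D} \to 0$ when $\mu(1-x^n) \ge 1$, and that $\ev{}{S}$ stays bounded while $K_D$ grows exponentially when $\mu(1-x^n) < 1$, gives $\ev{}{S}/K_D \to 0$ as $D \to \infty$ for \emph{every} fixed $x \in (0,1)$.

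Finally I would conclude with Markov's inequality and \cref{lemma:upper_bound_resilience}: since $\ev{}{S}/K_D \to 0 < (1-\varepsilon)/2$, for all large $K$ we have $\Pr_{x}[S \ge (1-\varepsilon)K] \le \ev{}{S}/((1-\varepsilon)K) \le \tfrac12$, hence $R_{\cG}(\varepsilon) < x$; as $x \in (0,1)$ was arbitrary and $R_{\cG}(\varepsilon) \ge 0$, this forces $\lim_{K \to \infty} R_{\cG}(\varepsilon) = 0$, i.e.\ $\cG$ is fragile. The one point that needs care — and the one I would be most deliberate about — is the bookkeeping around $\tau = \infty$ (equivalently $K = \infty$): rather than trying to assign a resilience to the literal infinite tree, the argument is made precise by working with the depth-$D$ truncations and letting $D \to \infty$. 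Everything else is routine geometric-series estimation.
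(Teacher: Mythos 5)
Your proposal is correct, and its first paragraph is exactly the paper's (unwritten) proof: the corollary is obtained by instantiating \cref{theorem:gw_resilience} with the point-mass distribution, noting that the fixed points of $G_{\cD}(\eta)=\eta^{\mu}$ in $[0,1]$ are $0$ and $1$, so $\eta^{*}=0$ and part (ii) gives fragility with probability $1-\eta^{*}=1$, i.e.\ always. The remainder of your write-up — computing $\ev{}{S}=(1-x^n)\sum_{d=1}^{D}(\mu(1-x^n))^{d-1}$ on the depth-$D$ truncation, showing $\ev{}{S}/K_D\to 0$ for every fixed $x\in(0,1)$, and concluding via Markov and \cref{lemma:upper_bound_resilience} — is a sound, more elementary verification that sidesteps the theorem's somewhat delicate treatment of the event $\tau=\infty$ (where the paper argues the defining inequalities force $x=0$); your truncation bookkeeping also makes precise what "fragile" means for the a.s.\ infinite tree, which the paper leaves implicit. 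Both routes reach the same conclusion; yours buys self-containedness at the cost of a short geometric-series computation, while the paper's buys brevity by leaning on the general theorem.
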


For the subcritical regime, we plot the expected resilience bounds in \cref{fig:gw_subcritical} for a subcritical GW process with branching distribution $\cD = \mathsf{Bin}(k, p \in (0, 1 / k))$, as a function of $\mu = k p$.

We want to remark here that the work of \citet{elliott2022supply} considers a hierarchical supply network similar to the one presented in \cref{sec:forward_backward_tree} -- though by assuming a different percolation process. In Section II of their paper, they observe that their reliability metric decreases as the interdependency increases, and the reliability increases as the number of suppliers increases. This is in agreement with the lower bound presented in \cref{theorem:tree_resilience} for the lineage growth, which increases as $n$ increases and decreases as the sourcing requirement $m$ increases, the probability that there is at least a raw material failure (\cref{eq:quick}) increases. Moreover, in their paper, if the shocks are below a value, then for large depths, the reliability goes to zero, which is conceptually in agreement with the upper bounds presented in \cref{theorem:tree_resilience}, which go to zero as $D$ grows. Finally, for the inverted lineage growth network -- which is not studied by \citet{elliott2022supply} -- we again get a result that is in agreement with their results, since as the average sourcing influence $\mu$ increases, we observe that the upper and lower bounds in the resilience decrease, as empirically shown in \cref{fig:gw_subcritical}. 

\subsection{Multi-tier network} \label{app:trellis} 

The multi-tier network we study is defined with three parameters: the width $r$, the depth $D$, and the edge sampling probability $p$. Specifically, there are $D$ tiers and each tier has $r$ products, and, thus, $K = rD$. Edges are sampled randomly only between the nodes of tier $d$ and $d + 1$ for $d \in \{ 1, \dots, D - 1\}$ independently with probability $p$. Directly applying \cref{theorem:resilience_graph_statistics} we get the following result regarding the multi-tier network:

\begin{theorem} \label{proposition:trellis}
    If the multi-tier network has $D = o(K^{1/3})$ tiers, then it is fragile with resilience that goes to zero at the rate $\Oeps {\left ( \frac {1} {\sqrt K} \right )^{1/n}}$ as $K\to\infty$. On the other hand, if the number of raw products and the edge probability for the multi-tier network are fixed ($r$ and $p$ are both $O(1)$), then the multi-tier network is resilient and its resilience can be lower bounded by $\Omegaeps {\left ( \frac {1} {r(1 + p)} \right )^{2/n}}$ as $K\to\infty$.
\end{theorem}

\subsection{Proofs of results} \label{app:proofs}

\subsubsection{Helper Lemma}

Before proceeding to the main result, we use the following helper lemma:

\begin{lemma} \label{lemma:upper_bound_resilience}
    Let $\varepsilon \in (0, 1)$ and let $\overline R_{\cG}(\varepsilon) \in (0, 1)$ be a percolation probability such that $\Pr_{x = \overline R_{\cG}(\varepsilon)} [S \ge (1 - \varepsilon) K] \le \frac 1 2$. Then, for $K \ge 3$, we have $ R_{\cG}(\varepsilon)<\overline R_{\cG}(\varepsilon)$. 
\end{lemma}

\xpar{Proof} If $S(x)$ is the number of products that survive at a given probability of percolation $x$, and $x_1 \le x_2$ are two percolation probabilities, then a straightforward coupling argument shows that $S(x_1) \ge S(x_2)$, and subsequently, for every $s \in [0, K]$ we have $\Pr_{x = x_1} [S \ge s] \ge \Pr_{x = x_2} [S \ge s]$. Now, in order to arrive at a contradiction, let $\overline R_{\cG}(\varepsilon) \le R_\cG (\varepsilon)$, and $s = (1 - \varepsilon)K$. Then $1 - 1 / K \le \Pr_{x = R_{\cG}(\varepsilon)} [S \ge (1 - \varepsilon) K] \le \Pr_{x = \overline R_{\cG}(\varepsilon)} [S \ge (1 - \varepsilon) K] \le 1 / 2$, which produces a contradiction.

\subsubsection{Proof of \texorpdfstring{\cref{theorem:tree_resilience}}{theorem:treeresilience}: Lineage growth network}\label{app:proof:theorem:tree_resilience}

\xpar{Helper lemmas: Upper and Lower Bounds on $\ev {} {S}$} To prove the result, we first prove upper and lower bounds on $\ev {} {S}$ given in the lemma that follows:

\begin{lemma} \label{lemma:probability_functional}
    Let $q_d$ be the probability that a product in tier $d$ can be produced. Then 
    \begin{equation}
         q_{d} = \begin{cases} \left ( 1 - x^n \right )^{\frac{m^{D - d + 1} - 1} {m - 1}}, & m \ge 2 \\
         \left ( 1 - x^n \right )^{D - d + 1}, & m = 1
         \end{cases}
    \end{equation}
\end{lemma}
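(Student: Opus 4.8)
The plan is to derive a one-step recursion for $q_d$ and solve it by induction on the number of tiers below tier $d$. Fix a product $i$ in tier $d \le D-1$. By the percolation dynamics in \cref{eq:dynamics}, $i$ is produced if and only if all $m$ of its inputs $j \in \cN(i) \subseteq \cK_{d+1}$ are produced \emph{and} at least one of its $n$ suppliers in $\cS(i)$ is operational. The key point is that these $m+1$ events are mutually independent: in a tree the $m$ inputs of $i$ are the roots of $m$ vertex-disjoint subtrees, so the supplier variables $Y_{ks}$ ($k$ ranging over the subtree rooted at input $j$, $s \in \cS(k)$) that determine ``input $j$ is produced'' are disjoint across the different $j$, and they are in turn disjoint from $\{Y_{is}\}_{s \in \cS(i)}$. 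Since $Y_{is} \overset{\mathsf{i.i.d.}}{\sim} \mathrm{Be}(x)$, the event that at least one supplier of $i$ is operational has probability $1 - x^n$, and by the homogeneity of the construction the probability that a given input is produced is $q_{d+1}$, the same for every input. Hence
\[
q_d = (1 - x^n)\, q_{d+1}^m, \qquad d \in [D-1].
\]
For the base case, a raw material $i \in \cK_D$ has $\cN(i) = \emptyset$, so it is produced exactly when at least one of its suppliers is operational; thus $q_D = 1 - x^n$. (Raw materials still have $n$ suppliers by the standing assumption $|\cS(i)| = n$, so $q_D \ne 1$ in general.)

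Next I would solve this recursion. When $m = 1$ it unrolls directly to $q_d = (1-x^n)^{D-d+1}$, matching the stated formula. When $m \ge 2$, I would prove the claim by downward induction on $d$ (equivalently induction on $k := D - d + 1$). The base case $d = D$ has exponent $(m^1 - 1)/(m-1) = 1$, consistent with $q_D = 1 - x^n$. For the inductive step, assuming $q_{d+1} = (1-x^n)^{(m^{D-d}-1)/(m-1)}$, the recursion gives
\[
q_d = (1-x^n)\, q_{d+1}^m = (1-x^n)^{\,1 + \frac{m\,(m^{D-d}-1)}{m-1}} = (1-x^n)^{\,\frac{m^{D-d+1}-1}{m-1}},
\]
since $1 + \frac{m(m^{D-d}-1)}{m-1} = \frac{(m-1) + m^{D-d+1} - m}{m-1} = \frac{m^{D-d+1}-1}{m-1}$. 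This closes the induction and proves the lemma. As a sanity check, at the root $d=1$ the exponent becomes $\frac{m^D-1}{m-1} = 1 + m + \cdots + m^{D-1}$, the total number of products in the tree, reflecting that the root is produced iff no product anywhere in the tree loses all its suppliers.

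There is no substantive obstacle here: the only step requiring care is the independence justification, i.e.\ making explicit that the subtrees hanging off a given node are vertex-disjoint and that a node's own supplier set is disjoint from all supplier sets inside those subtrees, so that the event ``product $i$ is produced'' genuinely factorizes over the children plus the local supplier event. Everything else is a routine geometric-sum simplification of the exponent.
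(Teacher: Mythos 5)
Your proposal is correct and follows essentially the same route as the paper: the same recurrence $q_d = (1-x^n)\,q_{d+1}^m$ (with the base case at the raw-material tier) and the same geometric-sum solution of the exponent. The only addition is your explicit justification of independence via vertex-disjoint subtrees, which the paper leaves implicit.
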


    \xpar{Proof} Let $q_d = \Pr [\text{a product in tier $d$ can be produced}] = \Pr [\exists \text{a functional supplier at tier $d$}]$. To calculate $q_d$, note that all the inputs for a product node at tier $d$ succeed with probability $q_{d + 1}^m$, and then the probability that at least one supplier is functionally conditioned on all the inputs working is $1 - x^n$. This yields the following recurrence relation $q_d = q_{d + 1}^m (1 - x^n)$ with $q_{D + 1} = 1$. Solving this recurrence relation, we get $q_{D - d} = \left ( 1 - x^n \right )^{\sum_{l = 0}^d m^l}$ for $d \in [D]$. This yields $q_d = \begin{cases} \left ( 1 - x^n \right )^{\frac {m^{D - d + 1} - 1} {m - 1}}, & m \ge 2 \\
         \left ( 1 - x^n \right )^{D - d + 1}, & m = 1
         
         \end{cases}
    $. \qed

\begin{lemma}[Upper bound when $m \ge 2$] \label{lemma:upper_bound_tree}

Under the tree structure and $m \ge 2$ the expected size obeys $\ev {} {S} \le \frac {K x^n (D - 1)} {2} = U(x)$. 

\end{lemma}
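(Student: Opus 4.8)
The plan is to evaluate $\ev{}{S}$ exactly in terms of the tier‑wise survival probabilities supplied by \cref{lemma:probability_functional}, and then bound the resulting finite sum. Write $\cK_d$ for the set of products at tier $d$, so that $|\cK_d| = m^{d-1}$ and $K = \sum_{d=1}^{D} m^{d-1} = \frac{m^{D}-1}{m-1}$, and let $q_d$ denote the probability that a given product at tier $d$ is produced. By linearity of expectation, $\ev{}{S} = \sum_{d=1}^{D} |\cK_d|\, q_d = \sum_{d=1}^{D} m^{d-1}(1-x^n)^{n_d}$, where $n_d := \frac{m^{D-d+1}-1}{m-1}$ is the number of products in the subtree rooted at a tier‑$d$ node, and the last equality is exactly \cref{lemma:probability_functional} for $m\ge 2$. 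Thus everything reduces to controlling $\sum_{d=1}^{D} m^{d-1}(1-x^n)^{n_d}$.

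For the bound itself I would first record the elementary estimate $(1-t)^{k} \le \frac{1}{1+kt}$, valid for $t\in[0,1]$ and $k\ge 1$ (it follows from $(1-t)(1+t)\le 1$ together with Bernoulli's inequality $(1+t)^{k}\ge 1+kt$), with the coarser consequence $(1-t)^{k}\le e^{-kt}$. Applying this with $t=x^n$ and $k=n_d$, and using the crude lower bound $n_d \ge m^{D-d}$ (since $\frac{m^{\ell}-1}{m-1} = 1+m+\dots+m^{\ell-1}\ge m^{\ell-1}$), each summand $m^{d-1}(1-x^n)^{n_d}$ gets bounded by a term that decays geometrically in $D-d$, because $n_d$ grows geometrically in $D-d$ while $m^{d-1}$ only grows by a factor $m$ per step. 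I would peel off the leaf tier $d=D$ (contributing $m^{D-1}(1-x^n)$) separately, handle the remaining $D-1$ non‑leaf tiers by comparing $m^{d-1}(1-x^n)^{n_d}$ tier‑by‑tier against $\frac{1}{2}\, m^{d-1}x^n$ times a suitable weight, and then resum, using $K=\frac{m^{D}-1}{m-1}$ and the fact that there are exactly $D-1$ non‑leaf tiers to collect everything into $\frac{K x^n (D-1)}{2}$.

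The step I expect to be the main obstacle is precisely this last bookkeeping: since $|\cK_d|=m^{d-1}$ is largest for $d$ near the leaves — exactly where $q_d$ is closest to $1$ — the near‑leaf tiers dominate the sum, so one must verify that the doubly‑exponential growth of $n_d$ in $D-d$ makes the $(1-x^n)^{n_d}$ factors decay fast enough to beat the $m^{d-1}$ growth and land inside the claimed $\frac{K x^n (D-1)}{2}$ envelope with the stated constant. Controlling the slack in $(1-t)^{k}\le\frac{1}{1+kt}$ versus $\frac{1}{kt}$, and absorbing the $d=D$ leaf term, are the places where the factor $\frac{1}{2}$ (and whether a mild restriction on $x^n$ is needed) gets decided; the remainder is a routine geometric‑series estimate, after which \cref{lemma:upper_bound_resilience} and Markov's inequality finish the resilience upper bound in \cref{theorem:tree_resilience}.
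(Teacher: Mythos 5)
Your setup is identical to the paper's: linearity of expectation over tiers, the tier survival probabilities $q_d=(1-x^n)^{n_d}$ from \cref{lemma:probability_functional}, and the same key inequality $(1-t)^{a}\le \frac{1}{1+ta}$. Where the paper then bounds $\sum_{d=1}^{D} m^{d-1}\bigl(1+x^n\tfrac{m^{D-d+1}}{2(m-1)}\bigr)^{-1}$ by comparison with an integral and evaluates it via the substitution $u=\tfrac{x^n m^{D-t+1}}{2(m-1)}$, arriving at $\tfrac{m^{D}x^{n}}{2(m-1)\log m}\log\bigl(m^{D-1}\bigr)\le \tfrac{Kx^{n}(D-1)}{2}$, you stop exactly at the step that \emph{is} the lemma: the proposed ``tier-by-tier comparison against $\tfrac12 m^{d-1}x^n$ times a suitable weight, then resum'' is described but never carried out, and you yourself flag it as the expected obstacle. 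As submitted, this is a plan rather than a proof, so the decisive estimate is missing.

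Moreover, the obstacle you flag is genuine and cannot be closed by bookkeeping alone. The leaf tier contributes $m^{D-1}(1-x^n)\ge \tfrac{m-1}{m}K(1-x^n)$ (about $\tfrac{K}{2}(1-x^n)$ for $m=2$), which already exceeds the target $\tfrac{Kx^{n}(D-1)}{2}$ whenever $x^{n}\lesssim \tfrac{2(m-1)}{m(D-1)}$, i.e.\ for all $x^n$ of order $1/D$ or smaller; so the $d=D$ term you plan to ``absorb'' cannot be absorbed uniformly in $x\in(0,1)$, and the crude replacement of $(1-t)^k$ by $\tfrac1{1+kt}$ or $e^{-kt}$ does not change this, since the issue is the exact leaf term itself. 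Any completed version of your argument must therefore restrict the range of $x^n$ (your suspicion that ``a mild restriction on $x^n$ is needed'' is not optional slack but essential), and note that \cref{theorem:tree_resilience} invokes the bound precisely at the borderline scale $x^{n}=(1-\varepsilon)/D$, so the restriction has to be handled explicitly rather than deferred. In short: same ingredients and same opening moves as the paper, but the heart of the lemma — showing the resummed bound lands inside $\tfrac{Kx^n(D-1)}{2}$ — is not established, and the route you sketch fails for small $x^n$ without modifying the statement.
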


From \cref{lemma:probability_functional} we have $q_d =\left ( 1 - x^n \right )^{\frac {m^{D - d + 1} - 1} {m - 1}}$. Using the inequality $(1 - t)^a \le \frac 1 {1 + ta}$ for $a > 0$ and $t \in (0, 1)$ we get that 
    \begin{align*}
        \ev {} {S} & = \sum_{d = 1}^D m^{d - 1} q_d \le \sum_{d = 1}^D m^{d - 1} \frac {1} {1 + x^n \frac {m^{D - d + 1}} {2(m - 1)}} \asymp \int_{t = 1}^D \frac {m^{t - 1}} {1 + x^n \frac {m^{D - t + 1}} {2(m - 1)}}.
    \end{align*}
By letting $u = \frac {x^n m^{D - t + 1}} {2 (m - 1)}$ we get that the above integral equals
\begin{align*}
    & \frac {m^D x^n} {2 \log m (m - 1)} \log \left ( \frac {(1 - x^n)(1 - p)^{\varepsilon K} (1 + (1 - x^n)(1 - p)^{K})} {(1 - x^n)(1 - p)^{K} (1 + (1 - x^n)(1 - p)^{\varepsilon K})} \right ) \bigg |_{u_2 = x^n m^D / 2(m-1)}^{u_1 = x^n m / 2(m-1)} \\ & \le \frac {m^D x^n} {2 \log m (m - 1)} \log \left ( m^{D - 1} \right ) \le \frac {K x^n (D - 1)} {2} = U(x).
\end{align*} \qed

\begin{lemma}[Lower bound when $m \ge 2$] \label{lemma:lower_bound_tree}
    Under the tree structure and $m \ge 2$ the expected cascade size obeys $\ev {} {S} \ge K (1 - x^n(D - 1)) = L(x)$ where $K = m^D - 1$ is the number of products. 
\end{lemma}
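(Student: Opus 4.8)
The plan is to run the same tier-by-tier bookkeeping as in the proof of \cref{lemma:upper_bound_tree}, but replacing the inequality $(1-t)^a\le\frac{1}{1+ta}$ used there by the matching lower bound $(1-t)^a\ge 1-at$. Concretely, \cref{lemma:probability_functional} gives the exact probability that a product in tier $d$ is produced, $q_d=(1-x^n)^{N_d}$ with $N_d:=\frac{m^{D-d+1}-1}{m-1}$; that is, $N_d$ is the number of products in the subtree rooted at a tier-$d$ node (the node itself included). Since tier $d$ holds $m^{d-1}$ products, linearity of expectation gives $\ev{}{S}=\sum_{d=1}^{D}m^{d-1}q_d$, exactly as in the proof of \cref{lemma:upper_bound_tree}.

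Next I would linearise from below: Bernoulli's inequality $(1-t)^a\ge 1-at$ (valid for $a\ge 1$, $t\in(0,1)$) applied with $t=x^n$, $a=N_d$ yields $q_d\ge 1-N_dx^n$, hence
\[
\ev{}{S}\ \ge\ \sum_{d=1}^{D}m^{d-1}\bigl(1-N_dx^n\bigr)\ =\ K\ -\ x^n\sum_{d=1}^{D}m^{d-1}N_d,
\]
using $\sum_{d=1}^{D}m^{d-1}=K$.

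It then remains to bound the weighted subtree-size sum $\sum_{d=1}^{D}m^{d-1}N_d$. Since $m^{d-1}N_d=\frac{m^{D}-m^{d-1}}{m-1}$ the sum telescopes to $\frac{Dm^{D}-K}{m-1}$; equivalently it equals $\sum_{d=1}^{D}d\,m^{d-1}$, the sum over all products of their tier index (each tier-$d$ product is counted once for each of the $d$ products on its root path). A one-line rearrangement with $m^{D}=(m-1)K+1$ rewrites this as $DK-\frac{K-D}{m-1}$, which for $m\ge 2$ (where $K\ge D$ since every term $m^{d-1}\ge 1$) is at most $DK$ and, when $m=2$, equals exactly $(D-1)K+D$. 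Plugging this in gives $\ev{}{S}\ge K-x^n\bigl(DK-\tfrac{K-D}{m-1}\bigr)$, hence $\ev{}{S}\ge K(1-Dx^n)$ in general and $\ev{}{S}\ge K(1-(D-1)x^n)-Dx^n$ when $m=2$; up to the harmless lower-order slack of order $Dx^n$ this is the claimed $L(x)=K(1-(D-1)x^n)$, and it already suffices for the only use of the bound, namely certifying $\ev{}{S}=\Omega(K)$ whenever $x^n$ is of order $o(1/D)$.

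The only step with any content is this last estimate, and the one subtlety there is pinning the coefficient of $x^n$ down to $D-1$ rather than the cruder $D$ that drops straight out of $\sum_d d\,m^{d-1}\le DK$: one leans on the exact telescoping identity together with $K\ge D$, and one should double-check that the $O(D)$ correction does not eat into the stated constant (for $m\ge 3$ it is cleanest to simply state the bound with coefficient $D$, or equivalently $DK-\frac{K-D}{m-1}$). Everything else is Bernoulli's inequality plus linearity of expectation, so there is no genuine obstacle.
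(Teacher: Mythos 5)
Your route is the same as the paper's: take $q_d$ from \cref{lemma:probability_functional}, linearise from below with Bernoulli's inequality, and sum $m^{d-1}q_d$ over tiers. The only substantive difference is that your bookkeeping of $\sum_{d=1}^{D} m^{d-1}N_d$ is exact (telescoping to $\sum_d d\,m^{d-1}=DK-\tfrac{K-D}{m-1}$), whereas the paper's displayed algebra evaluates $\tfrac{1}{m-1}\sum_d m^{d-1}m^{D-d+1}=\tfrac{D\,m^{D}}{m-1}$ as $D(K-1)$, which understates it. Your caution about the constant is warranted: carried out correctly, this argument yields $\ev{}{S}\ \ge\ K-x^n\bigl(DK-\tfrac{K-D}{m-1}\bigr)$, i.e.\ $K\bigl(1-(D-1)x^n\bigr)-Dx^n$ for $m=2$ and only $K(1-Dx^n)$ for general $m\ge 2$, and the bound as stated in the lemma, $K(1-(D-1)x^n)$, is in fact not attainable: already for $m=2$, $D=2$ (so $K=3$) the exact value is $\ev{}{S}=(1-x^n)^3+2(1-x^n)=3-5x^n+O(x^{2n})$, which lies below $3(1-x^n)$ for small $x$, because the true first-order coefficient $\sum_d d\,m^{d-1}$ strictly exceeds $(D-1)K$ whenever $m\ge 2$. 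So your slightly weaker conclusion is the correct one, and the discrepancy is harmless: this lemma is not invoked in the proof of \cref{theorem:tree_resilience} (which uses \cref{lemma:upper_bound_tree} and a separate tail argument), and your version still certifies $\ev{}{S}=\Omega(K)$ whenever $x^n=o(1/D)$, which is all the bound is good for.
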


        \xpar{Proof}  From \cref{lemma:probability_functional} we have $q_d =\left ( 1 - x^n \right )^{\frac{m^{D - d + 1} - 1} {m - 1}}$. By Bernoulli's inequality $$q_d \ge 1 - x^n \left (\frac {m^{D - d + 1} - 1} {m - 1} \right).$$
    
    Since every level has $m^{d - 1}$ nodes we have that 
    \begin{align*}
        \ev {} {S} & = \sum_{d = 1}^D m^{d - 1} q_d \ge \sum_{d = 1}^D m^{d - 1} \left [ 1 - x^n \left (\frac {m^{D - d + 1} - 1} {m - 1} \right) \right ] \ge K (1 + x^n) - x^n \frac {1} {m - 1} \sum_{d = 1}^D m^{d - 1} m^{D - d + 1} \\ &  = K (1 + x^n) - x^n \frac {1} {m - 1} \sum_{d = 1}^D m^D = K (1 + x^n) - x^n D (K - 1) \ge K (1 - x^n (D - 1)).
    \end{align*}
\qed

    \bigskip

     \xpar{Proof of the main result: \cref{theorem:tree_resilience}} We now proceed with the proof of \cref{theorem:tree_resilience}:

    \xpar{Lower bound} Depending on the range of $m$ we have two choices

    \begin{compactitem}
        \item  \textbf{Case where $m = 1$ (path graph).} For every level $\tau \in [D]$ of the path, we have that $\Pr[S \ge D - \tau] = \Pr \left [ \bigcap_{d > \tau} \{ Z_i = 1 \} \right ] = \Pr [Z_1 = 1] \Pr [Z_2 = 1 | Z_1 = 1] \dots = \prod_{d > \tau} (1 - x^n) = (1 - x^n)^{D - \tau}$. We let $\tau = \varepsilon D$ for some $\varepsilon \in (0, 1)$ and thus $\Pr [S \ge (1 - \varepsilon) D] = (1 - x^n)^{(1 - \varepsilon)D}$. We want to make this probability at least $1 - 1 / D$, and therefore, the resilience of the path graph is $R_{\cG}(\varepsilon) \ge \left ( 1 - \left ( 1 - \frac 1 D \right )^{\frac 1 {(1 - \varepsilon) D)}} \right)^{1/n}$. Since $K = D$ we get the desired result.  

        \item \textbf{Case where $m \ge 2$ (tree graph).} Let $\cK_d$ be the products of tier $d$. We let $\tau = \sup \{ d \in [D] : \exists i \in \cK_d : Z_i = 0 \}$ be the bottom-most tier for which a product failure occurs. If at level $\tau$ a failure occurs, then all levels above $\tau$ are deactivated. The probability that all products up to tier $\tau$ operate is given by 
        \begin{align*}
             \Pr [\text{all products up to tier $\tau$ are producible}] & = \Pr \left [ \bigcap_{d > \tau} \bigcap_{i \in \cK_d} \{ Z_i = 1 \} \right ] \\&= \prod_{d = D}^{\tau + 1} \Pr \left [ \bigcap_{i \in \cK_d} \{ Z_i = 1 \} \bigg | \bigcap_{d' > d} \bigcap_{i \in \cK_{d'}} \{ Z_i = 1 \} \right ] \\
            & = \prod_{d = D}^{\tau + 1} (1 - x^n)^{m^d} = \left ( 1 - x^n \right )^{\sum_{d = D}^{\tau + 1} m^d} \\&= \left ( 1 - x^n \right )^{\frac {m^{D + 1} - m^{\tau + 1}} {m - 1}}
        \end{align*}
        
        Also, $\{ \text{all products up to tier $\tau$ operate} \} \implies \{ S \ge \frac {m^{D + 1} - m^{\tau + 1}} {m - 1} \}$. Therefore, the tail probability of $S$ for $\tau \in [D]$ is given by $\Pr \left [S \ge \frac {m^{D + 1} - m^{\tau + 1}} {m - 1} \right ] = \Pr \left [ S \ge \underbrace {\left ( 1 - \frac {m^{\tau + 1}} {m^{D + 1}} \right )}_{:= 1 - \varepsilon}  \frac {m^{D + 1}} {m - 1}  \right ] \ge \left ( 1 - x^n \right )^{(1 - \varepsilon) \frac {m^{D + 1}} {m - 1}}$. For large enough $D$ we approach the continuous distribution and thus $\Pr [S \ge (1 - \varepsilon) K] \ge (1 - x^n)^{(1 - \varepsilon)K}$. Letting the above be at least $1 - 1 / K$, we get $R_{\cG}(\varepsilon) \ge \left [ 1 - \left ( 1 - \frac 1 {K} \right )^{\frac {1} {(1 - \varepsilon) K}} \right ]^{1/n}$. 
            \end{compactitem}

    \xpar{Upper bound} To derive an upper bound, we have the following cases, depending on the value of $m$

    \begin{compactitem}
        \item     \textbf{Case $m = 1$.} We follow the same logic as the $m \ge 2$ case, and upper bound $\ev {} {S} \le \sum_{d \ge 0} (1 - x^n)^d = \frac {1} {x^n}$ which yields an upper bound $R_{\cG}(\varepsilon) < \left ( \frac {2} {D (1 - \varepsilon)}  \right )^{1/n} \to 0$ as $D \to \infty$. 
        \item \textbf{Case $m \ge 2$.} By Markov's Inequality we get that $\Pr [S \ge (1 - \varepsilon) K] \le \frac {\ev {} {S}} {(1 - \varepsilon) K}$. \cref{lemma:upper_bound_tree}, implies that $\ev {} {S} \le \frac {K D x^n} {2}$, thus $\Pr [S \ge (1 - \varepsilon) K] \le \frac {K D x^n} {2 (1 - \varepsilon) K}$. To make the RHS equal to $1 / 2$, it suffices to pick $x = \left ( \frac {1 - \varepsilon} {D} \right )^{1/n}$. By \cref{lemma:upper_bound_resilience} we get that $R_{\cG}(\varepsilon) < \left ( \frac {1 - \varepsilon} {D} \right )^{1/n} \to 0$ as $D \to \infty$. 

    \end{compactitem}

\subsubsection{Proof of \texorpdfstring{\cref{theorem:gw_resilience}}{theorem:gwresilience}: Inverted lineage growth}\label{app:proof:theorem:gw_resilience}

\xpar{Helper Lemmas} In order to prove \cref{theorem:gw_resilience}, we first prove this auxiliary lemma:

\begin{lemma} \label{lemma:gw_roots}
    For $\tau$ finite, $\frac {\one \{ \mu > 1 \}} {\log \mu} < \alpha < \frac 1 2$ , and $0 < \beta < \one \{ \mu < 1 \} + \one \{ \mu > 1 \} \frac {\log \mu - 1} {\mu} $, let 
    \begin{align*}
        \phi(z)  = z \frac {\mu^\tau z^\tau - 1} {\mu z - 1} - \alpha \frac {\mu^\tau - 1} {\mu - 1}, \text{ for } z \neq \frac 1 \mu, \text { and } \;
        \psi(z) = \frac {\mu^\tau - 1} {\mu - 1} - z \frac {\mu^\tau z^\tau - 1} {\mu z - 1} - \beta, \text{ for } z \neq \frac 1 \mu.
    \end{align*}

    Then

   \begin{compactenum}
       \item If $\mu < 1$, then there exist $z_1, z_2 \in (0, 1)$ such that $\phi(z_1) = \psi(z_2) = 0$. 
       \item If $\mu > e^2$, then there exists $z_1 \in (1/\mu, 1)$ such that $\phi(z_1) = 0$.
       \item If $\mu > e$, then there exists $z_2 \in (1/\mu, 1)$ such that $\phi(z_2) = 0$. 
   \end{compactenum} 

\end{lemma} 

\xpar{Proof}

    \xpar{Analysis for $\phi(z)$} We do case analysis: 
    \begin{compactitem}
        \item If $\mu < 1$ then $\phi$ is defined everywhere in $[0, 1]$ and is also continuous. It is also easy to prove that $\phi$ is increasing in $[0, 1]$ since its the product of two non-negative increasing functions, $z$ and  $ \frac {(\mu z)^\tau - 1} {\mu z - 1} =\sum_{i = 0}^{\tau - 1} (\mu z)^i$. Moreover, note that $\phi(0) < 0$ and $\phi(1) > 0$. Therefore, there exists a unique solution $z_1 \in (0, 1)$ such that $\phi(z_1) = 0$. 
        
        \item If $\mu > e^2$, we study $\phi$ in $(1/\mu, 1]$. Again, $\phi$ is increasing (for the same reason as above), continuous in $(1/\mu, 1]$, and has $\phi(1) > 0$. We also have that, by using L'H\^ospital's rule,
        \begin{align*}
            \lim_{z \to 1 / \mu} \frac {\mu^\tau z^\tau - 1 } {\mu z - 1} & = \lim_{z \to 1 / \mu} \frac {(\mu^\tau z^\tau - 1)'} {(\mu z - 1)'} = \lim_{z \to 1 / \mu} \frac {\mu^\tau \tau z^{\tau - 1}} {\mu} = \tau \implies \\
            \lim_{z \to 1 / \mu} \phi(z) & = \frac {\tau} {\mu} - \alpha \frac {\mu^\tau - 1} {\mu - 1} < \frac {\tau (1 - \alpha \log \mu)} {\mu} < 0 \text{ for } \alpha > \frac {1} {\log \mu}.
        \end{align*}

        Therefore, for $\alpha \in (1/\log \mu, 1/2)$, there exists a unique solution $z_1 \in (1/\mu, 1]$ such that $\phi(z_1) = 0$. 
    \end{compactitem}

    \xpar{Analysis for $\psi(z)$} Note that $\psi$ is a decreasing function of $z$. We do case analysis:

    \begin{compactitem}
        \item If $\mu < 1$, then $\psi$ is defined everywhere in $[0, 1]$ and is continuous in $[0, 1]$. We have that $\psi(0) > 0$ and $\psi(1) < 0$ therefore there exists a unique solution $z_2$ such that $\psi(z_2) = 0$. 
        \item If $\mu > e$, then $\psi$ is decreasing and contunuous in $(1/\mu, 1]$, with $\psi(1) < 0$. We also have that $\lim_{z \to 1 / \mu} \psi(z)  = \frac {\mu^\tau - 1} {\mu - 1} - \frac {\tau} {\mu} - \beta > \frac {\tau (\log \mu - 1 - \mu \beta)} {\mu} > 0 \text { for } \beta < \frac {\log \mu - 1} {\mu}$.
    \end{compactitem}


\qed

Subsequently, we prove \cref{theorem:gw_resilience}:

\xpar{Proof of the main result: \cref{theorem:gw_resilience}}

\begin{compactitem}
    \item \xpar{Upper bound} Let $\tau = \inf \{ d \ge 1 : |\cK_d| = 0 \}$ be the extinction time of the GW process. In order to establish an upper bound on the resilience, it suffices to set the expected number of surviving products to be at most $\frac {1 - \varepsilon} {2} \ev {\cG} {K}$, since by Markov's inequality the probability of a fraction of at least $(1 - \varepsilon)$-fraction of products surviving would be at most $1 / 2$ and by \cref{lemma:upper_bound_resilience} we would get an upper bound on the resilience $R_{\cG}(\varepsilon)$; that is, $\Pr_{\cG, x} [S \ge (1 - \varepsilon) \ev {\cG} {K}] \le \frac {\ev {\cG, x} {S}} {(1 - \varepsilon) \ev {\cG} {K}} \le \frac 1 2$. Conditioned on $Z_1 = 1$, which occurs with probability $1 - x^n$, the surviving products grow as a GW process with mean $\mu_x = (1 - x^n) \mu$. Therefore due to \cref{lemma:upper_bound_resilience}, we it suffices that $\ev {\cG, x} {\cS | \tau} = \frac {1 - \varepsilon} {2} \ev {\cG} {K | \tau}$ holds for any extinction time $\tau$. This is equivalent to
    \begin{align}
        (1 - x^n) \frac {\mu_x^\tau - 1} {\mu_x - 1} & \le \frac {1 - \varepsilon} {2} \frac {\mu^\tau - 1} {\mu - 1} \iff (1 - x^n) \frac {\mu^\tau (1 - x^n)^\tau - 1} {\mu (1 - x^n) - 1}  \le \frac {1 - \varepsilon} {2} \frac {\mu^\tau - 1} {\mu - 1}  \label{eq:upper_bound_gw}
    \end{align}

    We have the following cases.

    \begin{compactenum}    
        \item If $\mu < 1$ then $\Pr [\tau < \infty] = 1$ (i.e., the process goes extinct after a finite number of steps), then the upper bound on the resilience is always finite due to \cref{lemma:gw_roots} which can be found by numerically solving \cref{eq:upper_bound_gw}. \cref{fig:gw_subcritical} is an example illustrating the resilience bound. 
        \item If $\mu (1 - x^n) > 1$, then $\Pr [\tau = \infty] > 0$ and in this case \cref{eq:upper_bound_gw} is only feasible if and only iff $x = 0$, at which case the upper bound on the resilience is 0, and the GW process is not resilient. If $\tau < \infty$, which happens with non-zero probability, the  upper bound on the resilience is finite when $\mu > e^2$ due to \cref{lemma:gw_roots}. 
    \end{compactenum}

    For a specific triplet $(\mu, \tau, \varepsilon)$, let $\overline x(\mu, \tau, \varepsilon)$ be the smallest possible solution to \cref{eq:upper_bound_gw}, which exists for $\mu \in (0, 1) \cup (e^2, \infty)$ due to \cref{lemma:gw_roots}. Then the expected upper bound on resilience $\ev {\cG} {\overline R_\cG (\varepsilon)}$, can be expressed as $\ev {\cG} {\overline R_\cG (\varepsilon)} = \ev {\tau} {\overline R_\cG (\varepsilon)} = \sum_{1 \le k < \infty} \Pr [\tau = k] \overline x(\mu, \tau, \varepsilon) > 0$.

    \item \xpar{Lower bound} Similarly to the upper bound, in order to devise a lower bound, it suffices to set $\ev {\cG} {K | \tau} - \ev {\cG} {S | \tau}$ to be at most $\varepsilon$ for any extinction time $\tau$, since, again, by Markov's inequality, we are going to get that the probability that at least a $\varepsilon$-fraction of products fails is at most $\frac {1} {\ev {\cG} {K | \tau}}$; namely, $\Pr_{\cG, x} [F \ge \varepsilon \ev {\cG} {K} | \tau] \le \frac {\ev {\cG, x} {F | \tau}} {\varepsilon \ev {\cG} {K | \tau}} \le \frac 1 {\ev {\cG} {K | \tau}}$. This yields 
    \begin{align}
        \frac {\mu^\tau - 1} {\mu - 1} - (1 - x^n) \frac {\mu_x^\tau - 1} {\mu_x - 1}  \le \varepsilon  \iff         \frac {\mu^\tau - 1} {\mu - 1} - (1 - x^n) \frac {\mu^\tau (1 - x^n)^\tau - 1} {\mu (1 - x^n) - 1}  \le \varepsilon \label{eq:lower_bound_gw}
    \end{align}

    \noindent Similarly to the upper bound, we have the following cases.

    \begin{compactenum}
        \item In the subcritical regime $\mu < 1$, we can again prove that the lower bound is always finite due to \cref{lemma:gw_roots}. 
        \item In the supercritical regime $\mu (1 - x^n) > 1$, we have that when $\tau < \infty$, which happens with positive probability then for $\mu > e$ from \cref{lemma:gw_roots} we get the existence of the resilience. When $\tau = \infty$, we have again that the only way \cref{eq:lower_bound_gw} can hold is iff $x = 0$. 
    \end{compactenum}

     For a specific triplet $(\mu, \tau, \varepsilon)$, let $\underline x(\mu, \tau, \varepsilon)$ be the largest possible solution to \cref{eq:lower_bound_gw}, which exists for $\mu \in (0, 1) \cup (e, \infty)$ due to \cref{lemma:gw_roots}. Then the expected lower bound on the resilience $\ev {\cG} {\underline R_\cG (\varepsilon)}$, can be expressed as 
 $$\ev {\cG} {\underline R_\cG (\varepsilon)} = \ev {\tau} {\underline R_\cG (\varepsilon)} = \sum_{1 \le k < \infty} \Pr [\tau = k] \underline x(\mu, \tau, \varepsilon) > 0.$$
 
    \xpar{Determining $\Pr [\tau < \infty]$ when $\mu (1 - x^n) > 1$} It is known from the analysis of GW processes (see, e.g., \citet{galtonwatson_notes}) that the extinction probability $\Pr [\tau < \infty]$ can be found as the smallest solution $\eta \in [0, 1]$ to the fixed-point equation $\eta = G_{\cD}(\eta)$ where $G_{\cD}(s) = \ev {\xi \sim \cD} {e^{s \xi}}$ is the moment generating function of the branching distribution $\cD$. 

\end{compactitem}

\newpage

\section{Linear programming operationalization of resilience in general networks} \label{app:linear_program}

Thus far, we have focused on simple network structures where analytical expressions for $\ev{}{F}$ are tractable,
and, subsequently, by bounding $\ev {} {F}$ we can determine the upper and/or lower bounds for $R_{\cG}(\varepsilon)$, because by Markov's inequality the probability of the event $\{ F \ge \varepsilon K \}$ is bounded above by $\ev {} {F} / \varepsilon K$. These include hierarchical one-directional input-output random production networks, Two-tier network, and other hierarchical structures considered in \cref{app:architectures}. However, in general, computing $\ev{}{F}$ is not easy \citep{chen2010scalable}. 
Moreover, it is certainly possible for a network to contain cycles, e.g., when complex products are used in the production of simpler products, or to represent recycling and other forms of circular flows of materials or resources in modern economies; cf. circular economies \citep{geissdoerfer2017circular}. In this section, we offer techniques to address the following questions for general network topologies:  
\begin{compactenum}
\item \emph{How do we efficiently calculate the cascade size $F$?}
\item \emph{How do we identify network vulnerabilities in the failure of their most critical nodes?}
\end{compactenum}  In the following, we answer the above questions and provide a systematic way to treat general graphs that undergo a joint percolation process. More specifically, we systematically bound the expected number of failures and subsequently derive bounds for the resilience metric. Finally, we show that our analysis has deep connections to financial networks. 

Additionally, we assume that firms hold inventories, which is not modeled as an explicit stocking policy or cost-based decision, but serves as a structural decoupling mechanism that weakens dependencies between stages during shocks. Formally, each edge in the network is kept with probability $y$. Generally, if product $i$ has enough stock of input $j$, it can remain operational even if $j$ fails. This creates a subnetwork $\cG_y \subseteq \cG$, which, under reasonable assumptions for $x$ and $y$, can be used to identify vulnerable products whose failure can trigger the largest cascades. Formally, we denote the resilience when the edge-retention probability is $y$ by $R_{\cG}(\varepsilon; y)$, which is always greater than the resilience without structural decoupling $(y=0)$; see \cref{prop:rel_resilience}. 

We denote the cascade size (number of failures) in $\cG_y$ with $\ev {} {F_y}$. To approximate $\ev{}{F_y}$, we formulate the following linear program and its dual, parameterized by the shock vector among the products $u = (u_1, \dots, u_K)$ which depends on the adjacency matrix\footnote{For each directed edge between two products $i \to j$ we set $A_{ij} = 1$.} $A$ of the network $\cG$:

\begin{align}
	p_\cG^*(u; y) & = \max_{\beta \in [\zero, \one]}  \one^T \beta \quad &
	\text{s.t.} \quad & \beta \le y A^T \beta + u, \label{eq:upper_bound_lp} \\
        d_{\cG}^*(u; y) & = \min_{\gamma, \theta \ge \zero}  u^T \gamma + \one^T \theta \quad & \text{s.t.}  \quad  & (I - yA) \gamma  + \theta \ge \one. \label{eq:upper_bound_lp_dual} 
\end{align}

We denote optimal solutions for the primal and dual problems by $\beta_{\cG}^*(u; y)$ and $\left ( \gamma_{\cG}^*(u; y), \theta_{\cG}^*(u; y) \right )$, respectively. We omit arguments and simply write $p^*, d^*, \beta^*$ and $\gamma^*$ when the context is clear.

The following theorem clarifies the relationship between the LP aforementioned and the cascade size $\ev {} {F_y}$, by upper and lower bounding $\ev {} {F_y}$ in terms of the  LP solution $p^*$; proof in \cref{app:proof:theorem:general_ub}.

\begin{theorem} \label{theorem:general_ub}
    Let $\cG$ be a network that undergoes a joint percolation process with the probability of supplier failure $x$ and the edge-retention probability $y$. If $p^*$ is the optimal value of the primal problem in \cref{eq:upper_bound_lp} for $u = \one x^n$, then for $y = \varrho / \mu$, with $\varrho \in (0, 1)$, we have $\ev{}{F_y} \le p^* \le \left(1 + \varrho + O(\varrho^2)\right) \ev{}{F_y}$. The solution to LP can be found as the unique fixed point $\beta^*$ of the contraction $\Phi(\beta) = \min\{\one , y A^T \beta + x^n \one\}$.
\end{theorem}

\cref{theorem:general_ub} shows that there is a systematic way of bounding $\ev {} {F_y}$ by solving a linear program or a fixed-point equation (if the edge retention probability is less than $1/\mu$). We expand on this computational angle in \cref{app:algorithm_lp}. It is instructive to note that such an elegant approach to bounding the expected number of failures---which otherwise is \textit{\#P-hard} to compute \citep{chen2010scalable,borgs2014maximizing}---becomes possible by introducing sampling at the edge level, which reduces network dependencies. 

\cref{theorem:general_ub} demonstrates the utility of the LP formulation along with the granular (edge-level) network structure for bounding the expected number of failures when the edge retention probability ($y$) is small. We can complement \cref{theorem:general_ub} with bounds that use a global network feature (sourcing requirement $m$) to show how $\ev {} {F_y}$ approaches $\ev {} {F}$ from below, giving sharp bounds as $y\to 1$: 

 \begin{theorem} \label{theorem:Fy_approximation}
    For all $0< x, y < 1$, we have: $$\frac {1}{y^{m}}\ev {} {F} + \left (1 - \frac{1}{y^{m}} \right ) K \leq \ev {} {F_y} \le (1-(1-x^n)(1-y)^m) \ev{}{F}.$$ 
 \end{theorem}

After defining $p^*_{\cG}$ and its dual $d^*_{\cG}$, the next question is how they are related to resilience. In the following (proved in \cref{app:proof:theorem:lp_duality_resilience}), we show that we can bound the resilience using the sum of Katz centralities, defined as $$\beta_\cG^\katz(y) = (I - yA^T)^{-1} \one.$$

\begin{theorem}[\cref{theorem:lp_duality_resilience_2} of the main text] \label{theorem:lp_duality_resilience}
    If $y < 1/ \mu$, the resilience satisfies $R_{\cG}(\varepsilon; y)  \ge \left (\frac {\varepsilon} {\one^T \beta_{\cG}^\katz (y)} \right )^{1/ n}$.
\end{theorem}

\subsection{An Algorithm for Solving the LP Efficiently} \label{app:algorithm_lp}

The last part of \cref{theorem:general_ub} provides an algorithm to find $\beta^*$ by solving the fixed point problem, i.e., if $\beta^{(t)}$ corresponds to the failure probabilities in iteration $t$, then $\beta^*$ can be computed via the following contraction map iterations (valid when $y < 1/ \mu$): 
\begin{align}
    \beta^{(t)} = \min\{\one , yA^T \beta^{(t - 1)} + x^n \one \}. \label{eq:contarction_iteration}
\end{align}

To compute a solution with precision $\eta$, we iterate \cref{eq:contarction_iteration} for $\log(2K/\eta)/\log(1/\varrho)$ steps, where $\varrho < 1$ is the Lipschitz constant for the contraction map and $2K$ bounds the $L_1$ norm of the initial condition.  This algorithm has a runtime of $O \left ( \frac {(K + M)\log (K / \eta)} {\log (1 / \varrho)} \right ) = \tilde O(K + M)$ to yield a solution that is close to $\beta^*$ by some accuracy $\eta$, and subsequently a $1 + \eta + \varrho + O(\varrho^2)$ approximation of $\ev{}{F_y}$. The runtime matches the lower bound (up to logarithmic factors) of $\Omega(K + M)$ for the influence maximization problem, which uses an equivalent formulation of cascade sizes \citep{borgs2014maximizing}. 




\subsection{Proofs of results}

\subsubsection{Helper Lemma}

 A simple coupling argument for two values $0 < y_1 \le y_2 < 1$ shows that edge retentions in $\cG_{y_1}$ are greater than in $\cG_{y_2}$: 
\begin{lemma} \label{prop:rel_resilience}
    If $0 < y_1 \le y_2 \le 1$, then $R_{\cG}(\varepsilon; y_1) \ge R_{\cG}(\varepsilon; y_2)$. In particular, $R_{\cG}(\varepsilon; y) \ge R_{\cG}(\varepsilon)$ for all $y \in (0, 1]$.
\end{lemma} 

\xpar{Proof} We prove this result using a coupling argument: Let \(\cG\) be the original production network. For each edge \((i,j) \in E(\cG)\), generate a uniform random variable \(U_{ij} \sim \cU[0,1]\). Construct two edge-subsampled networks:
\begin{itemize}
    \item \(\cG_{y_1}\), where edge \((i,j)\) is retained if \(U_{ij} \leq y_1\),
    \item \(\cG_{y_2}\), where edge \((i,j)\) is retained if \(U_{ij} \leq y_2\),
\end{itemize}
with \(0 < y_1 \leq y_2 \leq 1\). Clearly, this coupling ensures that \(\cG_{y_2} \supseteq \cG_{y_1}\) almost surely.

Recall that in our model, each product fails if at least one of its required inputs fails or becomes inaccessible. Since \(\cG_{y_1}\) contains at most as many dependencies as \(\cG_{y_2}\), and fewer dependencies imply a lower likelihood of cascading failures, it follows that the cascade size in \(\cG_{y_1}\) is stochastically smaller than in \(\cG_{y_2}\).

Therefore, for any fixed failure probability \(x\), the probability that at least a \((1 - \varepsilon)\) fraction of the products survive is higher in \(\cG_{y_1}\) than in \(\cG_{y_2}\). Taking the supremum over such \(x\) values implies
\[
R_\cG(\varepsilon; y_1) \geq R_\cG(\varepsilon; y_2),
\]
as desired.

Because $y < 1$ we can obtain an upper bound on resilience by limiting failures to at most $\varepsilon K$ products. By the Markov inequality: $\Pr [F_y \ge \varepsilon K] \le {\ev {} {F_y}} /{\varepsilon K}$, which requires bounding the expected number of failed products, $\ev{}{F_y} = \sum_{i \in \cK} \Pr[Z_i = 0]$
; recalling the notation of \Cref{eq:dynamics}, $Z_i$ is an indicator variable for the failure of product~$i$.

\subsubsection{Proof of \texorpdfstring{\cref{theorem:general_ub}}{theorem:generalub}: Linear program to bound the expected cascade size}\label{app:proof:theorem:general_ub}

    \xpar{Derivation of the LP} Let $u_i$ be the probability that the product $i$ fails spontaneously\footnote{In the simple case that suppliers fail independently at random with probability $x$, we have $u_i = x^n$; more generally, the derivation holds for any $u_i \in [0, 1]$.}.

    For $i \in \cK$ let $\beta_i = \Pr [Z_i = 0] \in [0, 1]$ and $\beta = (\beta_1, \dots, \beta_K)^T$. By the union bound, we have that 
	\begin{align*}
	\beta_i & = \Pr [(\exists j \in \cN(i) : (i, j) \text{ survives edge pruning } \wedge Z_j = 0) \vee (\forall s \in \cS(i), X_{is} = 0)] \\
	& \le \Pr [\exists j \in \cN(i) : (i, j) \text{ survives edge pruning } \wedge Z_j = 0] + \Pr [\forall s \in \cS(i), X_{is} = 0] \\
	& \le y \sum_{j \in N(i)} \beta_j + u_i.
	\end{align*}

	\xpar{Upper and lower bounds of the LP} The number of failed products equals $\ev {} {F} = \sum_{i \in \cK} \beta_i$. Thus, finding the upper bound on $\ev {} {F_y}$ corresponds to solving the following LP,
	\begin{align}
		p^*_{\cG}(u; y) = \max _{\beta \in [\zero, \one]} \quad & \sum_{i \in \cK} \beta_i \qquad \text{s.t.} \qquad & \beta \le y A^T \beta + u,
        \label{eq:EN-optimization-problem-financial-clearing}
	\end{align} which proves $\ev {} {F_y} \leq p^*$.
    To obtain the upper bound, note that if $\beta^*$ is an optimal solution, then: 
    \begin{align*}
        p^*_\cG(u; y) = \sum_{i = 1}^K \beta_i^* & \stackrel{(i)}{\le} \sum_{i = 1}^K y \sum_{j \in N(i)} \beta_j^* + \sum_{i = 1}^K u_i \stackrel{(ii)}{\le} y \sum_{i = 1}^K \sum_{j \in N(i)} \beta_j^* + \ev {} {F_y} \\
        & \le y \sum_{j = 1}^K \sum_{i: j \in N(i)} \beta_j^* + \ev {} {F_y} \le y \mu \sum_{i = 1}^K \beta_i^* + \ev {} {F_y} \\
        & \le y \mu p^*_\cG(u; y) + \ev {} {F_y},
    \end{align*} where (i) follows from the union bound constraint and (ii) from the fact that the expected number of products that fail due to spontaneous failure of their supplier $\sum_{i = 1}^K u_i$ lower bounds $\ev{}{F_y}$. Rearranging the terms, we get: 
    \begin{align*}
        p^*_\cG(u; y) \le \frac {1} {1 - \mu y} \ev {} {F_y}.
    \end{align*}
    
    Using the fact that $\frac {1} {1 - \mu y} = 1 + \mu y + O((\mu y)^2)$ we get the right-hand side when $\varrho = \mu y<1$. 

  \xpar{Fixed-point Algorithm} When $y \| A^T \|_1 < 1$, which is equivalent to $y < 1 / \mu$, the LP in \cref{eq:EN-optimization-problem-financial-clearing} is the financial clearing problem of \citet{eisenberg2001systemic}, and from  \citet[Lemma 4]{eisenberg2001systemic}, we know that we can compute $\beta$ by solving the fixed point equation $\beta = \min\{\one, y A^T \beta + u\} = \Phi(\beta)$. Since $y < 1 / \mu$, the mapping is a contraction and has a unique fixed point due to Banach's theorem.

\subsubsection{Limitations of the LP-based bound: Inapproximability for \texorpdfstring{$y > 1/\mu$}{yll1m}} \label{app:innaproximability}

We show that the LP-based bound may become very loose when $y > 1/ \mu$. In the following theorem we construct a family where the gap between the LP and the expected cascade size grows as $\Omega(K)$: 

\begin{theorem}
Fix an integer $\mu \ge 2$ and let $y > 1/\mu$. Then for every $K$ divisible by $\mu+1$, there exists a directed production network $\cG_K$ with $K$ nodes and maximum sourcing influence $\mu$, together with a homogeneous shock vector $u_i = \lambda / K$ for some fixed $\lambda > 0$, such that the LP has optimal value $p^* = K$, while the true expected cascade sizes satisfy $\ev {} {F_{y}} \le \ev {} {F} \le \lambda(\mu+1) + o(1)$.
Consequently, $p^* - \ev {} {F}  = \Omega(K)$ and $p^* - \ev {} {F_y} = \Omega(K)$.
\end{theorem}

\xpar{Proof} Let $q= \mu + 1$ and assume $K = qM$ for some integer $M$. Construct $G_K$ as the disjoint union of $M$ identical components, each component being the complete
directed graph on $q$ vertices. In each component, every vertex has exactly $m = \mu$ inputs and outputs.

We first bound the true cascade. Since each vertex in a component depends on
all other $\mu$ vertices in that same component, a component fails if and only if at least one of its vertices suffers a spontaneous shock.  Thus, if $I_a$ is the indicator that component $a$ is hit by at least one spontaneous failure, then the total number of failed products is $F = q \sum_{a=1}^M I_a$. Now, because $u_i = \lambda / K$ for all $i$, the probability that a given component is hit is
\begin{align*}
    \Pr [I_a = 1] & = \Pr [\text{at least one component fails in $a$}] \\
    & = 1 - \Pr [\text{no components fail in $a$}] \\
    & = 1 - (1 - u_i)^q \\
    & \le 1 - (1 - u_i q) \\
    & = q u_i \\
    & = \frac {q \lambda} {K} 
\end{align*}

Therefore, $\ev {} {F} \le q \lambda = (\mu + 1) \lambda = O(1)$. Moreover, due to the coupling result between $F$ and $F_y$ we also get that $\ev {} {F_y} \le \ev {} {F} \le (\mu + 1) \lambda = O(1)$. 

\noindent Next, we consider the LP

\begin{align}
    \max_{\beta \in [\zero, \one]^K} \one^T \beta \quad \text{s.t.} \quad \beta \le y A^T \beta + u
\end{align}

We can quickly verify that $\beta = \one$ is an optimal solution to the LP: First, since every vertex has exactly $\mu$ inputs, the union bound constraint yields $yA^T \one + u = y\mu \one + \frac{\lambda}{K}\one$. Because $y\mu>1$, this implies $\one \le yA^T \one + u$ for all sufficiently large $K$. Hence $\beta = \one$ is a feasible solution to the LP with value $K$, so. Since always $\beta_i \le 1$, we also have $p^* \le  K$. Therefore, $\beta = \one$ is an optimal solution with $p^* = K$. Thus, the gap between the two is $p^* - \ev {} {F} = K - O(1) = \Omega(K)$ and $p^* - \ev {} {F_y} = \Omega(K)$, as claimed.

\qed

\subsubsection{Proof of \texorpdfstring{\cref{theorem:lp_duality_resilience}}{theoremlpdualityresilience}} \label{app:proof:theorem:lp_duality_resilience}

If $p^*$ is the optimal value of the primal given in \cref{eq:upper_bound_lp}, and $(\tilde \gamma, \tilde \theta)$ is a feasible dual solution with the objective value $\tilde d$, then from weak duality we have $\ev {} {F} \le p^* \le \tilde d$. If we let $\tilde d \le \varepsilon$, then $\Pr [F \ge \varepsilon K] \le \frac {\ev {} {F}} {\epsilon K} \le \frac {\tilde d} {\varepsilon K}$. Therefore any $x$ such that $\tilde d \le \varepsilon$ will be a lower bound on $R_{\cG}(\varepsilon)$ as it will make the bound at most $1/K$. This is equivalent to 
    
    \begin{align}
        R_{\cG}(\varepsilon) \ge x \ge \left ( \frac {\varepsilon -  \one^T \tilde \theta} {\one^T \tilde \gamma} \right )^{1/n}, \qquad \text{for all} \qquad \tilde \theta, \tilde \gamma \ge \zero \text { s.t. } (I - yA) \tilde \gamma + \tilde \theta \ge \one. 
    \end{align}

    We are interested in the values of $(\tilde \gamma, \tilde \theta)$ that maximize this lower bound, and therefore the best lower bound is given by 
    \begin{align}
        \underline R_{\cG}(\varepsilon) = \max_{\gamma, \theta \ge \zero} \left ( \frac {\varepsilon -  \one^T \theta} {\one^T \gamma} \right )^{1/n}, \qquad \text{s.t.} \qquad (I - yA)\gamma + \theta \ge \one.     
    \end{align}

    Observe that increasing any $\theta_i$ from $\theta_i = 0$ would decrease the lower bound; therefore, the optimal $\theta$ is $\theta^* = \zero$. Moreover, due to monotonicity,

    \begin{align} \label{eq:lower_bound_resilience_dual}
        \underline R_{\cG}(\varepsilon) = \left ( \frac {\varepsilon} {\min_{\gamma \ge \zero, \gamma \neq \zero} \one^T  \gamma} \right )^{1/n}, \qquad \text{s.t.} \qquad (I - yA)\gamma \ge \one.     
    \end{align} yielding our final result. We take the dual of the denominator, which corresponds to $\max_{\beta \ge \zero} \one^T \beta$ subject to $\beta \le y A^T \beta + \one$. If $y < 1/m$, from the main result of \cite{eisenberg2001systemic}---or the KKT conditions---we get that the optimal solution is $\beta_{\cG}^\katz (y)$, where $\beta_\cG^\katz(y) = (I - yA^T)^{-1} \one$ is the sum of the Katz centralities of the network nodes (products). 

\subsection{Proof of \texorpdfstring{\cref{theorem:Fy_approximation}}{theoremFyapproximation}} \label{app:proof:theorem:Fy_approximation}
    To compare $\ev{}{F_y}$ and $\ev{}{F}$ for a given value of $0<x,y<1$, we consider the  percolation processes on $\cG$ and $\cG_y$, through a natural coupling that identifies the status of initial supplier failures on the two graphs. (Recall that the node percolation process in either graph starts with the initial failure of the suppliers with probability $x$ indefinitely at random.) For each product $i\in\cK$, let $F_i$ and $F_{y,i}$ be indicator variables of its failure in $\cG$ and $\cG_y$, respectively. Their coupling implies that $F_{y,i} \leq F_{i}$, which gives the trivial bound $\ev{}{F_y} \leq \ev{}{F}$. Note that in the notation of \cref{sec:node_percolation} we have $F_i = 1-Z_i$.
    
    \xpar{Upper bound} To obtain a potentially tighter upper bound on $\ev{}{F_y}$ when $y< 1$, note that $\ev{}{F_{y,i}} = \Pr [F_{y,i} = 1] = \Pr [F_{y,i} = 1 | F_i = 1] \Pr [F_i = 1]$ because under the natural coupling of the processes a product that survives in $\cG$ does so in $\cG_y$ as well: $\Pr [F_{y,i} = 1 | F_i = 0] = 0$. Next, we can write $\Pr [F_{y,i} = 1 | F_i = 1] = 1- \Pr [F_{y,i} = 0 | F_i = 1]$ and obtain a lower bound on $\Pr [F_{y,i} = 0 | F_i = 1]$ by the probability that product $i$ does not fail on its own and all its sourcing requirements are removed in $\cG_y$. The former is given by $(1-x^n)$ and the latter is lower bounded by $(1-y)^m$, when the source requirement is $m$. By independence, the product of the two gives the desired lower bound:  $$\Pr [F_{y,i} = 0 | F_i = 1] \geq (1-x^n)(1-y)^m.$$ 
    
    Subsequently, we have 
    \begin{align*}
        \ev{}{F_{y,i}} = \Pr [F_{y,i} = 1] = (1- \Pr [F_{y,i} = 0 | F_i = 1]) \Pr [F_i = 1] \leq (1-(1-x^n)(1-y)^m) \ev{}{F_{i}}.
    \end{align*} Summing over all products $i\in\cK$, we get the upper bound claimed: $$\ev{}{F_{y}} \leq (1-(1-x^n)(1-y)^m) \ev{}{F}.$$

    \xpar{Lower bound} The lower bound follows a similar argument by conditioning in the reverse direction and noting that 
    \begin{align*}
        \ev{} {F_i} & =  \Pr [F_i = 1] 
        \\ & = \Pr [F_i = 1 | F_{y,i} = 1] \Pr [F_{y,i} = 1] + \Pr [F_i = 1 | F_{y,i} = 0] \Pr [F_{y,i} = 0]  
        \\ & \le \Pr [F_{y,i} = 1] + (1-y^m) (1 - \Pr [F_{y,i} = 1]) 
        \\ & = \ev {}{F_{y,i}}  + (1-y^m) (1 - \ev{}{F_{y,i}}),
    \end{align*} 
    where we use the fact that $\Pr [F_i = 1 | F_{y,i} = 1] = 1$ and the bound $\Pr [F_i = 1 | F_{y,i} = 0] \leq 1-y^m$. The latter is an upper bound on the probability that a node fails in $\cG$ conditioned on its survival in $\cG_y$, which can only happen if at least one incoming edge incident to $i$ is removed in $\cG_y$ and the probability of the latter is upper bounded by $1-y^m$ when the sourcing requirement is $m$. Summing over the products $i\in\cK$ and rearranging both sides gives the claimed lower bound: $\ev {}{F}/y^m - K(1/y^m -1) \leq \ev{}{F_y}$.
    
    


\newpage

\section{Connections to existing supply chain resilience measures: The Risk Exposure Index} \label{app:rei}

Our metric has connections to important metrics already present in the network literature \citep{levi2016identifying,simchi2014superstorms,ham2022companies}, such as the Risk Exposure Index (REI). To define REI, suppose that a product in the network is disrupted by an infinitesimal shock, assuming the same responses from the other nodes. In our case, this corresponds to a change in the probability of shock of the product $i$ from $x$ to $x + \delta$ (for some sufficiently small $\delta$) and its impact on the size of cascading failures, which corresponds to the potential impact $\potentialimpact(i)$. Since it is difficult to quantify an exact formula for the change of $\ev {} {F}$, we will instead focus on the change in $\ev {} {F_y}$ given by \cref{eq:upper_bound_lp}. Specifically, when the partial derivatives from left and right exist, the potential impact for a node is defined as:

{\small
\begin{align} \label{eq:potential_impact}
    \potentialimpact_i (x; y) & = \lim_{\delta \to 0} \frac {p_\cG^*\left ((x^n, \dots, (x + \delta)^n, \dots, x^n)^T; y\right ) - p_\cG^* \left ( (x^n, \dots, x^n, \dots, x^n )^T; y\right )} {\delta} = (nx^{n - 1}) \sum_{j = 1}^{K} \frac {\partial \beta_j^*(u; y)} {\partial u_i} \bigg |_{u = \one x^n}
\end{align}}

Subsequently, the Risk Exposure Index of $\cG$ (REI, \cite{levi2016identifying}) is given as the worst possible magnitude of $\potentialimpact_i(x; y)$, i.e.,
\begin{align} \label{eq:rei}
    \rei_{\cG}(x; y) = \max_{i \in \cK} |\potentialimpact_i(x; y)| = (nx^{n - 1}) \max_{i \in [K]} \left | \sum_{j = 1}^{K} \frac {\partial \beta_j^*(u; y)} {\partial u_i} \bigg |_{u = \one x^n} \right |
\end{align}

In general though, the partial derivatives may not exist. However, we can always show that there is an efficient algorithm for calculating the potential impacts as well as the REI, given by the following result, which leverages duality and the KKT conditions: 

\begin{theorem} \label{theorem:rei_lp}
    If \cref{eq:potential_impact} is differentiable, the potential impact $\potentialimpact_i(x; y)$ can be computed as $\potentialimpact_i(x; y) = n x^{n - 1} \gamma_i^*(x^n \one; y)$ where $\gamma_i^*(x^n \one; y)$ can be found as the solution to the dual problem:

    \begin{align*}
     d_{\cG}^*(u; y) & = \min_{\gamma, \theta \ge \zero}  u^T \gamma + \one^T \theta \quad & \text{s.t.}  \quad  & (I - yA) \gamma  + \theta \ge \one.
     \end{align*}

    \noindent Subsequently, $\rei_\cG(x; y) = (n x^{n - 1}) \max_{i \in [K]} \gamma_i^*(x^n \one; y)$. 

    \noindent Moreover, if \cref{eq:potential_impact} is non-differentiable, we can show that $$\potentialimpact_i^+(x; y) = n x^{n - 1} \min_{(\gamma, \theta) \in \cD} \gamma_i, \quad \text{and} \quad \potentialimpact_i^-(x; y) = n x^{n - 1} \max_{(\gamma, \theta) \in \cD} \gamma_i$$ where $\cD = \arg \min_{\gamma, \theta \ge 0} \left \{ x^n \one^T \gamma + \one^T \theta  : (I - yA)\gamma + \theta \ge \one \right \}$. Subsequently,  $$\rei^+_\cG(x; y) = (n x^{n - 1}) \max_{i \in [K]} \min_{(\gamma, \theta) \in \cD} \gamma_i, \quad \text{and} \quad \rei^-_\cG(x; y) = (n x^{n - 1}) \max_{i \in [K]} \max_{(\gamma, \theta) \in \cD} \gamma_i$$
\end{theorem}

Subsequently, we can also show that when firms have sufficient structural decoupling ($y < \min \{ 1 / \mu, 1/ m \}$) and the shocks are sufficiently small, REI is proportional to the maximum Katz centrality of the nodes in $\cG^R$, because the dual program has a closed form solution: 

\begin{corollary}[\cref{proposition:rei_katz} of the main text] \label{collorary:rei_katz}
    If $y < \min \{ 1 / \mu, 1/ m \}$, and $x < (1 - \mu y)^{1/n}$,  then $\mathrm {REI}_\cG(x; y) = n x^{n - 1} \max_{i \in [K]} \gamma_{i, \cG}^\katz(y).$
\end{corollary}

\subsection{Proofs of results}

\subsubsection{Proof of \texorpdfstring{\cref{theorem:rei_lp}}{theoremreilp}: Potential impact of a product}

Let $e_i$ be the $i$-th baseis vector and let $\Delta_i(\delta) = \left [ (x + \delta)^n - x^n \right ] e_i$ and $f(\Delta) := p^*_\cG(x^n \one + \Delta; y)$, which characterizes the optimal value as a result of the perturbation of the $(I - yA^{T}) \beta - x^n \one \leq 0$ constraint to $(I - yA^{T})\beta - x^n \one \leq \Delta$. 

\xpar{Differentiable case} If $f$ is differentiable, which is equivalent to $d^*_G(x^n \one; y)$ admitting a unique, optimal $(\gamma^*, \theta^*)$ solution, then it follows that because $\gamma^*$ is an optimal Lagrange multiplier for the perturbation of the aforementioned constraint by $\Delta$, i.e., 

\begin{align*}
\potentialimpact_i(x, y) &= \lim_{\delta \to 0} \frac{p^*_\cG(x^n \one + \Delta_i(\delta); y) - p^*_\cG(x^n \one; y)}{\delta} \\
&= \lim_{\delta \to 0} \frac{f(\Delta_i(\delta)) - f(\Delta_i(0))}{\delta} \\
&= \left.\frac{df(\Delta_i(\delta))}{d\delta}\right|_{\delta=0} \\
&= D_\delta \Delta_i(\delta)|_{\delta=0} \cdot \nabla f(\Delta)|_{\Delta=\Delta_i(0)=0} \\
&= (0, \ldots, 0, n(x + \delta)^{n-1}|_{i\text{-th position}}, 0, \ldots, 0)|_{\delta=0} \cdot \gamma^*(x^n \one; y). \\
&= nx^{n-1} \cdot \gamma^*_i(x^n \one; y).
\end{align*}

\xpar{Non-differentiable case} If $f$ is not differentiable, the optimum may not be unique. Instead, we define the set $\cD = \arg \min_{\gamma, \theta \ge 0} \left \{ x^n \one^T \gamma + \one^T \theta  : (I - yA)\gamma + \theta \ge \one \right \}$. By Danskin’s theorem for pointwise minima, for any direction $h \in \Rbb^K$ we have that the directional derivative of $f$ is $f'(h) = \min_{(\gamma, \theta) \in \cD} h^T \gamma$. For the right derivative we have $h = +e_i$ and thus $f'(e_i) = \min_{(\gamma, \theta) \in \cD} \gamma_i$. Therefore:

\begin{align*}
\potentialimpact^+_i(x; y) = \lim_{\delta \downarrow 0} \frac{p^*_\cG(x^n \one + \Delta_i(\delta); y) - p^*_\cG(x^n \one; y)}{\delta} &= \lim_{\delta \downarrow 0} \frac{f([(x + \delta)^n - x^n] \cdot e_i) - f(0)}{\delta} \\
&= \lim_{\delta \downarrow 0} \frac{f([(x + \delta)^n - x^n] \cdot e_i) - f(0)}{(x + \delta)^n - x^n} \cdot \frac{(x + \delta)^n - x^n}{\delta} \\
&= \lim_{\delta \downarrow 0} \frac{(x + \delta)^n - x^n}{\delta} \cdot \lim_{t \downarrow 0} \frac{f(t \cdot e_i) - f(0)}{t} \\
&= nx^{n-1} f'(e_i) \\
& = nx^{n - 1} \min_{(\gamma, \theta) \in \cD} \gamma_i
\end{align*}

Similarly for the left derivative we have that:

\begin{align*}
    \potentialimpact^+_i(x; y) &  = \lim_{\delta \uparrow 0} \frac{f([(x + \delta)^n - x^n] \cdot e_i) - f(0)}{\delta} \\ & =  - n x^{n - 1} f'(-e_i) \\ & = - n x^{n - 1} \min_{(\gamma, \theta) \in \cD} (-\gamma_i) \\ & = n x^{n - 1} \max_{(\gamma, \theta) \in \cD} \gamma_i
\end{align*}

Subsequently,  $$\rei^+_\cG(x; y) = (n x^{n - 1}) \max_{i \in [K]} \min_{(\gamma, \theta) \in \cD} \gamma_i, \quad \text{and} \quad \rei^-_\cG(x; y) = (n x^{n - 1}) \max_{i \in [K]} \max_{(\gamma, \theta) \in \cD} \gamma_i$$

\subsubsection{Proof of \texorpdfstring{\cref{collorary:rei_katz}}{rei_katz}: Connection of the Potential impact of a product to Katz centrality}

We consider the primal LP and the dual LP: 

\begin{align*}
    \max_{\beta \ge \zero} \one^T \beta & \quad \text{s.t.} \quad \beta \le \one, \; \beta \le yA^T \beta + u \\
    \min_{\gamma, \theta \ge 0} u^T \gamma + \one^T \theta & \quad \text{s.t.} \quad (I - yA) \gamma + \theta \ge \one
\end{align*}

If $y < 1/\mu$ and $0 < x < (1 - \mu y)^{1/n}$ we know from \cref{theorem:general_ub} that the optimal primal solution is $\beta^* = (I - yA^T)^{-1} x^n$ and furthermore the derivative  of \cref{eq:potential_impact} exists. From strong duality, we have that for the optimal solution $(\gamma^*, \theta^*)$ of the dual, we have

\begin{align}
    \one^T \beta^* = x^n \one^T \gamma^* + \one^T \theta^* \label{eq:strong_dual} \\
    \gamma^* (I - yA) + \theta^* \ge \one \label{eq:strong_dual_feas} \\ 
    \gamma^*, \theta^* \ge \zero
\end{align}

Increasing $\theta^*$ in \cref{eq:strong_dual} would loosen \cref{eq:strong_dual_feas}, so at optimum we have $\theta^* = \zero$, which yields 

\begin{align}
     \one^T (I - yA^T)^{-1} \one = \one^T \gamma^* \label{eq:strong_dual_2}  \\
    \gamma^* (I - yA) \ge \one \nonumber
\end{align}

Further, due to the KKT conditions the constraint $ \gamma^* (I - yA) \ge \one$ is tight, that is $\gamma^* (I - yA) = \one$. If $y < 1/m$, the matrix $I - yA$ is invertible, which yields $\gamma^* = (I - yA)^{-1} \one$ which satisfies \cref{eq:strong_dual_2}. This yields the final constraints $y < \min \{ 1 / \mu, 1 / m \}$ and $0 < x < (1 - \mu y)^{1/n}$.

\newpage

\section{Generalizing Resilience to Heterogeneous and Correlated Failures} \label{app:generalized_resilience}


Thus far, for clarity and analytical tractability, we assumed that each product has $n$ suppliers and that these suppliers fail independently and identically (i.i.d.) with probability $x$. We now generalize the model to incorporate heterogeneities such as varying numbers of suppliers and correlated failure probabilities.

In this general model, each product $i$ has $n_i = |\cS(i)| = O(1)$ suppliers. We define the universe of suppliers as $\cU = \bigcup_{i \in \cK} \cS(i)$, with total size $N = |\cU|$. Each supplier $s \in \cU$ fails with probability $x_s = \Pr[X_s = 1]$.

Here, supplier failures are not assumed i.i.d., but instead follow a joint distribution $\nu$ with marginals $\{ x_s \}_{s \in \cU}$.

We define the resilience with respect to a joint distribution $\nu$ as the largest $x \in (0, 1)$ such that:
\emph{(i)} there are on average $\sum_{s \in \cU} x_{s} = x N$ failures, \emph{(ii)} at least $1 - \varepsilon$ of the products survive with high probability assuming that the set of failed suppliers ($F_\cS$) are distributed according to $\nu$ (i.e., $F_\cS \sim \nu$). The resilience of the graph is taken to be the resilience over the worst possible such joint distribution, i.e.: 
\begin{align} \label{eq:resilience_general-app}
    R_{\cG}(\varepsilon)  = \inf_{\nu} \;\; \sup_{x} \;\;\;\;  x & \quad\\ \text{s.t.} \;\;\;\;\;\;\;  & 0<x<1, \; \sum_{s \in \cU} x_{s} \le x N, \nonumber \\  &  \nu \text{ has marginals } \{ x_s \}_{s \in \cU}, & \nonumber \\ &  \text{and} \; \Pr_{F_\cS \sim \nu} \left [ F \ge \varepsilon K \right ] \le \frac {1} {K}. \nonumber 
\end{align}

If the marginals are deterministic (i.e., $x_s \in \{0, 1\}$), then $R_{\cG}(\varepsilon)$ is NP-Hard to compute and hard to approximate as well (see \Cref{app:generalized_resilience}).
Even when the marginals are fractional, numerically evaluating resilience is computationally intractable, as one has to search for all possible joint distributions and calculate the resilience for each joint distribution. Therefore, we are interested in efficiently computable upper and lower bounds. 

\subsection{The Bahadur Representation} \label{sec:general_resilience_lower_bound}

First note that the definition of resilience according to \cref{eq:resilience_general-app} considers the worst possible joint distribution; therefore, the generalized resilience value is upper-bounded by resilience in the case of i.i.d. failures, and the upper bound of \cref{theorem:resilience_graph_statistics} holds for the general definition of resilience. To devise a lower bound on resilience, one might consider maximizing $\ev {} {F}$ and using the Markov inequality to bound the probability of the event $\{ F \ge \varepsilon K \}$. However, such an optimization program has $O(K)$ variables and $O(2^N)$ constraints, making it impractical to solve.

In the sequel, we focus on low-dimensional approximations of the joint failure distribution $\nu$.

Our approach is based on the Bahadur decomposition~\citep{bahadur1961representation,yuan2021community}, which expresses the joint distribution $\nu$ as a product of marginals and multi-way correlation terms. 
According to the Bahadur representation, the probability that a subset $T \subseteq \cU$ of suppliers fails is: 

{\small
\begin{align} \label{eq:bahadur_expansion}
    \nu(T) & = \prod_{s \in T} x_s \prod_{s \notin T} (1 - x_s) \left \{ \sum_{j = 1}^N \sum_{\{ s_1, \dots, s_j \} \in \binom {\cU} j} \rho_{s_1, \dots, s_j} (T) \prod_{j' = 1}^j \hat x_{s_j'}(T) \right \},
\end{align}}where $\hat x_{s_j} = \frac {\one \{ s_j \in T \} - x_{s_j}} {\sqrt {x_{s_j} (1 - x_{s_j})}}$, $\rho_{s_1, \dots, s_j}(T)$ denotes the $j$-th order correlation between $s_1, \dots, s_j$. 

From this expression, we can derive the failure probability for each product $i \in \cK$ as: $u_i = \sum_{T \subseteq \cU \setminus \cS(i)} \nu(\cS(i) \cup T)$. 

To obtain a tractable low-dimensional representation of $\nu$, we introduce structured assumptions that preserve interpretability and enable the computation of resilience. 
In the following, we give some simplified models that account for correlations between suppliers within each product, across products, and across all suppliers.

\xpar{Homogeneous failure probabilities and within-product correlations} The first simplification of \cref{eq:bahadur_expansion} assumes homogeneous failure probabilities and allows correlated failures among suppliers of the same product (but not between products), requiring only $O(N)$ parameters. 
This model accounts for failures within a product or a sector in the economy; for instance, failure of a supplier of a scarce raw material may increase the demand on alternative suppliers, elevating their risk of failure.

Throughout Appendix EC.8, the Bahadur representation is applied to the joint distribution of supplier-level failure indicators, not to the product-level indicators $Z_i$. Suppliers have fixed marginal failure probabilities $x_s$, and while correlations are introduced through the Bahadur coefficients of the supplier-level joint distribution. Product-level failure probabilities $u_i = \Pr [Z_i = 0]$ are therefore derived quantities, obtained by aggregating over correlated supplier failures, and are not fixed a priori.

Assuming $x_s = x$ for all $s \in \cU$ and that the $j$-th order correlation between suppliers of product $i$ is $\rho_{ij} \in [0, 1]$, we obtain the following expression for its failure probability:
\begin{align} \label{eq:correlated_suppliers}
   u_i = x^{n_i} + \sum_{j = 2}^{n_i} \binom {n_i} {j} \rho_{ij} (1 - x)^{j/2} x^{n_i - j/2}.
\end{align}
As expected, increasing $\rho_{ij}$ raises the failure probability $u_i$, thereby reducing resilience. When $\rho_{ij} = 0$, we recover the i.i.d. failure model with $u_i = x^{n_i}$.

When $\rho_{ij} = 1$, each product effectively behaves as if it has a single supplier, thereby network classes that are indexed by their number of products ($K$) maintain their asymptotic resilience classification (see \Cref{def:resilience-taxonomy}). This is no longer the case when correlations are introduced between products; then all networks are, in general, fragile.  

\xpar{Homogeneous failure probabilities with correlations between products but not within products} The second simplification considers the case of homogeneous failures and correlated products, but uncorrelated suppliers within a product, which requires $O(K)$ parameters. This case captures scenarios where natural disasters impact multiple sectors simultaneously, even when they lack direct supply-chain dependencies.

Thus, if we assume that all suppliers $s \in \cU$ fail with probability $x_s = x$ independently within each product, but products are correlated with each other, with the $j$-th order correlation coefficient being $\rho_j \in [0, 1]$, the product-level failure probability $u_i$ induced by correlated supplier failures, is given by:

\begin{align} \label{eq:correlated_products}
    u_i = \sum_{b = 0}^{K - 1} \binom {K - 1} {b} \left ( x^{n_i} \right )^{b + 1} \left ( 1 - x^{n_i} \right )^{K - 1 - b} \left \{ 1 + \sum_{j = 2}^K \sum_{r = 0}^{\min \{j, b + 1 \}} \binom j r \binom {K - j} {j - r} \rho_j \left (1 - x^{n_i} \right )^{r - j/2} \left  ( x^{n_i} \right )^{j/2 - r}  \right \}.
\end{align}

When $\rho_j = 1$, the failure of a single product is enough to trigger $F \ge \varepsilon K$ failed products for $\varepsilon \in (0, 1 - 1 / K)$. By the union bound, this happens with probability at most $\sum_{i = 1}^K x^{n_i} \le Kx$, and setting the probability of this event to be at most $1 / K$, yields a lower bound for the resilience which is $\Omegaeps{1/K}$. Regarding the upper bound, the probability that all products fail is at most $x$, and therefore the resilience is $\Oeps {1/K}$ in that case for any $\varepsilon$, thus making the network fragile.  Therefore the resilience is $\Theta_{\varepsilon} (1/K)$.

\xpar{Homogeneous supplier failure probabilities with joint within and between product correlations} 
The final simplification considers the case of homogeneous failures and correlations among all suppliers, which requires $O(N)$ parameters. This case models the scenario where a firm produces more than one product; for example, an automaker may produce both engines and chips in the same facility, so a localized event (e.g., a fire) affects multiple products simultaneously.

Extending \Cref{eq:correlated_products}, the marginal failure probability for product $i$ under full joint correlations, with the $j$-th order correlation coefficient given by $\rho_j \in [0, 1]$, becomes:
\begin{align} \label{eq:joint_correlations}
        u_i = \sum_{b = 0}^{N - n_i} \binom {N - n_i} {b} x^{b + n_i} \left ( 1 - x \right )^{N - n_i - b} \left \{ 1 + \sum_{j = 2}^N \sum_{r = 0}^{\min \{j, b + n_i \}} \binom j r \binom {N - j} {j - r} \rho_j \left (1 - x \right )^{r - j/2}  x^{j/2 - r}  \right \}.
\end{align}

Similarly to the case of \cref{eq:correlated_products}, when $\rho_j = 1$, the failure of a single product is enough to trigger $F \ge \varepsilon K$ failed products for $\varepsilon \in (0, 1 - 1 / K)$. By the union bound, this happens with probability at most $\sum_{i = 1}^K x^{n_i} \le Nx \le Kx \max_{i \in [K]} n_i $, and setting the probability of this event to be at most $1 / K$, yields a lower bound for the resilience, which is $\Omegaeps{1/(K \max_{i \in K} n_i)}$. Similarly to the above, the probability that all products fail is at most $x$, and therefore the resilience is $\Oeps {1/K}$ in that case for any $\varepsilon$, thus the network is fragile in that case. If $n_i = O(1)$, we again obtain the rate $\Theta_{\varepsilon} (1/K)$ for the resilience. 

\xpar{Bounds based on the Bahadur representation} To bound the resilience assuming simplified marginals $\{ u_i \}_{i \in [K]}$ (e.g. \cref{eq:joint_correlations,eq:correlated_suppliers,eq:correlated_products}), we can directly apply the results of \cref{theorem:resilience_graph_statistics} and \cref{theorem:lp_duality_resilience}, by inverting the corresponding marginals (see \cref{fig:inverse_marginal}):

\begin{proposition}[Bounding the resilience with correlations] \label{prop:correlation}
    Let $\cG$ be a network in which suppliers fail with a correlation vector $\rho \in [0, 1]^J$, where $J$ is the highest index for which the correlation is non-zero, and $n_i = n$, yielding marginal failure probabilities $u_i = u(x; \rho)$ that are monotonically increasing in $x$. If $y < 1/m$, then resilience satisfies: $$ u^{-1} \left ( \frac {\varepsilon} {\one^T \beta_{\cG}^{\katz}(y)} ; \rho \right ) \leq R_{\cG}(\varepsilon; y, \rho) \le u^{-1} \left ( \frac {(1 - \varepsilon) K} {2r^{3/2} + \sqrt {r \log K}} ; \rho \right ).$$ 
\end{proposition}

\begin{figure}
    \centering
    \includegraphics[width=\linewidth]{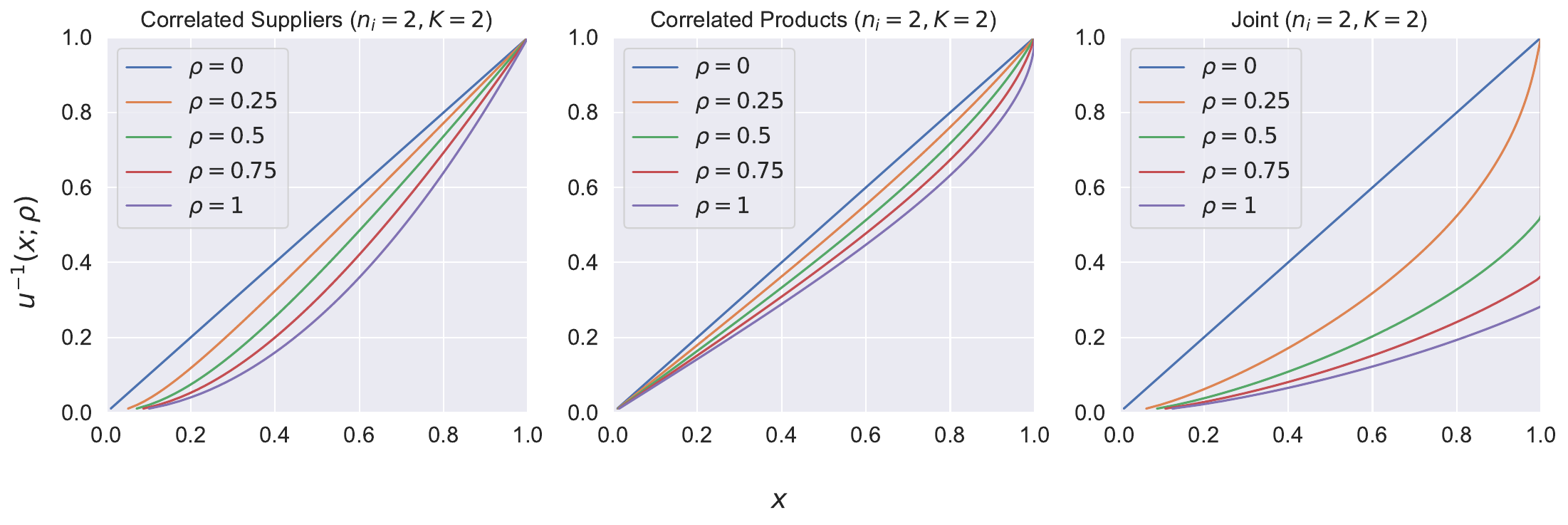}
    \caption{Value of inverted marginal $u^{-1}(x; \rho)$ -- which corresponds to a lower bound to the resilience metric -- for the cases studied in \cref{eq:correlated_products,eq:correlated_suppliers,eq:joint_correlations}. All correlations have been set equal to $\rho \in \{ 0.25, 0.5, 0.75, 1 \}$. The network is assumed to have $K = 2$ products and each product has $n_i = n = 2$ suppliers.}
    \label{fig:inverse_marginal}
\end{figure}

\subsection{Hardness with deterministic marginals} 

If supplier failures are deterministic, the hardness follows from the following decision problem, which is equivalent to calculating the resilience.

\begin{definition}[\textsc{Resilience-Deterministic}] 
    Given a production network $\cG$, an average number of supplier failures $x$, and a non-negative integer $f$, does there exist a deterministic joint distribution $\nu$ of supplier failures such that the number of failures is $f$?
\end{definition}

\begin{theorem} \label{theorem:resilience_deterministic_hardness}
    \textsc{Resilience-Deterministic} is NP-hard. 
\end{theorem}


\xpar{Proof} The proof relies on a reduction from the \textsc{3-Set-Cover} problem, where the input consists of $c$ elements $V = \{ v_1, \dots, v_{c} \}$ and $r$ sets $S = \{ s_1, \dots, s_{r} \}$ where each set has cardinality 3, and a number $B$. The question is whether there is a collection of $B$ subsets that cover all the elements. 

To construct the reduction, we consider a bipartite production network with raw products $\cR$, labeled by $\{ s_1, \dots, s_{r} \}$, and finished goods $\cC$, labeled by $\{ v_1, \dots, v_{c}\}$, with $K = c + r$ products, where the left partition (raw products $\cR$) corresponds to the sets in \textsc{3-Set-Cover} and the right partition (consumer goods $\cC$) corresponds to the elements in \textsc{3-Set-Cover}. Each product/node has a single supplier ($n_i = 1$). Each raw product $s_i \in \cR$ is used to make three complex products such that for each complex product $v_j \in \cC$ we have $v_j \in s_i$ in the \textsc{3-Set-Cover} instance. We set $f = B + c$ and $x = B / K$. The reduction runs in polynomial time, creating a graph with $O(K)$ nodes and $O(K)$ edges. 

\noindent ($\implies$) Assume that there is a set cover $\mathcal J \subset S$ of size $B=|\mathcal J|$. Then we choose the raw products $J \subseteq \cR$ corresponding to $\mathcal J$ and the set $x_i = 1$ for the unique supplier of the product $i$ (so that it fails deterministically with probability $1$). $B$ of the raw products and all $c$ consumer goods fail. Therefore, the number of failures is now $B + c$.

\noindent ($\impliedby$) Take any supplier failure assignment with at least $B + c$ supplier failures. Then, if there exists a scenario with $B+c$ failed products, then it should necessarily include all $c$ consumer goods and $B$ of the raw products, whose corresponding subsets in $S$ constitute a solution to the \textsc{3-Set-Cover} problem.
\qed

\subsection{Distribution constraints}

For brevity, we let $N = \sum_{i \in \cK} n_i$. The definition of \cref{eq:resilience_general-app} requires defining a distribution $\nu$ over the union of the suppliers. We let $\cU = \bigcup_{i \in \cK} \cS(i)$ be the universe of suppliers. We let $\nu: 2^{\cU} \to [0, 1]$ be the distribution, where $\nu(U)$ corresponds to the probability that at subset $U \subseteq \cU$ of the suppliers is active. We require the coupling to be non-negative and normalized: 

\begin{align} \label{eq:coupling_nonnegative_normalized}
    \nu \left ( U \right ) & \ge 0, \qquad \forall U \in 2^{\cU} \\
    \sum_{U \subseteq \cU} \nu(U) & = 1,
\end{align}

and respect the corresponding marginals, i.e.

\begin{align} \label{eq:coupling_marginals}
    \sum_{T \subseteq \cU : s \notin T} \nu(T \cup \{ s \}) \le x_s.
\end{align}

Finally, we impose the budget constraint, i.e. 

\begin{align} \label{eq:budget_constraint}
    0 & \le x_{s} \le 1 \qquad \forall s \in \cU \\
    \sum_{s \in \cU} x_{s} & \le x N.
\end{align}

In matrix notation, if $\bar {x}$ is the vector of marginals,

\begin{align} \label{eq:coupling_constraint}
    \nu & \ge \zero, \bar  x \ge \zero \\
    \bar x & - \one \le \zero, \Phi \nu - \bar x \le \zero, \one^T \bar x  \le x N, \one^T \nu \le 1
\end{align}

where $\Phi$ is a matrix such that $\phi_{T, s} = 1$ if and only if $s \notin T$. The number of variables needed to define $\nu$ is $O \left ( 2^{N} \right )$.

\subsection{LP formulation}

\xpar{Upper bound on $\ev {} {F}$} We extend \cref{eq:upper_bound_lp} to account for $\nu$. Specifically, we are interested in the upper bound on the number of failures for the worst possible joint distribution  $\nu$. The primal problem corresponding to this upper bound is as follows:

\begin{align} \label{eq:correlated_failures_primal}
    p^* = \max_{\beta, q, \nu \ge \zero} \quad & \one^T \beta \\
    \text{s.t.} \quad & \beta \le 1, (I - yA^T) \beta - \Psi \nu \le \zero  \\
    & \text{\cref{eq:coupling_constraint}}.
\end{align}

We define $\Psi$ as the $K \times 2^N$ matrix with elements $\psi_{i, T} = 1$ iff $T \subseteq \cU \setminus \cS(i)$ and 0 otherwise. Taking the dual yields, 

\begin{align} \label{eq:correlated_failures_dual}
    d^* = \min_{\gamma, \theta, \zeta, \kappa, \eta, \xi \ge \zero} & \one^T \theta + \one^T \zeta + \eta + \xi x N \\
    \text{s.t.} \quad & (I - yA) \gamma + \theta \ge \one 
    \nonumber \\
    & \zeta - \kappa + \one \xi \ge \zero \nonumber \\
    & - \Psi^T \gamma + \Phi^T \kappa + \one \eta \ge \zero \nonumber.
\end{align}

\xpar{Lower bound on $R_{\cG}(\varepsilon)$} Similarly to \cref{theorem:lp_duality_resilience}, if we take a feasible dual solution $(\tilde \gamma, \tilde \theta, \tilde \zeta, \tilde \kappa, \tilde \eta, \tilde \xi)$ to \cref{eq:correlated_failures_dual} with objective value $\tilde d$, we can show from weak duality that it suffices to set $\tilde d \le 1 - \varepsilon$ to get a lower bound on $R_{\cG}(\varepsilon)$. This implies that 

\begin{align}
    R_{\cG}(\varepsilon) \ge \frac {\varepsilon - \one^T \tilde \theta - \one^T \tilde \zeta - \tilde \eta} {\tilde \xi N}.
\end{align}

Therefore, a lower bound can be devised by taking the maximum possible value of the RHS, i.e.

\begin{align} \label{eq:lower_bound_resilience_correlated_full}
    \underline R_{\cG}(\varepsilon)  = \max_{\gamma, \theta, \zeta, \kappa, \eta, \xi}  \quad & \frac {\varepsilon - \one^T \theta - \one^T \zeta - \eta} {\xi N} \\
    \text{s.t.} \quad & \text{\cref{eq:correlated_failures_dual} holds}. \nonumber
\end{align}

It is easy to show that in optimality $\eta = 0$, $\theta = \zero$ and $\zeta = \zero$, we therefore have the following. 

\begin{align} \label{eq:lower_bound_resilience_correlated_simplified}
    \underline R_{\cG}(\varepsilon)  = \max_{\gamma, \kappa, \xi}  \quad & \frac {\varepsilon} {\xi N} \\
    \text{s.t.} \quad & (I - yA) \gamma \ge \one \nonumber,\; - \kappa + \one \xi \ge \zero, \;- \Psi^T \gamma + \Phi^T \kappa \ge \zero. \nonumber
\end{align}


The results yield the following theorem: 

\begin{theorem} \label{theorem:general_resilience}
    If resilience is defined as in \cref{eq:resilience_general-app}, then a lower bound on resilience can be found by solving the following optimization problem with $O(K)$ variables and $O(2^N)$ constraints: 

    \begin{align*}
    \underline R_{\cG}(\varepsilon; y)  = \max_{\gamma, \kappa, \xi \ge \zero}  \quad & \frac {\varepsilon} {\xi N} \\
    \text{s.t.} \quad & (I - yA) \gamma \ge \one \nonumber,\; - \kappa + \one \xi \ge \zero, \;- \Psi^T \gamma + \Phi^T \kappa \ge \zero, \nonumber
    \end{align*} where $\Psi, \Phi$ are matrices given in \cref{app:generalized_resilience}. 

\end{theorem}

\subsection{Proof of \texorpdfstring{\cref{prop:correlation}}{prop:correlation}}\label{app:prop:correlation}

Since $n_i = n$ and the correlations are shared and given by the correlation vector $\rho$, we can write the probability that product $i$ fails as $u_i = u(x; \rho)$ for all $i \in [K]$. The function $u(x; \rho)$ is continuous and strictly increasing with $x$, since a higher supplier failure probability yields a larger product failure probability. Therefore, the inverse function $u^{-1}( \cdot ; \rho)$ exists and has the same type of monotonicity as $u (\cdot ; \rho)$ in $x$. 

\xpar{Lower bound} Following the proof of \cref{theorem:lp_duality_resilience} (cf. \cref{app:proof:theorem:lp_duality_resilience}), we have that for the lower bound $\underline R_{\cG}(\varepsilon)$ in the resilience it holds that 

    \begin{align}
    u \left ( \underline R_{\cG}(\varepsilon); \rho \right ) = \frac {\varepsilon} {\min_{\gamma \ge \zero, \gamma \neq \zero} \one^T  \gamma}, \qquad \text{s.t.} \qquad (I - yA)\gamma \ge \one,    
    \end{align}

    analogously to \cref{eq:lower_bound_resilience_dual}. By inverting $u$ we get that 

    \begin{align}
    \underline R_{\cG}(\varepsilon) = u^{-1} \left ( \frac {\varepsilon} {\min_{\gamma \ge \zero, \gamma \neq \zero} \one^T  \gamma}; \rho \right ), \qquad \text{s.t.} \qquad (I - yA)\gamma \ge \one.     
    \end{align} 
    
    We take the dual of the denominator, which corresponds to $\max_{\beta \ge \zero} \one^T \beta$ subject to $\beta \le y A^T \beta + \one$. If $y < 1/m$, from the main result of \cite{eisenberg2001systemic} (or the KKT conditions) we get that the optimal solution is $\beta_{\cG}^\katz (y)$. Similarly, for its dual, the optimal solution is $\gamma_{\cG}^\katz (y)$.

    \xpar{Upper bound} Similarly to the lower bound, we follow the proof of \cref{theorem:resilience_graph_statistics} (cf. \cref{app:theorem:resilience_graph_statistics}). Specifically, let $F_{\cR}$ correspond to the number of failures of the raw products. Let $x_{\cR} = \sup \{ x \in (0, 1) : \Pr [F_{\cR} \le (1 - \varepsilon) K] \ge 1 - 1 / K \}$. $x_{\cR}$ is an upper bound to $R_{\cG}(\varepsilon)$ since the event $\{ F \le (1 - \varepsilon) K \}$ implies $\{ F_{\cR} \le (1 - \varepsilon) K \}$. Moreover, by the Chernoff bound, we have that for any $\varepsilon' > 0$:

\begin{align*}
    \Pr \left [ \frac {F_{\cR}} {r} \le (1 + \varepsilon') u(x; \rho) \right ] \ge 1 - e^{-2 r (\varepsilon')^2}.
\end{align*}

Letting $(1 + \varepsilon') u(x ; \rho) = (1 - \varepsilon) K / r$, $e^{-2 r (\varepsilon')^2} = 1/K$ and solving for $x$ would produce an upper bound to $x_{\cR}$ and subsequently an upper bound to $R_{\cG}(\varepsilon)$. Solving the system for $\varepsilon'$ and $u$ similarly to the proof of \cref{theorem:resilience_graph_statistics} and then using the fact that $u(\cdot ; \rho)$ is invertible yields the following result. 

\begin{align*}
    \varepsilon' = \sqrt {\frac {\log(K)} {2 r}}  \qquad & \text{and} \qquad u(x; \rho) =  \frac {(1 - \varepsilon) K} {\sqrt {2} r^{3/2} + \sqrt {r \log K}} \\
    & \implies  x = u^{-1} \left ( \frac {(1 - \varepsilon) K} {\sqrt {2} r^{3/2} + \sqrt {r \log K}} ; \rho \right ).
\end{align*}

To determine conditions where the resilience goes to zero as $K \to \infty$ we focus on the denominator of the upper bound: First, the term $\sqrt {r \log K}$ is at most $\sqrt {K \log K} < K$ and cannot grow faster than $K$. Second, the term $\sqrt 2 r^{3/2}$ grows faster than $K$ as long as $r = \omega \left ( K^{3/2} \right )$.

\newpage

\section{Experiments with World Economy Input-Output Networks} \label{app:experiments_addendum}

We use country-level input-output tables from the World Input-Output Database (WIOD) to construct directed production networks for a set of national economies. In each country network, nodes represent sectors, and a directed edge indicates an input-output dependency between two sectors, so that the graph captures how upstream production disruptions can propagate through the domestic economy. The resulting networks are relatively dense, with 55 and 56 sectors in all the countries considered. For each country, we compute basic network statistics, including average degree, density, and minimum/maximum in- and out-degrees (\cref{tab:statistics-world}), and compare their resilience profiles using the same resilience metric developed in the main text. For each country network, we estimate the resilience curve $\hat R_G(\varepsilon)$ using the same Monte Carlo procedure as in \cref{sec:numerical-estimation-resilience}. We set the number of suppliers per product to $n=1$ and summarize each country's resilience by the area under the estimated resilience curve (AUC); see \cref{fig:wolrd-io}.

\begin{table}[H]
    \footnotesize
    \centering
    \begin{tabular}{lllllll}
    \toprule
         Country  & Size ($K$) & Avg. Degree & Density & Min/Max In-degree & Min/Max Out-degree  & AUC  \\
    \midrule 
         USA & 55 & 54.00 & 1.000 & 54 & 54 & 0.052 \\
         Japan & 56 & 45.33 & 0.824 & 0 -- 50 & 0 -- 50  & 0.058 \\
         G. Britain & 56 & 52.05 & 0.946 & 0 -- 54 & 0 -- 54  & 0.052  \\
         China & 56 & 37.89 & 0.688 & 0 -- 46 & 0 -- 46  & 0.078 \\
         Indonesia & 56 & 35.75 & 0.65 & 0 -- 46 & 0 -- 46  & 0.078 \\
         India & 56 & 29.57 & 0.537 & 0 -- 41 & 0 -- 43 & 0.095 \\
    \bottomrule
    \end{tabular}
    \caption{Network Statistics and resilience AUC for the world economies. The world economy input-output networks are directed and their edge density is computed as $\frac {|\cE(\cG)|} {K^2 - K}$.}
    \label{tab:statistics-world}
\end{table}
\vspace{-30pt}
\begin{figure}[H]
    \centering
    \subfigure[Estimating $R_{\cG}(\varepsilon)$\label{subfig:world_io_tables_resilience}]{\includegraphics[width=0.48\textwidth]{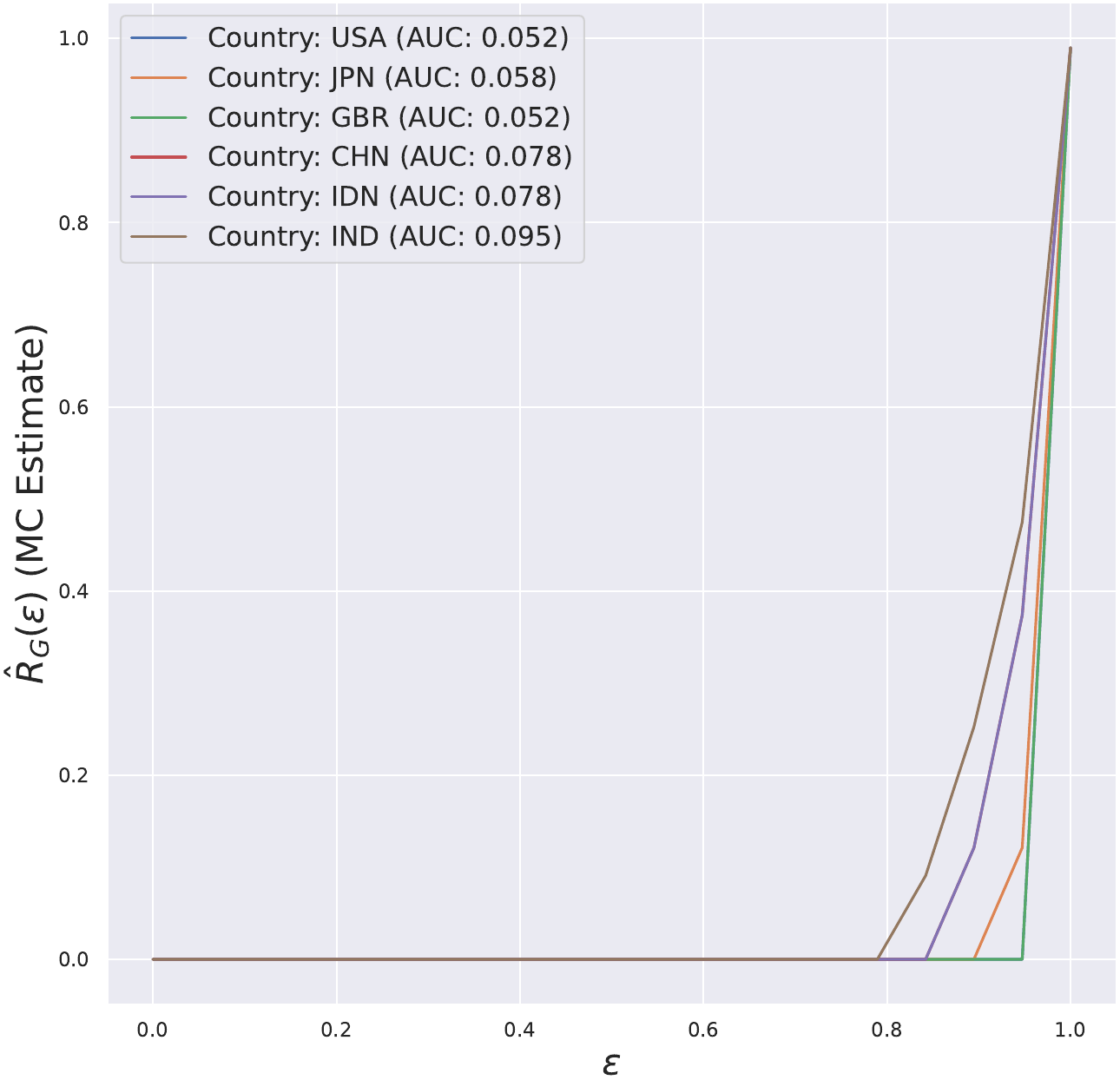}}
    \caption{World Economy Input-Output Networks. We set the number of suppliers for each product to $n = 1$.}
    \label{fig:wolrd-io}
\end{figure}

\end{document}